%% file: main.tex
\documentclass[11pt]{article}
\usepackage{amsmath, amssymb, amsthm} 

\usepackage{amsmath,graphicx,tikz}
\usetikzlibrary{arrows.meta,decorations.pathreplacing}

\input{macros}

\begin{document}

\title{Lower Bounds for Parallel Diffusion Sampling}
\author{Yiwen Kou\thanks{Email: evankou@ucla.edu, supported by NSF EnCORE: Institute for Emerging CORE Methods in Data Science Award \#2217033 and NSF AF: Small Award \#2425350} \\ UCLA \and  Yimeng Wang\thanks{Email: kobewang1359@ucla.edu, supported by NSF EnCORE: Institute for Emerging CORE Methods in Data Science Award \#2217033 and NSF AF: Small Award \#2425350 } \\ UCLA}

\maketitle

\date{\today}

\begin{abstract}
\normalsize
   Standard diffusion samplers generate samples through sequential evaluations of a learned score function, with each query depending on previous answers. Parallel samplers seek to accelerate this process by making multiple score queries simultaneously. We ask how many adaptive rounds are necessary when each round allows polynomially many queries.

   We establish the first polynomial lower bounds on the round complexity of diffusion sampling with approximate scores satisfying a mean-square accuracy guarantee. Specifically, we prove (1) a $\widetilde{\Omega}(d^{1/3})$-round lower bound for sampling smooth, near-isotropic Gaussian mixtures in $\rr^d$, and (2) a tight $\Omega(d)$-round lower bound for uniform sampling from anisotropic axis-aligned boxes contained in the unit ball. Both bounds hold for arbitrary randomized algorithms making polynomially many queries per round at arbitrary locations and noise levels, with inverse-polynomial score error and constant total variation accuracy. 
   Our near-isotropic construction follows the blockwise approach of earlier lower bounds for parallel matroid search and discrete sampling \cite{KUW85,AGR24}. We use independent coordinate blocks containing random codebooks with well-separated rates and construct approximate score oracles whose responses in each block depend on consistency tests in earlier blocks. Together, the rate separation and these tests ensure that each round reveals information about at most one additional block with high probability.
\end{abstract}
\newpage


\section{Introduction}\label{sec:introduction}
\input{introduction}

\section{Preliminaries}\label{sec:preliminaries}
\input{preliminaries}

\ignore{
\section{Proof overview}\label{sec:proof_overview}
\input{proof_overview}

}

\section{Parallel lower bounds for near-isotropic targets}\label{sec:near_isotropic_LB}
\input{near_isotropic_LB}

\section{Linear parallel-round lower bounds for anisotropic boxes}\label{sec:convex_anisotropic_LB}

\input{convex_anisotropic_LB}

\section{Conclusion and future directions}
\input{conclusion}

\input{appendix_simplified}

\paragraph{Acknowledgments.}
We thank Raghu Meka for helpful discussions throughout the project. We also thank Sitan Chen and Liye Wang for valuable discussions and insights.

\paragraph{AI methodology.} The authors have engaged in ongoing discussions with ChatGPT since
December 2025. The $\widetilde{\Omega}(d^{1/3})$ construction was
proposed by the authors, and ChatGPT assisted with formalizing and
completing the proof. The box construction yielding the $\Omega(d)$
lower bound arose during discussions of sampling from convex sets.
ChatGPT also assisted with exploratory discussions of possible reductions from optimization to sampling.
GPT-6 further assisted with revising the exposition.
The authors take full responsibility for the results and presentation
of this work.

\bibliographystyle{alpha}
\bibliography{refs}

\end{document}

%% file: macros.tex
\usepackage[T1]{fontenc}
\usepackage[margin=1in]{geometry}

\usepackage{latexsym}
\usepackage[small,bf]{caption2}
\usepackage{graphics}
\usepackage{epsfig}
\usepackage{amsopn}
\usepackage{url}
\usepackage{xcolor}
\usepackage{hyperref}
\hypersetup{hidelinks}

\usepackage{algorithm}
\usepackage{algpseudocode}
\usepackage{graphics}
\usepackage{comment}

\usepackage{tabularx}
\usepackage{booktabs}

\usepackage[shortlabels]{enumitem}

\usepackage{multirow}

\usepackage{titling}

\usepackage{amsmath}
\usepackage{amssymb}
\usepackage{amsthm}
\usepackage{hyperref}   
\usepackage[nameinlink,noabbrev]{cleveref}
\usepackage{aliascnt}

\usepackage{mathtools}
\usepackage{enumitem}
\usepackage{bbm}
\usepackage{turnstile}

\newtheorem{theorem}{Theorem}[section]

\newaliascnt{lemma}{theorem}
\newtheorem{lemma}[lemma]{Lemma}
\aliascntresetthe{lemma}

\newaliascnt{proposition}{theorem}

\aliascntresetthe{proposition}

\newaliascnt{fact}{theorem}

\aliascntresetthe{fact}

\newaliascnt{claim}{theorem}

\aliascntresetthe{claim}

\newaliascnt{definition}{theorem}

\aliascntresetthe{definition}

\newaliascnt{example}{theorem}

\aliascntresetthe{example}

\newaliascnt{corollary}{theorem}

\aliascntresetthe{corollary}

\newaliascnt{assumption}{theorem}

\aliascntresetthe{assumption}

\newaliascnt{observation}{theorem}

\aliascntresetthe{observation}

\newaliascnt{conjecture}{theorem}

\aliascntresetthe{conjecture}

\newaliascnt{question}{theorem}

\aliascntresetthe{question}

\newaliascnt{problem}{theorem}

\aliascntresetthe{problem}

\AddToHook{cmd/appendix/before}{\crefalias{section}{appendix}}

\crefname{theorem}{theorem}{theorems}
\Crefname{theorem}{Theorem}{Theorems}
\crefname{lemma}{lemma}{lemmas}
\Crefname{lemma}{Lemma}{Lemmas}
\crefname{proposition}{proposition}{propositions}
\Crefname{proposition}{Proposition}{Propositions}
\crefname{fact}{fact}{facts}
\Crefname{fact}{Fact}{Facts}
\crefname{claim}{claim}{claims}
\Crefname{claim}{Claim}{Claims}
\crefname{definition}{definition}{definition}
\Crefname{definition}{Definition}{definitions}
\crefname{example}{example}{examples}
\Crefname{example}{Example}{Examples}
\crefname{corollary}{corollary}{corollaries}
\Crefname{corollary}{Corollary}{Corollaries}
\crefname{assumption}{assumption}{assumptions}
\Crefname{assumption}{Assumption}{Assumptions}
\crefname{observation}{observation}{observations}
\Crefname{observation}{Observation}{Observations}
\crefname{conjecture}{conjecture}{conjectures}
\Crefname{conjecture}{Conjecture}{Conjectures}
\crefname{question}{question}{questions}
\Crefname{question}{Question}{Questions}
\crefname{problem}{problem}{problems}
\Crefname{problem}{Problem}{Problems}

\renewcommand{\epsilon}{\varepsilon}

\newcommand{\rr}{\mathbb{R}}

\newcommand{\poly}{\mathsf{poly}}

\renewcommand{\Pr}{\mathop{\bf Pr\/}}

\newcommand{\Unif}{\operatorname{Unif}}

\newcommand{\E}{\mathbb{E}}

\newcommand{\TV}{d_{\mathrm{TV}}}

\newcommand{\calN}{\mathcal{N}}

\newcommand{\bfS}{\mathbf{S}}

\newcommand{\ignore}[1]{{}}

%% file: introduction.tex
Diffusion models generate samples by progressively removing noise through repeated evaluations of a learned score function. In standard samplers such as DDPM \cite{HoEtAl20} and DDIM \cite{SongMengErmon21}, each step uses the output of the preceding step to determine the next query. To overcome this sequential bottleneck, recent parallel diffusion samplers based on Picard iteration, randomized midpoints, and speculative execution evaluate multiple denoising proposals simultaneously across future noise levels \cite{ShihEtAl23, TangEtAl24, ChenEtAl24, HuEtAl25, AnariEtAl26}. By trading parallel function evaluations for fewer sequential rounds of interaction, these methods raise a fundamental question:

\begin{center}
    \emph{How many sequential rounds of interaction with a learned score model are necessary, even when each round allows polynomially many evaluations?}
\end{center}

To address this question, we must account for how score functions are obtained. Because the target distribution $\pi$ is typically known only through finite training data, its exact
scores at different noise levels are unavailable. Instead, diffusion models learn score estimates
from noise-corrupted samples via a mean-square denoising score-matching objective \cite{SongErmon19, HoEtAl20}. Approximate scores satisfying this distributional $L^2$ guarantee alone support efficient sequential
sampling: for general targets with finite second moment, $\widetilde{O}(d)$ sequential score queries suffice to
sample suitably smoothed targets (or the target itself under mild regularity) \cite{ChenEtAl2023Sampling, BentonEtAl24,ChenEtAl2026HighAccuracy, LeeLuTan2023Convergence, LiYan2025Convergence}. 

In the parallel setting, however, existing guarantees rely on additional quantitative regularity assumptions \cite{ChenEtAl24, GCC25, ZS25} or stronger oracle guarantees, such as exact denoisers or uniform pointwise error control \cite{HuEtAl25, AnariEtAl26}. The closest diffusion lower bound, due to \cite{XP26}, shows that $\widetilde{\Omega}(\sqrt{d})$ total score queries are necessary under $L^2$ accuracy.  
Their construction uses Gaussian mixtures with a hidden number $n$ of centers. For any fixed query, the approximate oracle returns an uninformative response with high probability unless the codebook rate $\kappa=(\log n)/d$ lies in an interval of width $\widetilde O(1/\sqrt{d})$ determined by the query’s noise level. Hiding the rate among sufficiently many separated possibilities therefore forces $\widetilde{\Omega}(\sqrt d)$ queries. This search does not itself enforce sequential dependence, since queries targeting different rate intervals can be issued in parallel. To establish a parallel-round lower bound, we need a construction in which informative queries in later rounds depend on answers from earlier rounds.

\ignore{This dependence between successive evaluations creates a sampling bottleneck: each evaluation can exploit parallel hardware, but the next input typically depends on the preceding output. Recent methods address this bottleneck by evaluating proposed denoising steps in parallel \cite{ShihEtAl23, TangEtAl24, ChenEtAl24, HuEtAl25}. 
}

\paragraph{Oracle model.} We formalize this discussion through an oracle model that abstracts away training and network architecture as studied in \cite{ChenEtAl2023Sampling, BentonEtAl24, XP26}. For a target distribution \(\pi\) on \(\mathbb R^d\), let \(\pi^{(\sigma)}=\pi*\mathcal N(0,\sigma^2 I_d)\) denote its Gaussian smoothing and let \(s_{\pi,\sigma}=\nabla\log\pi^{(\sigma)}\) be the corresponding score function. An algorithm may query any location \(x\in\mathbb R^d\) and any noise level \(\sigma>0\). The oracle returns \(\widehat s_\sigma(x)\), where the family \(\widehat s\) is fixed before the interaction and satisfies
\begin{align}
    \mathbb E_{X\sim\pi^{(\sigma)}}\!\left[\left\|\widehat s_\sigma(X)-s_{\pi,\sigma}(X)\right\|_2^2\right]
\leq \frac{\varepsilon_{\mathrm{score}}^2}{\sigma^2}
\qquad\text{for every }\sigma>0.\label{eqn:L2_approx}
\end{align}
This mean-square guarantee reflects the distributional objectives used to learn scores. It controls error on noisy target samples while permitting larger errors in regions of small probability. In each adaptive round, the algorithm chooses a batch of queries before seeing any answers from that batch. Subsequent batches may depend on all previous answers. We allow polynomially many queries per round and unrestricted computation between rounds. The algorithm receives only the oracle answers and stated public information, and need not follow a prescribed diffusion trajectory.

\subsection{Main results}

In this work, we establish the first polynomial parallel-round lower bounds for diffusion sampling with approximate scores. Both results below concern sampling the target itself to total variation error at most \(1/10\). They hold for arbitrary randomized algorithms making polynomially many queries per round, with arbitrary query locations and noise levels, and with inverse-polynomial score error in \eqref{eqn:L2_approx}.

\begin{theorem}[Near-isotropic targets; consequence of \Cref{thm:parallel-lower-bound}]
For all sufficiently large \(d\), there is a family of smooth, near-isotropic Gaussian mixtures on \(\mathbb R^d\) for which sampling with $L^2$-approximate scores requires \(\widetilde\Omega(d^{1/3})\) adaptive rounds.
\end{theorem}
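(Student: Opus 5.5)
The plan follows the blockwise template of \cite{KUW85,AGR24} together with the rate-hiding idea of \cite{XP26}.

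\textbf{Construction.} Split the coordinates into $K\approx d^{1/3}$ independent blocks, each of dimension $m=d/K\approx d^{2/3}$. In block $k$, draw a secret rate $\kappa_k$ uniformly from a grid of rates with spacing $\widetilde\Theta(1/\sqrt m)$. Then draw a random codebook $C_k$ of $n_k=e^{\kappa_k m}$ centers on a sphere of suitable radius, with a small per-block variance so the target stays near-isotropic and smooth. The target $\pi$ is the product over blocks of the uniform Gaussian mixtures on $C_k$. The rate spacing is what makes \cite{XP26}'s observation usable blockwise. For a query at noise level $\sigma$, the posterior in block $k$ concentrates on a center only in a rate window of width $\widetilde O(1/\sqrt m)$ determined by $\sigma$ and $\|x_k\|$. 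Outside that window the true block score is, up to inverse-polynomial error on typical points, the score of a fixed reference distribution (an isotropic Gaussian or a spherically symmetric law) that carries no information about $C_k$.

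\textbf{Oracle design.} I would define $\widehat s$ blockwise. In block $k$, the oracle returns the true score only if the query's blocks $1,\dots,k-1$ pass a consistency test: each $x_j$ ($j<k$) must lie near a genuine codeword of $C_j$ at the scale matching $\sigma$. Otherwise block $k$ returns the reference score. The main computation is the $L^2$ check \eqref{eqn:L2_approx}. Under $X\sim\pi^{(\sigma)}$, either the test passes with overwhelming probability, so the oracle is exact, or $\sigma$ lies outside block $k$'s informative window, so the reference score is accurate in mean square. Rate separation across the grid is used exactly here to show that the two failure events have inverse-polynomial mass. After this step, every oracle answer is a deterministic function of the query and of the blocks that can be "unlocked".

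\textbf{Round-by-round argument.} Let $\mathcal E_r$ be the event that after $r$ rounds the transcript depends only on $(\kappa_j,C_j)_{j\le r}$. Condition on these; blocks $>r$ are still fresh uniform. In round $r+1$, a query can unlock block $r+2$ only if its block-$(r+1)$ component lands near a codeword of $C_{r+1}$ at a noise level inside the window of the unknown $\kappa_{r+1}$. Before the round, the algorithm knows nothing about $C_{r+1}$ beyond responses that, on $\mathcal E_r$, are reference values. So each query succeeds with probability at most $e^{-\Omega(m)}$ (random codebook, packing bound) plus guessing the rate, and a union bound over $\mathrm{poly}(d)$ queries gives $\Pr[\mathcal E_{r+1}^c\mid\mathcal E_r]\le d^{-\omega(1)}$. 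This is the step I expect to be the main obstacle. The delicate part is making the consistency test robust enough that queries far from the data cannot leak partial information about codeword geometry, e.g.\ through the reference score's dependence on $\|x_k\|$. I would first try tests based on distance to the nearest codeword with a generous threshold, and bound leakage by a coupling with the codebook-free oracle.

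\textbf{Conclusion.} After $R=cK/\log d$ rounds, blocks beyond $R$ are, with high probability, independent of the transcript. Any output then has law close to a distribution not depending on $C_k$ for those blocks. A fresh random codebook with $e^{\Omega(m)}$ centers makes the true block marginal far in TV from any such distribution. Summing over $K-R$ blocks, or using just one, forces TV greater than $1/10$. Hence at least $\widetilde\Omega(K)=\widetilde\Omega(d^{1/3})$ rounds are needed. The exponent $1/3$ comes from balancing two requirements: the number of blocks $K$, and the rate-window width $1/\sqrt m$ times the grid size needed to keep exponential hardness within each block.
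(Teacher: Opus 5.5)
The step that fails is your bound $\Pr[\mathcal E_{r+1}^c\mid\mathcal E_r]\le d^{-\omega(1)}$. The cause is that your rates are secret and independent across blocks, instead of public and strictly increasing in the block index.

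Suppose blocks $1,\dots,r$ are exposed and $\kappa_{r+2}$ lies below $\kappa_{r+1}$ by more than the window width, which happens with constant probability. Then some noise level makes $C_{r+1}$ dense while $C_{r+2}$ is sparse. At that level the reference score is inaccurate in block $r+2$, so $L^2$ accuracy forces the block-$(r+1)$ test to accept genuine noisy samples. But a dense codebook's noisy law is close in total variation to the codebook-free reference. So a query chosen with no knowledge of $C_{r+1}$ also passes, and block $r+2$ returns a $C_{r+2}$-dependent answer. A single batch can sweep a polynomial grid of noise levels, so not knowing the rates costs nothing. Your $e^{-\Omega(m)}$ packing bound holds only when the gating codebook is sparse at the queried level.

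The missing idea is the paper's ordering: public rates $\kappa_1<\cdots<\kappa_B$ separated by more than the transition width $O(\sqrt{\log d/m})$. Then at every noise level one of two things holds. Either the first unexposed block is sparse, so its gate is shut for every query independent of its codebook and suppresses all later corrections. Or every later block is dense. Hiding the rates buys nothing against parallel batches.

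The dense alternative also needs more than you state. Later responses must equal the reference field exactly, as functions, because answers are exact reals and an inverse-polynomially perturbed true score still leaks. The paper achieves this with a local answer $b_j\in\{s^0,s_j\}$ chosen per codebook and noise level. It is the null score whenever that score meets the per-block $L^2$ budget, which holds with high probability for dense codebooks by the KL--Fisher comparison and Markov's inequality. Otherwise it is the exact score, so every instance stays legal.

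Your claim that outside the window the true score matches the reference on typical points holds only on the dense side. On the sparse side the reference is inaccurate, so the gates must accept genuine samples at every noise level, and hiding comes from the gates alone.

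Also, $\mathcal E_r$ is not an event, and conditioning on agreement could bias the suffix codebooks. The paper instead couples the real run to a progressive run, answered in round $r$ by an oracle truncated after block $r$, and charges disagreement to the first differing round.

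Finally, the total-variation step requires every block to be sparse at the target noise $\gamma$. The paper ensures this by placing all rates in a fixed interval below $\alpha_\gamma$. The exponent $1/3$ then comes from packing $B$ increasing rates with gaps $\gtrsim\sqrt{\log d/m}$ into that interval, giving $B^3\log d\lesssim d$. Your independent-grid construction never imposes this constraint.
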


The lower bound rules out a general polylogarithmic-round sampling guarantee based only on near-isotropy and the mean-square score accuracy in \eqref{eqn:L2_approx}.
It is worth noting that the same target family remains sampleable under every legal oracle using $\widetilde{O}(d^{5/6})$ sequential
score queries, by a specialization of \cite{ChenEtAl2026HighAccuracy} proved in \Cref{lem:section33-upper-bound}. A polynomial gap remains
between these lower and upper bounds on the number of rounds.

While our first result shows that poor covariance conditioning is not necessary for parallel sampling hardness, anisotropy can provide an additional source of difficulty. Here anisotropy refers to widely different variances in different directions, measured by the condition number \(\kappa\) of the covariance matrix. Previous works such as \cite{LV06HitAndRun, JiaEtAl2026Rounding, ChewiEtAl24} have demonstrated this hardness and ways to address it in various sampling settings. 

\begin{theorem}[Anisotropic boxes; consequence of \Cref{thm:box-parallel}]
For all sufficiently large \(d\), there is a family of full-dimensional, axis-aligned boxes \(K\subseteq B_2(0,1)\) for which sampling from \(\operatorname{Unif}(K)\) with $L^2$-approximate scores requires \(\Omega(d)\) adaptive rounds.
\end{theorem}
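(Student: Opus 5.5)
The plan is to build a hidden-chain construction in which $\Omega(d)$ pieces of information are revealed strictly one per round. Take a family of axis-aligned boxes $K_\theta=\prod_{i=1}^d[-a_i,a_i]$, with widths $a_i$ chosen anisotropically so that $\sum_i a_i^2\le 1$ and hence $K\subseteq B_2(0,1)$. The hidden parameter $\theta$ encodes a random permutation, or a random sign/scale pattern, across a sequence of $m=\Omega(d)$ coordinate groups. The widths are geometrically separated, say $a_{\pi(j)}\approx c\cdot 2^{-j}$ up to normalization, though I expect to need only polynomial separation to keep $K$ full-dimensional. The idea is that $\pi^{(\sigma)}$ factorizes over coordinates. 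The score in coordinate $i$ at noise $\sigma$ is therefore a one-dimensional function of $x_i$, and it depends on $a_i$ only when $\sigma$ is comparable to or below $a_i$. When $\sigma\gg a_i$, that coordinate of the smoothed density is essentially $\mathcal N(0,\sigma^2+a_i^2/3)$, which is nearly independent of $a_i$.

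Next I would design the approximate oracle $\widehat s$. Its answers at level $\sigma$ are the exact scores of the first block whose scale is comparable to $\sigma$. For all smaller-scale blocks it returns a canonical ``uninformative'' score, for instance the score of a Gaussian with a fixed public variance. It does so unless the query point passes a consistency test showing that the algorithm already knows the earlier blocks, for example $x$ lies near the box in the previously identified coordinates at the correct scale. Following the KUW/AGR style, the answer in block $j$ is honest only when the query certifies knowledge of blocks $1,\dots,j-1$. I then need to verify \eqref{eqn:L2_approx}. Under $X\sim\pi^{(\sigma)}$, the true noisy samples pass the consistency tests with high probability, and there the oracle is exact. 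On the rare failures, the error is bounded via Gaussian tails, giving inverse-polynomial $\varepsilon_{\mathrm{score}}$. I also need the deviation incurred when replacing fine-scale coordinates by the canonical score to be at most $\varepsilon^2_{\mathrm{score}}/\sigma^2$ in mean square. This is where the scale separation is used: I want $a_i/\sigma$ polynomially small to make the discrepancy tiny, which forces the widths to decay. That decay must be reconciled with keeping $m=\Omega(d)$ blocks, so I would let each block consist of one coordinate and use a random choice among a few widths per coordinate, rather than long geometric chains.

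The round argument proceeds by induction via a simulation or hybrid argument. After $r$ rounds, with high probability the transcript is a function only of $\theta_1,\dots,\theta_r$ and public randomness. A query in round $r+1$ that elicits information about block $r+2$ would have to pass a test requiring knowledge of $\theta_{r+1}$ before that value was revealed. Since $\theta_{r+1}$ is random with $\omega(\log d)$ bits of entropy, or since the test region has exponentially small probability under ignorance, a union bound over $\poly(d)$ queries shows this happens with negligible probability. Finally, outputting a sample within TV $1/10$ of $\mathrm{Unif}(K_\theta)$ after $r<m/2$ rounds is impossible. Roughly half of the unrevealed coordinates have width parameters that the output distribution cannot depend on, and the box marginals for distinct width choices differ by a constant in TV per coordinate. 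Tensorizing over many such coordinates, a posterior-mixture output is far from the true box.

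The main obstacle I expect is the second step. The $L^2$ budget has to be met simultaneously at every $\sigma$, while the consistency tests stay hard to pass by guessing and the construction remains a genuine box inside the unit ball with only $\Omega(d)$ coordinates. The tension is between per-coordinate information, which must be unguessable, and per-coordinate scale separation, which is limited by the unit-ball constraint. My first attempt would pick each coordinate's width from two values differing by a constant factor, with the tests keyed to the signs of $x_i$ relative to $\pm a_i$. If that fails, I would group coordinates into constant-size blocks and accept constant-factor losses in the round count.
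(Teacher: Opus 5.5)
Your architecture (a gated oracle, progressive simulation, an unseen coordinate) matches the paper's. The concrete hiding mechanism you propose fails, however, and that mechanism is where the proof lives.

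\textbf{The gap.} Hidden widths of origin-centered boxes $[-a_i,a_i]$ give \emph{nested} candidate supports. No gate can then do both things it must do.
\begin{itemize}
\item It must accept genuine noisy samples with a failure probability that shrinks as $\sigma\to0$ fast enough to absorb the $\Theta(a_j^2/\sigma^4)$ squared error of each suppressed later correction.
\item It must rarely be opened by an uninformed query.
\end{itemize}
Each candidate's gate has to be open on most of its own support, and the supports are nested. So most points of the narrowest candidate (e.g.\ $x_i=0$) open every candidate's gate. A test based on the boundary instead rejects a constant fraction of genuine samples at every noise level.

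Even with a working gate, $O(1)$ choices (widths or signs) per coordinate is too little hidden information. A batch of $P$ queries can enumerate all $c^k$ guesses for the next $k\approx\log_c P$ coordinates. The one consistent query opens all those gates, and its answers show that it did, since the posterior-mean correction is visibly present or absent. So each round can expose $\Theta(\log d)$ coordinates, and the best you get is $O(d/\log d)$ rounds, not $\Omega(d)$. Grouping coordinates into constant-size blocks does not change this. The $\omega(\log d)$-bit union bound you invoke contradicts ``a few widths per coordinate.''

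\textbf{The missing idea: hide a location, not a width.}
\begin{itemize}
\item In coordinate $i$, the paper places $M=\lceil 100(dP+1)\rceil$ pairwise disjoint, equally spaced candidate intervals of half-length $h_i=a_i/(100M)$ inside $[a_i,2a_i]$, and selects one uniformly.
\item The gate is one within $10h_i$ of the selected interval and zero beyond $20h_i$. Genuine noisy samples fail it with probability at most $2e^{-50h_i^2/\sigma^2}$. Since $a_{i+1}/h_i=\varepsilon_{\mathrm{score}}/d$, the bound $ue^{-u}\le e^{-1}$ turns this into the $L^2$ guarantee uniformly in $\sigma$, including $\sigma\to0$.
\item The gate supports are disjoint across labels, so any fixed query opens at most one candidate's gate. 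A batch therefore bypasses the first unknown coordinate with probability at most $P/M$.
\item The tension you see between scale separation and the unit ball is not real. Taking $a_i=\tfrac14 q^{-(i-1)}$ with polynomial $q=100Md/\varepsilon_{\mathrm{score}}$ keeps all $d$ coordinates, gives every interval positive length, and yields $\sum_i 4a_i^2<1$. The only cost is an exponentially large condition number.
\item Disjoint candidates also make the TV step a one-coordinate argument. An output independent of $\theta_{R+1}$ lands in $I_{R+1}(\theta_{R+1})$ with probability at most $1/M$. Hence $\mathbb{E}_\theta Q^\theta(K^\theta)\le (RP+1)/M$, with no tensorization over many unseen widths.
\end{itemize}
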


The box family has substantial anisotropy: its coordinate scales decrease geometrically. In each coordinate, the box uses one interval selected from a public collection of disjoint candidates. The covariance matrix has condition number $\exp (\widetilde{\Theta}(d))$. Only the selected labels are hidden, and no interior point is supplied. 

For comparison, a direct recovery algorithm uses \(O(d)\) rounds
, matching the linear lower bound. This matching upper bound relies on the known candidate-interval geometry which we describe in more detail in \Cref{lem:box-upper}.

\paragraph{Proof overview.}
Our constructions arrange hidden information in an ordered sequence of blocks, following the successive-exposure approach used in parallel matroid search and sampling with counting oracles~\cite{KUW85,AGR24}. The technical challenge is to realize this successive exposure through a fixed oracle that remains \(L^2\)-accurate at every noise level while keeping the target blocks independent. For a product of independent hard instances, the exact score separates across blocks, so a single query probes all of them. We preserve the independent product structure but construct a fixed approximate oracle whose response in a later block depends on consistency tests involving all earlier blocks. These tests accept genuine noisy target samples with high probability. We bound the score error introduced when these tests fail and choose their failure probabilities small enough to ensure the mean-square accuracy guarantee at every noise level.

The main difficulty is to maintain this dependence when queries may choose arbitrary noise levels. In the near-isotropic construction, each block contains an independent random subset of cube vertices used as Gaussian centers, or codebook, and we use the one-block estimates of~\cite{XP26}. The codebook cardinalities increase across blocks, while their coordinate dimensions are equal. Sparsity and density compare the number of centers with the information retained after smoothing. At any queried noise level, either the first unknown codebook is sufficiently sparse that a query chosen without it rarely passes its gate, or all later codebooks are sufficiently dense that their local answers equal the public score of the smoothed uniform cube distribution with high probability. In the first case, a closed earlier gate suppresses every later contribution; in the second, those contributions are already absent. Thus, with high probability, later responses are exactly public. This exact equality allows us to compare transcripts even when oracle answers have arbitrary precision. This dichotomy applies separately to each query, including queries at different noise levels in the same batch.

Under the independent codebook law, a progressive simulation couples the actual execution, with small disagreement probability, to one whose responses in round $r$ depend only on the first $r$ codebooks and the algorithm's randomness. After $R$ rounds, the simulated output is independent of every unexposed codebook. A genuine sample passes a test associated with such a codebook with high probability, whereas an independent output rarely passes. The coupling transfers this distinction to the actual output and yields a constant total variation gap. Conditioning on near-isotropy preserves this bound.

Using more blocks strengthens the round lower bound, but leaves fewer coordinates per block and therefore requires larger gaps between adjacent codebook rates. Fitting these rates into a fixed interval yields the $1/3$ exponent. \Cref{subsec:exp_1_3} gives the parameter calculation.
For boxes, rapidly decreasing coordinate scales implement the same information requirement in individual coordinates, producing $d$ successive barriers at the cost of strong anisotropy.

\subsection{Related work}

\paragraph{Sequential diffusion sampling.} Polynomial-complexity guarantees under mean-square score error were established by \cite{ChenEtAl2023Sampling,LeeLuTan2023Convergence}, using early stopping to accommodate nonsmooth or singular targets. Subsequent work improved the dimension and accuracy dependence: \cite{BentonEtAl24} obtained a \(\widetilde O(d/\delta^2)\) bound, \cite{LiYan2025Convergence} obtained \(\widetilde O(d/\delta)\), and \cite{ChenEtAl2026HighAccuracy} obtained \(\widetilde O(d\,\operatorname{polylog}(1/\delta))\), with intrinsic-dimension refinements. Here \(\delta\) denotes sampling accuracy, and the bounds require sufficiently accurate scores. For general targets, strong-metric guarantees apply to Gaussian-smoothed distributions. Guarantees for the original target require a weak metric or additional regularity.

\paragraph{Parallel sampling algorithms.} ParaDiGMS \cite{ShihEtAl23} uses Picard iteration to refine multiple denoising steps simultaneously. Theoretical diffusion guarantees \cite{ChenEtAl24,GCC25,ZS25} accommodate mean-square score error together with quantitative regularity assumptions. For example, the round complexity in \cite{GCC25} depends on Lipschitz bounds for both the true and estimated scores. Autospeculation provides another route to parallel sampling \cite{HuEtAl25, AnariEtAl26}. The analyzed runtime guarantees use exact denoisers, with \cite{AnariEtAl26} also allowing uniformly bounded pointwise denoiser error. Parallel Langevin methods also give favorable dimension dependence for smooth targets under strong log-concavity or log-Sobolev assumptions \cite{SL19,ACV24}. These upper bounds depend on additional parameters beyond the mean-square score error in \eqref{eqn:L2_approx}.

\paragraph{Query and round lower bounds.} Lower bounds for exact first-order sampling include sharp one-dimensional results \cite{CGLGR22} and condition-number-dependent bounds already for Gaussian targets \cite{ChewiEtAl24}. Other work treats stochastic gradients \cite{CBL22}, relative Fisher information accuracy \cite{CGLL23}, or non-log-concave targets with value-and-gradient access \cite{HZ25}. For convex bodies, \cite{GRV15} gives a linear sequential membership-query lower bound using boxes, and \cite{Vem26} proves an \(\Omega(d^2/(\log d)^9)\) expected total-query lower bound for bodies satisfying \(B_2(0,1)\subseteq K\subseteq B_2(0,O(d^{3/2}))\). These bounds count queries rather than adaptive rounds. In the counting-oracle model, \cite{AGR24} proves a \(\widetilde\Omega(d^{1/3})\) round lower bound using sequential exposure, following the parallel-search framework of \cite{KUW85,KUW88}. Its oracle differs from smoothed-score access. In diffusion, \cite{XP26} supplies the total-query lower bound and codebook ingredients discussed above. Concurrent work \cite{CW26} compares corruption processes, obtaining \(\widetilde O(\sqrt d)\)-query sampling algorithms for random empirical measures on the Boolean hypercube with sufficiently accurate uniform or Gaussian diffusion scores, and a \(\widetilde\Omega(d)\) query lower bound for approximate masked diffusion. Our focus is the sequential dependence that persists when many score queries can be issued simultaneously.

\paragraph{Organization. } \Cref{sec:preliminaries} introduces our notation and oracle model. \Cref{sec:near_isotropic_LB} proves the parallel lower bound for near-isotropic Gaussian mixtures. \Cref{sec:convex_anisotropic_LB} establishes the linear lower bound for anisotropic boxes. Parameter choices and detailed proofs appear in \Cref{appendix_for_near_isotropic} and \Cref{app:boxes}.

%% file: preliminaries.tex
We formalize the score oracle and the parallel sampling model, then record
the Gaussian identities used in both constructions.

\paragraph{Notation.}
Unless specified otherwise, logarithms are natural. We write $[n]=\{1,\ldots,n\}$, let $I_d$ denote the identity matrix, and use $\|\cdot\|_2$, $\|\cdot\|_{\mathrm{op}}$ and $\|\cdot\|_{\mathrm{F}}$ for the Euclidean, operator and Frobenius norms. The Euclidean ball of radius $r$ centered at the origin is $B_2(0,r)$. We write 
$\operatorname{Unif}(S)$ for the uniform law on a nonempty finite set or a measurable set of positive finite volume, and identify distributions with their densities when these exist.

For a distribution $\pi$ with finite second moments, write
$\operatorname{Cov}(\pi)$ for its covariance matrix. In this paper, a distribution is isotropic if its mean is zero and its covariance is \(cI_d\) for some \(c>0\). We quantify near-isotropy through explicit bounds on the mean and on the operator-norm deviation of the covariance from a fixed positive constant multiple of \(I_d\). For a positive definite matrix $\Sigma$, its condition
number is $\kappa(\Sigma)=\lambda_{\max}(\Sigma)/\lambda_{\min}(\Sigma)$.

\subsection{Gaussian smoothing and approximate score access}
For a target $\pi$ on $\mathbb R^d$ and an additional noise level
$\sigma>0$, define
\[
  \pi^{(\sigma)}=\pi*N(0,\sigma^2I_d),
  \qquad
  s_{\pi,\sigma}(x)=\nabla\log\pi^{(\sigma)}(x).
\]
Thus $\pi^{(\sigma)}$ is the law of $Y+\sigma Z$ for independent
$Y\sim\pi$ and $Z\sim \calN(0,I_d)$. Gaussian convolution has a positive
smooth density, so the score $s_{\pi,\sigma}$ is defined at every query point. 

Fix $\varepsilon_{\mathrm{score}}>0$. A legal score oracle for $\pi$ is a fixed map $\widehat s:\mathbb R^d\times(0,\infty)\to\mathbb R^d$. Writing $\widehat s_\sigma(x)=\widehat s(x,\sigma)$, we require
\[
\mathbb E_{X\sim\pi^{(\sigma)}}
  \bigl\|\widehat s_\sigma(X)-s_{\pi,\sigma}(X)\bigr\|_2^2
  \le\frac{\varepsilon_{\mathrm{score}}^2}{\sigma^2}
  \qquad(\sigma>0).
\]
The error is averaged under the true noisy target. The oracle is fixed once the instance is chosen and remains unchanged throughout the interaction. We impose no pointwise accuracy guarantee or spatial Lipschitz bound, and \(\widehat s\) need not be the score of another distribution. An algorithm must succeed for every legal oracle for its target. A lower-bound construction may therefore assign one particular legal oracle to each target.

\subsection{Parallel algorithms and sampling accuracy}
An $R$-round algorithm submits at most $P$ queries $(x,\sigma)$ per round, choosing the entire batch before receiving any answers from that batch. Queries within a batch may use different locations $x$ and noise levels $\sigma$. Later batches may depend on all previous answers and on a random seed $\omega$ independent of the instance, containing all algorithmic and output randomness. The algorithm receives only the stated public information and oracle answers, and computation between rounds is unrestricted.

For a fixed target-oracle pair $(\pi,\widehat s)$, let $Q_{\pi,\widehat s}$ denote the algorithm's output law, averaging over $\omega$. We measure sampling error in total variation:
\[
  d_{\mathrm{TV}}(\mu,\nu)
  =\sup_A|\mu(A)-\nu(A)|,
\]
where the supremum is over measurable sets. Thus any measurable event gives a lower bound on total variation. We also use the coupling inequality
\[
    d_{\mathrm{TV}}(\mu,\nu)
    \le
    \Pr[X\ne Y],
\]
for any coupling $(X,Y)$ with marginal laws $\mu$ and $\nu$.

The sampling objective is to make $d_{\mathrm{TV}}(Q_{\pi,\widehat s},\pi)$ small. In our distributional lower bounds, the law of the hard instance is chosen independently of the algorithm. Probability over instances is separate from the algorithmic randomness already averaged in $Q_{\pi,\widehat s}$.

Throughout, we fix a polynomial query budget $P=P(d)$ and an inverse-polynomial score accuracy $\varepsilon_{\mathrm{score}}=\varepsilon_{\mathrm{score}}(d)$ before constructing the hard-instance family. The family and implicit constants may depend on these fixed choices.

\subsection{Gaussian identities}
Gaussian smoothing preserves product structure. Consequently, the score of a smoothed product distribution is the concatenation of the scores of its smoothed factors. 

We also use Tweedie's formula \cite{Robbins56, Efron11}. For bounded $Y\sim\nu$ and an independent $Z\sim\mathcal{N}(0,I_d)$,
\[
    x+\sigma^2 s_{\nu,\sigma}(x)
    =
    \mathbb{E}[Y\mid Y+\sigma Z=x].
\]
Here the posterior is defined by weighting $\nu$ by the Gaussian likelihood at $x$, so the identity holds at every point.

%% file: proof_overview.tex
Before we delve into these lower bound constructions, we briefly discuss one main ingredient underlying all these constructions. In the \(L^2\)-approximate oracle model, the central object of a lower-bound construction is not just a target distribution \(\pi\), but a pair consisting of \(\pi\) and a carefully chosen “fake score” \(\widehat{s}\). A sampling algorithm must succeed using only the prescribed approximation guarantees, so it suffices to construct one legal oracle that conceals the information needed to sample. Writing \(\pi^{(\sigma)}=\pi*\mathcal N(0,\sigma^2I_d)\), recall this oracle must satisfy the following at every noise level

\[
\mathbb E_{X\sim\pi^{(\sigma)}}\! \left[\left\|\widehat{s}_\sigma(X) -\nabla\log\pi^{(\sigma)}(X)\right\|_2^2\right] \le \frac{\varepsilon_{\mathrm{score}}^2}{\sigma^2}.
\]

Importantly, \(\widehat{s}\) is a single fixed measurable family, chosen once the target instance is fixed—not an adversary that changes its answers during the interaction. Our task is therefore to construct \emph{a fake oracle that is simultaneously analytically accurate and informationally unhelpful}, rather than to show that the exact score itself is uninformative.

In what follows, the diffusion lower bounds use the same principle: \emph{a fixed approximate oracle can remain accurate on noisy target samples while forcing an algorithm to discover hidden information in order}. Three claims make this principle useful:
\begin{enumerate}[(1)]
    \item The oracle must be legal for each instance.
    \item A parallel batch must rarely reveal information beyond the next hidden piece and
    \item An undiscovered piece must obstruct sampling
\end{enumerate}
We first explain this common argument in \Cref{subsec:from_resp_to_seq}, then show how the two constructions realize it and why their exponents differ in \Cref{sec:near-isotropic} and \Cref{subsec:overview_anisotropic_boxes}.

\subsection{From accurate responses to sequential information}\label{subsec:from_resp_to_seq}

\paragraph{Creating dependence between blocks.} The main idea is to hide independent pieces of information in coordinate blocks. Since their exact scores separate, a product construction alone allows one query to probe every block. To address this issue, we create dependence through the approximate response.

At a fixed noise level, let \(s_j^{\mathrm{pub}}\) be a public baseline field for block \(j\). For each block and noise level, we choose a local response \(b_j\) equal to either the entire true block-score field \(s_j\) or the entire public baseline field \(s_j^{\mathrm{pub}}\). Let \(a_i(x_i)\in[0,1]\) tests whether the query is consistent with the hidden structure of block \(i\). For \(x=(x_1,\ldots,x_B)\), the "fake oracle" at block $j$ has the form

\[
\widehat s_j(x)
=s_j^{\mathrm{pub}}(x_j)
+\left(\prod_{i<j}a_i(x_i)\right)
\bigl(b_j(x_j)-s_j^{\mathrm{pub}}(x_j)\bigr).
\]

The noise level and hidden instance are suppressed from the notation. Only the correction to the public score is gated. A closed earlier gate makes every later response public. There is no gate from block \(j\) itself, since one round is allowed to reveal its first unknown block. All functions are fixed before the interaction.

\paragraph{Accuracy and hiding use different inputs.} A genuine noisy sample carries evidence of its hidden block and passes the corresponding test with high probability while an uninformed query can fail the same test. When a public local response is used, it is already accurate. Otherwise the error comes from suppressing a correction when an earlier gate fails. Bounds on both the failure probabilities and correction sizes yield \eqref{eq:prelim-score-accuracy} for every fixed instance and every noise level.

Now for any fixed query chosen independently of the remaining hidden pieces, an earlier unknown gate blocks access or the later local responses are exactly public, except with small probability over those pieces. Exact equality removes all hidden information from those responses.

\paragraph{Turning one-batch hiding into sampling hardness.} To handle adaptivity, we compare the real execution with a progressive simulation that, in round \(r\), replaces all responses beyond block \(r\) by public ones. The simulated batch depends only on the algorithm's seed and the first \(r-1\) hidden pieces, leaving the remaining pieces independent. Suppose that, for every fixed exposed prefix and seed, such a batch disagrees with the full oracle with probability at most \(\delta_{\mathrm{batch}}\) over the remaining pieces. Coupling at the first disagreement gives

\[
\Pr[\text{real and simulated outputs differ}]
\leq R\delta_{\mathrm{batch}}.
\]

This probability averages over the hard instance and the algorithm's seed. After \(R\) rounds, the simulated output depends only on the first \(R\) hidden pieces. For an unseen piece, we exhibit an event that is likely under the target but unlikely for an output independent of that piece. This gives an average total variation separation, which survives the coupling error and implies a hard instance. The simulation is a proof device while the actual oracle remains fixed. 

\subsection{Near-isotropic mixtures: ordering informative noise regimes}\label{sec:near-isotropic}
For this overview, assume \(B\) divides \(d\) and set \(d_{\mathrm{blk}}=d/B\). The formal construction handles the remaining coordinates. 

In block \(j\), choose a uniformly random \(n_j\)-point subset \(S_j\) of the Boolean cube, independently across blocks, and smooth its uniform distribution by Gaussian noise of fixed variance \(\gamma^2\). These sets are \emph{the codebooks}. A target sample independently selects one center from each codebook and adds Gaussian noise. The block partition and cardinalities \(n_j\) are public. Only the centers are hidden. Concentration of codebook moments gives near-isotropy after a public scalar normalization.

\paragraph{The two one-block facts.} At a fixed total noise level \(\tau\), the analysis of \cite{XP26} supplies two regimes. A sparse codebook occupies too little of the ambient cube for an uninformed query to pass a likelihood-based consistency test, although a noisy sample from that codebook passes with high probability. A dense codebook has enough centers that its smoothed score is close in mean square to the public score obtained from the full cube. In the dense regime, with high probability over the codebook, the local response can be chosen equal to this public field everywhere. The construction returns the true local score whenever the public field is not sufficiently accurate, preserving legality for every instance. 

We let the public rates \(\kappa_j=(\log n_j)/d_{\mathrm{blk}}\) increase with \(j\). Let \(\kappa_-(\tau)\leq\kappa_+(\tau)\) be the sparse and dense thresholds, whose gap is at most \(W\), and separate adjacent rates by more than \(W\). If the first unknown rate \(\kappa_r\) is at most \(\kappa_-(\tau)\), its gate is difficult to pass. Otherwise, every later rate satisfies

\[
\kappa_j>\kappa_r+W>\kappa_-(\tau)+W\geq\kappa_+(\tau)
\qquad(j>r),
\]
so the later responses are public with high probability. This is why a query rarely obtains information beyond the first unknown block. The estimate holds for each fixed query and noise level over the random codebooks. Taking a union bound over the finitely many queries in a batch covers arbitrary simultaneous noise choices. 

\paragraph{Why the exponent is \(1/3\).}  The transition width is bounded by \(W\asymp\sqrt{\log d/d_{\mathrm{blk}}}\) in the polynomial-query, inverse-polynomial-error regime. We separate adjacent codebook rates by more than \(W\). Fitting \(B\) rates into a constant interval therefore requires, up to integer rounding,
\[
B\sqrt{\frac{\log d}{d_{\mathrm{blk}}}}\lesssim1,
\qquad d=Bd_{\mathrm{blk}},
\qquad\text{hence } B^3\log d\lesssim d.
\]
This permits \(B=\Theta((d/\log d)^{1/3})\) successive barriers. The exponent comes from the dimension needed to separate their noise regimes. We choose the rates in a range that is sparse at the target level \(\tau=\gamma\). The gate at that level then supplies the distinguishing event: a genuine sample passes, whereas an output independent of the next codebook rarely does. We apply the independence-based argument before conditioning on the high-probability near-isotropy event. \Cref{sec:near_isotropic_LB} and \Cref{app:one-block} give the construction and quantitative bounds.

\subsection{Anisotropic boxes: one barrier per coordinate}\label{subsec:overview_anisotropic_boxes}
For boxes, a hierarchy of spatial scales replaces the high-dimensional codebooks. In each coordinate, select uniformly one of \(M\) disjoint public candidate intervals, independently across coordinates. Their lengths and location scales decrease geometrically with the coordinate index. The selected intervals form the unknown box. No interior point is supplied: such a point would identify every selected interval.

The posterior-mean identity in \Cref{sec:score} splits each score coordinate into the public term \(-x_j/\sigma^2\) and an instance-dependent correction. At noise levels large relative to a later coordinate's scale, that correction can be discarded within the error budget. When the noise is small enough for the correction to matter, it is also small enough to test whether an earlier coordinate lies near its correct hidden interval. A genuine noisy sample passes this test with high probability. An uninformed point can lie near at most one of the \(M\) candidates, so it passes with probability at most \(1/M\).

This is the same earlier-gate-or-public-response dichotomy as for the mixtures. A batch of \(P\) queries bypasses an unknown coordinate with probability at most \(P/M\). The progressive simulation then has disagreement probability at most \(RP/M\) after \(R\) rounds. If \(R<d\), its output remains independent of an unexposed interval and lies in that interval with probability at most \(1/M\), averaged over the hidden label. A target sample always lies in the box. Choosing \(M\) sufficiently large compared with \(dP\) yields the linear lower bound.

Each barrier now uses one coordinate, allowing \(d\) barriers. The cost is substantial anisotropy: in the stated polynomial regime, the covariance condition number is \(\exp(\Theta(d\log d))\). \Cref{sec:convex_anisotropic_LB} and \Cref{app:boxes} supply the scales, gates, and error estimates.

\subsection{Other oracle models: preserving rounds in an optimization reduction}
Theorem~\ref{thm:LB_classical_informal} follows from a different mechanism. We turn a hard convex objective into a distribution concentrated near its minimizers, so a constant-accuracy sample gives a sufficiently accurate optimizer with constant probability. The key requirement is to simulate a batch of sampling queries without serializing the optimization queries. Starting from \cite{BubeckJLLS2019}, the relevant optimization lower bound forces \(T\) rounds to achieve error a sufficiently small constant multiple of \(T^{-1/2}\) in dimension \(n=\widetilde\Theta(T^2)\).

For convex bodies, we encode low objective values as large cross-sections in auxiliary coordinates. Their volumes amplify the preference for near-minimizers. After one initial objective-value query at the origin, each membership query costs one objective-value query. A thin lift, together with a bound on the objective gap at the origin, achieves the required concentration in total dimension \(\widetilde\Theta(T^2)\) and with outer-to-inner radius ratio \(O(\sqrt T)\), giving the square-root bound. Requiring a constant radius ratio costs more auxiliary dimensions: the total becomes \(\widetilde\Theta(T^{5/2})\), giving the \(d^{2/5}\) bound. These exponents record the dimension cost of converting optimization accuracy into sampling concentration.

For smooth, strongly log-concave targets, a globally Lipschitz convex extension penalizes leaving the optimization ball. Exponential reweighting favors low objective values. A quadratic term supplies \(1\)-strong convexity, and Gaussian smoothing supplies polynomially bounded smoothness. Projecting a sample back to the ball then gives an approximate minimizer. Approximate potential values and gradients are simulated using polynomially many Gaussian shifts chosen once and reused throughout the interaction. For each fixed objective, an integrated mean-square estimate makes the resulting fixed oracle legal with sufficiently high constant probability over this one-time choice. All shifted queries within a sampling batch are evaluated in parallel. This construction preserves dimension, giving the square-root lower bound. In both reductions, an \(R\)-round sampler yields an optimizer using \(R+O(1)\) rounds. The lifting, smoothing, and concentration estimates are proved in \Cref{sec:LB_other_query_models}, \Cref{app:membership} and \Cref{app:logconcave}.

%% file: near_isotropic_LB.tex
In this section, we prove that a sampler using too few rounds produces a distribution far from its target in total variation. Our construction uses $B$ coordinate blocks, each with a finite set
of centers $S_j$, called a \emph{codebook}. We write
$\bfS=(S_1,\ldots,S_B)$ for the full collection of codebooks
and reserve $S$ for a single codebook.

\begin{theorem}[Parallel lower bound for near-isotropic targets]\label{thm:parallel-lower-bound}
For all sufficiently large $d$, there is a family of near-isotropic Gaussian-mixture targets $\pi^{\bfS}$, each equipped with an approximate oracle \(\widehat s\) satisfying \eqref{eqn:L2_approx} such that every randomized sampler using $O\big((d/\log d)^{1/3}\big)$ rounds has an output law \(Q_{\pi^{\bfS},\widehat s}\) satisfying
\[
\Pr_{\bfS}[d_{\mathrm{TV}}(Q_{\pi^{\bfS},\widehat s},\,\pi^{\bfS})\geq 0.9]\geq0.9.
\]
\end{theorem}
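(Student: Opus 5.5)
The proof follows the blockwise successive-exposure template. Partition the coordinates into $B=\Theta((d/\log d)^{1/3})$ blocks of size $d_{\mathrm{blk}}=\lfloor d/B\rfloor$; leftover coordinates carry a public standard Gaussian factor. In block $j$, draw an independent uniformly random $n_j$-point codebook $S_j\subseteq\{\pm1\}^{d_{\mathrm{blk}}}$ and let the block factor be $\Unif(S_j)*\calN(0,\gamma^2 I)$, with a public scalar normalization. The public rates $\kappa_j=(\log n_j)/d_{\mathrm{blk}}$ increase in $j$ with gaps exceeding the one-block transition width $W\asymp\sqrt{\log d/d_{\mathrm{blk}}}$, all lying in a constant interval that is sparse at the target level $\tau=\gamma$. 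The constraint $BW\lesssim 1$ is exactly $B^3\log d\lesssim d$.

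The oracle at noise level $\sigma$ (total level $\tau=\sqrt{\gamma^2+\sigma^2}$) is the gated field
\[
\widehat s_j(x)=s_j^{\mathrm{pub}}(x_j)+\Bigl(\prod_{i<j}a_i(x_i)\Bigr)\bigl(b_j(x_j)-s_j^{\mathrm{pub}}(x_j)\bigr).
\]
Here $s_j^{\mathrm{pub}}$ is the score of the smoothed uniform cube. The local response $b_j$ equals $s_j^{\mathrm{pub}}$ when the codebook is dense enough that this field is accurate, and equals the true block score otherwise. The gate $a_i\in\{0,1\}$ is a likelihood consistency test of $x_i$ against $S_i$.

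\textbf{Legality.} The error of $\widehat s$ comes only from blocks $j$ whose gate product fails on a genuine noisy sample. Its second moment is bounded by Cauchy--Schwarz: the fourth moment of the correction is polynomial in $d/\sigma$ via Tweedie and bounded centers, and the gate failure probability under $\pi^{(\sigma)}$ is $\exp(-\Omega(\log^2 d))$ from the one-block sparse-regime facts of \cite{XP26}. Summing over $B$ blocks gives the bound $\varepsilon_{\mathrm{score}}^2/\sigma^2$ for every $\sigma$. For very small or very large $\sigma$, I would handle the bound with crude estimates, or by letting the gate default to open or closed so the error vanishes.

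\textbf{One-batch hiding.} Fix a query $(x,\sigma)$ chosen independently of $S_r,\dots,S_B$. If $\kappa_r\le\kappa_-(\tau)$, the sparse fact gives $\Pr_{S_r}[a_r(x_r)=1]\le d^{-\omega(1)}$, so every later block answers publicly. Otherwise $\kappa_j>\kappa_-(\tau)+W\ge\kappa_+(\tau)$ for all $j>r$, so by the dense fact $b_j\equiv s_j^{\mathrm{pub}}$ with high probability over $S_j$. Either way the responses in blocks $>r$ are exactly public, and a union bound over $P$ queries gives $\delta_{\mathrm{batch}}=P\,d^{-\omega(1)}$. The progressive simulation then couples the real and simulated transcripts, with round $r$ answers depending only on $S_1,\dots,S_r$ and the seed, and the disagreement probability is at most $R\delta_{\mathrm{batch}}$.

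\textbf{Sampling hardness.} With $R<B$ rounds, the simulated output is independent of $S_B$. Consider the event $E=\{a_B^{(\tau=\gamma)}(y_B)=1\}$. A genuine sample has $\Pr_\pi[E]\ge 0.99$, while an output independent of $S_B$ has $\E_{S_B}\Pr[E]\le d^{-\omega(1)}$. Transferring through the coupling gives $\E_{\bfS}[Q(E)]=o(1)$, so Markov's inequality yields $d_{\mathrm{TV}}\ge 0.9$ with probability $\ge 0.95$ over $\bfS$. Finally, intersect with the high-probability event that $\bfS$ is near-isotropic by codebook moment concentration, and that each dense block's public field is accurate; this costs only $o(1)$ probability.

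\textbf{Main obstacle.} I expect the hardest step to be making legality uniform over all $\sigma$ while keeping the gates informative. The gate threshold must depend on $\tau$ so that genuine samples pass with super-polynomially small failure probability, yet uninformed points fail; both requirements must hold simultaneously with the $W$-width window from \cite{XP26}. I would first try gates defined by thresholding the log-likelihood ratio of the smoothed codebook against the smoothed cube at level $\tau$, with margin $\Theta(\sqrt{d_{\mathrm{blk}}\log d})$.
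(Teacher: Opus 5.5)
Your proposal is correct and follows essentially the paper's proof: a prefix-gated oracle with accuracy-selected local responses $b_j$, a sparse/dense dichotomy enforced by rate gaps that exceed a transition window of width $O(\sqrt{\log d/m})$ (with $m=d_{\mathrm{blk}}$), the progressive simulation, a target-level likelihood test on an unexposed block, and a final conditioning on near-isotropy. The paper tests block $R+1$ with a clipped gate built from $\max_{y\in S}\ell_\tau(y,x)$, but your block-$B$ test and hard mixture-likelihood gate (centered at $m\alpha_\tau-\log n_i$) work equally well.

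Two small corrections. First, with margin $\Theta(\sqrt{m\log d})$, the gate-completeness and dense-block failure probabilities are only inverse-polynomial, and that is all legality and the $RPB\delta$ coupling bound need; the paper takes $\delta=\rho^2/(100Pd^3)$ and $H=\Theta(\log d)$. Insisting on $d^{-\omega(1)}$ or $\exp(-\Omega(\log^2 d))$ would widen the window and lose the $(d/\log d)^{1/3}$ count. Second, $b_j$ is selected by a per-instance accuracy test, and $\tau\ge\gamma$ bounds the correction by $2\sqrt{m}/\tau^2$. So legality holds for every $\bfS$ and every $\sigma$, with no special small- or large-$\sigma$ regimes and no extra conditioning on dense-block accuracy.
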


To prove the theorem, we first construct the target distribution and verify the accuracy of its approximate oracle. \Cref{subsec:one_block} shows that each round reveals information about at most one additional block. \Cref{subsec:unseen_block_gives_TV} uses an unseen block to obtain a total-variation gap, and \Cref{subsec:exp_1_3} chooses the number of blocks to obtain the exponent \(1/3\).  We give the construction and main proof steps here, deferring parameter choices and auxiliary estimates to \Cref{appendix_for_near_isotropic}.

\paragraph{Parallel hardness.} Following \cite{KUW85, AGR24}, we seek a block construction in which each round reveals information about at most one additional block. We divide the $d$ coordinates into $B=\poly(d)$ blocks of dimension $m=\lfloor d/B\rfloor$ and plant independent hard instances within them. We arrange the instance sizes on an increasing scale to handle queries at different noise levels. The main obstruction is that \emph{the score function factorizes}: a global score query returns the score of every block at once. A parallel batch spanning different noise levels can therefore obtain information about many blocks simultaneously.

We address this obstruction by exploiting the approximate $L^2$-guarantee \eqref{eqn:L2_approx}: it suffices to construct a fixed approximate oracle $\hat{s}$ satisfying this guarantee and prove hardness for that oracle. We keep the target blocks independent but make the oracle's answer in each block depend on tests of the points supplied in earlier blocks. Each test checks whether the supplied point looks like a noisy sample from that block's hidden centers. If an earlier test returns zero, the later block receives a null score independent of its hidden centers. Genuine noisy samples usually pass these tests, allowing the oracle to maintain its $L^2$ accuracy. Together with the increasing instance sizes, this gate structure lets us prove that each round reveals information about at most one additional block. We now make this idea precise.

\begin{figure}[t]
  \centering
  \includegraphics[width=0.7\textwidth]{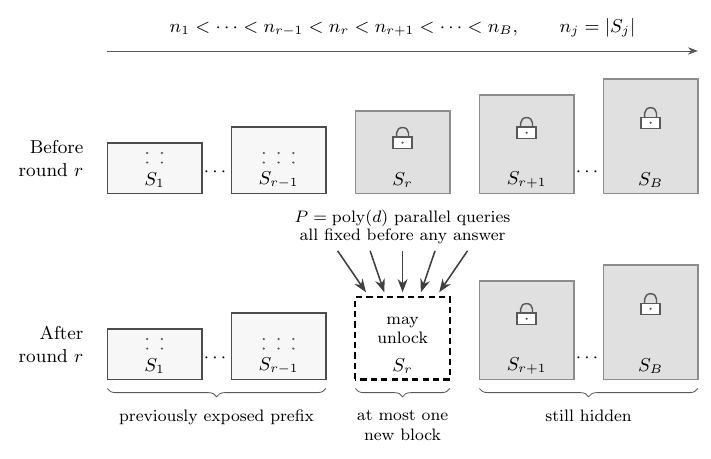}
  \caption{One-block-at-a-time exposure in the prefix-gated construction.
  The increasing heights schematically represent the codebook sizes
  $n_1<\cdots<n_B$. Here all coordinate blocks have the same dimension.
  Before round $r$, the progressive transcript depends only on the exposed
  prefix $S_1,\ldots,S_{r-1}$ and the algorithm's randomness.
  Even with polynomially many parallel queries, all locations and noise
  levels must be chosen before receiving any answer.
  With high probability, this batch can be answered using only $S_r$
  in addition to the exposed prefix, leaving $S_{r+1},\ldots,S_B$ hidden.
  The dashed block indicates possible new dependence, not guaranteed
  recovery of $S_r$.}
  \label{fig:parallel-hardness}
\end{figure}

\paragraph{Construction. } Let \(B\) be the number of blocks and \(m=\lfloor d/B\rfloor\) their dimension\footnote{Any remaining coordinates use independent uniform signs plus Gaussian noise.}. In block \(j\), choose a random set \(S_j\) of cube vertices in \(\{-1,+1\}^m\). Fix a constant target smoothing parameter $\gamma>0$. At total noise level \(\tau\ge\gamma\), sample each block independently as:

\begin{align}\label{ni:main_rates}
    X_j=Y_j+\tau Z_j,
\qquad
Y_j\sim\operatorname{Unif}(S_j),\quad
Z_j\sim\mathcal N(0,I_m).
\end{align}
Write \(\pi_\tau^{\mathbf S}\) for the resulting product law, including the padding coordinates, and \(\pi^{\mathbf S}=\pi_\gamma^{\mathbf S}\) for the target. Additional query noise $\sigma>0$ corresponds to total noise
$\tau=\sqrt{\gamma^2+\sigma^2}$, so
\[
    (\pi^{\mathbf S})^{(\sigma)}
    =
    \pi_{\sqrt{\gamma^2+\sigma^2}}^{\mathbf S}
    =
    \pi_\tau^{\mathbf S}.
\]
The target itself corresponds to $\tau=\gamma$, whereas legal oracle
queries have $\tau>\gamma$. The codebooks are independent across blocks. Each codebook is uniform among sets of its prescribed size, with these sizes increasing with block index \(j\). 

\ignore{
\subsection{A target with blockwise well-separated hidden structure}

The starting point is the way \cite{KUW85, AGR24} use blocks to limit the information revealed in each round. The common aim is to make one round of queries reveal information about only one additional block.

\cite{KUW85} consider a selection problem with hidden constraints. The elements are randomly divided into equal-sized blocks, each with a quota limiting how many elements can be selected. A query proposes a set and receives one bit indicating whether all quotas are respected. 

In the random construction, a query chosen without knowing the remaining blocks typically meets them in approximately equal numbers. \cite{KUW85} space the quotas far enough apart that blocks beyond the first unseen one rarely affect the answer. With high probability, all answers in a polynomial-sized batch can be determined using the previously revealed blocks and just one additional block. Therefore no information from later blocks is leaked in the round. After rounds fewer than the block number, an output by the algorithm is unlikely to fill all remaining quotas.

\cite{AGR24} use a related construction for exact counting. Again using a hidden random partition, they impose random linear equations on the binary variables in each block, with fewer equations in later blocks. A query fixes some variables and asks how many solutions remain. If too many unseen variables are fixed, the first unseen block typically has no completion, making the whole count zero. If fewer are fixed, later blocks typically contribute predictable counts whose product is independent of their hidden equations. Their threshold spacing again makes one additional block sufficient to simulate a round with high probability. 

Our construction follows this pattern using Gaussian mixtures. We separate the coordinates into blocks and hide a set of centers in each one. The numbers of centers increase across blocks, paralleling the increasing quotas in \cite{KUW85} and numbers of solutions in \cite{AGR24}. For an exact product score, the answer in each block depends only on the point supplied there. We construct an \emph{approximate oracle} whose answer in block $j+1$ also depends on a test of the point supplied in block $j$. The test checks whether this point looks like a noisy sample from block $j$'s hidden centers. When the test returns $0$, block $j+1$ receives a null score independent of its hidden centers. Thus, to obtain information about block $j+1$, a query must supply a point in block $j$ that opens the test. 

The $L_2$ approximate-score model \eqref{eqn:L2_approx} makes this test possible. Its accuracy guarantee averages over noisy target samples, which usually pass the test. We choose the codebook sizes so that, with high probability, a query chosen independently of the unseen codebooks either fails a test in the first unseen block or receives null answers in later blocks. Sections 4.2--4.3 explain why this holds at every noise level. The sequential obstruction comes from tests of the hidden centers, so the partition itself can remain fixed.

Let \(B\) be the number of blocks and \(m=\lfloor d/B\rfloor\) their dimension\footnote{Any remaining coordinates use independent uniform signs plus Gaussian noise.}. In block \(j\), choose a random set \(S_j\) of cube vertices in \(\{-1,+1\}^m\). Such a set is called a \emph{codebook}. A target sample at total noise level $\tau\geq\gamma$ chooses one of its vertices uniformly and adds Gaussian noise:
\begin{align}
    X_j=Y_j+\tau Z_j,
\qquad
Y_j\sim\operatorname{Unif}(S_j),\quad
Z_j\sim\mathcal N(0,I_m).\label{def:target}
\end{align}
The codebooks are independent across blocks. Each codebook is uniform among sets of its prescribed size increasing with block index \(j\). Write \(S=(S_1,\ldots,S_B)\) for the instance and \(\pi^S\) for its target distribution.
}

For one noise level $\tau$, let \(s_j\) be the true score in block \(j\). Let \(s^0\) be the null score of the full cube smoothed at the same noise level $\tau$. \footnote{All blocks have dimension $m$
and use the same full-cube reference distribution. Thus, at a
fixed noise level $\tau$, the null score is identical across
blocks, and we omit the block index.} By null score, we mean the score function of block $j$ when the underlying $S_j$ is $\{-1, +1\}^m$. Notice this null score is independent of the hidden codebook. As in \cite{XP26}, we choose the blockwise local answer \(b_j\) as follows:
\begin{equation}\label{eq:def_bj}
     b_j(x_j)=
 \begin{cases}
   s^0(x_j),
   &\displaystyle
    \mathbb E\bigl\|s^0(X_j)-s_j(X_j)\bigr\|_2^2
       \le \frac{\eta^2}{4\tau^2},\\[1.2ex]
   s_j(x_j),&\text{otherwise},
 \end{cases}
\end{equation}
where $\eta=\varepsilon_{\mathrm{score}}/(8B)$. Here the expectation is over the noisy block sample in
\eqref{ni:main_rates}. In words, $b_j$ is equal to \(s^0\) when it is sufficiently accurate and \(s_j\) otherwise. This choice is fixed for each codebook and noise level.

Given this, we now specify the earlier-block tests, called \emph{gates}, and show that the resulting answers satisfy the $L^2$ score-error guarantee \eqref{eqn:L2_approx} which is averaged over noisy target samples. The gate $a_i(x_i)\in[0,1]$ tests whether the point supplied in block $i$
looks like a noisy sample from one of its hidden centers. The test compares
the likelihood of the point under each center with its likelihood under
the public full-cube distribution. The gate equals one when some center
passes the test, equals zero when every center clearly fails, and
interpolates between these values near the threshold.

We choose the threshold so that genuine noisy samples almost always pass.
Specifically, for every fixed codebook $S_i$, a sample $X_i$ generated as in
\eqref{ni:main_rates} satisfies
\[
  \Pr\bigl[a_i(X_i)<1\bigr]\le\zeta,
  \qquad
  \zeta\le\frac{\eta^2\tau^2}{16m}.
\]
Here $\zeta=\zeta(\tau)$ is a small public failure budget. \Cref{ni:accuracy-section}
gives the precise test and proves this guarantee. This acceptance property
is what we need for the following accuracy argument.

\paragraph{The gated oracle.} At the $j$-th block, the oracle returns

\begin{align}
    \widehat s_j(x)=s^0(x_j)
+\left(\prod_{i<j}a_i(x_i)\right)
\bigl(b_j(x_j)-s^0(x_j)\bigr). \label{def:prefix-oracle}
\end{align}
On the $d-Bm$ padding coordinates, we use their exact public score. Notice that passing all earlier tests gives the local answer $b_j$, while a closed earlier gate ($a_i(x_i) = 0$ for some $i < j$) leaves the null score $s^0$. The first fact we need is that their combination satisfies the required accuracy guarantee.

\begin{lemma}[Accuracy of the gated oracle]\label{eq:prefix-oracle}
For the target distribution constructed in \eqref{ni:main_rates} at $\tau = \gamma$, the oracle in \eqref{def:prefix-oracle} satisfies \eqref{eqn:L2_approx} at every additional query noise level $\sigma > 0$ with $\tau = \sqrt{\gamma^2 + \sigma^2}$.
\end{lemma}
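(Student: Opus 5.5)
Fix $\sigma>0$, set $\tau=\sqrt{\gamma^2+\sigma^2}$, and let $X=(X_1,\ldots,X_B,X_{\mathrm{pad}})\sim\pi_\tau^{\bfS}$, so the blocks $X_j$ are independent with the law \eqref{ni:main_rates}. Since the smoothed target is a product, the true score is the concatenation $(s_1(X_1),\ldots,s_B(X_B),s_{\mathrm{pad}})$. The padding coordinates are answered exactly, so the squared error splits as $\sum_{j=1}^B\E\|\widehat s_j(X)-s_j(X_j)\|_2^2$. It therefore suffices to show that each block contributes at most $\varepsilon_{\mathrm{score}}^2/(B\sigma^2)$. Because $\tau\ge\sigma$, a bound of the form $C\eta^2/\tau^2$ with $\eta=\varepsilon_{\mathrm{score}}/(8B)$ will be more than enough.

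\textbf{Step 1: decomposition.} Write $G_j=\prod_{i<j}a_i(X_i)\in[0,1]$. By \eqref{def:prefix-oracle},
\[
\widehat s_j-s_j=(b_j-s_j)+(1-G_j)(s^0-b_j).
\]
Applying $\|u+v\|^2\le 2\|u\|^2+2\|v\|^2$ gives
\[
\E\|\widehat s_j-s_j\|^2\le 2\E\|b_j(X_j)-s_j(X_j)\|^2+2\E\bigl[(1-G_j)^2\|s^0(X_j)-b_j(X_j)\|^2\bigr].
\]
The first term is handled by \eqref{eq:def_bj}. If $b_j=s^0$, then $\E\|b_j-s_j\|^2\le\eta^2/(4\tau^2)$. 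If $b_j=s_j$, the term is zero, and so is the second term's integrand.

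\textbf{Step 2: the gating term.} In the second term we only need the case $b_j=s_j$. Since $G_j$ depends on $X_{<j}$ and $s^0(X_j)-s_j(X_j)$ depends on $X_j$, independence across blocks factors the expectation into
\[
\E[(1-G_j)^2]\cdot\E\|s^0(X_j)-s_j(X_j)\|^2 .
\]
For the first factor, $(1-G_j)^2\le 1-G_j\le\mathbf 1[\exists i<j: a_i(X_i)<1]$. A union bound with the acceptance guarantee then gives $\E[(1-G_j)^2]\le (j-1)\zeta\le B\zeta$.

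For the second factor, use Tweedie's formula. Both $s^0$ and $s_j$ have the form $(\E[Y\mid X]-x)/\tau^2$ with the posterior supported in $\{-1,1\}^m$. Hence
\[
\|s^0(x)-s_j(x)\|_2\le 2\sqrt m/\tau^2
\]
pointwise, and the second factor is at most $4m/\tau^4$.

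Combining, the gating term is at most $B\zeta\cdot 4m/\tau^4\le B\eta^2/(4\tau^2)$, using $\zeta\le\eta^2\tau^2/(16m)$.

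\textbf{Step 3: summation, and the main obstacle.} The bookkeeping does not close as stated. Summing over $j$ gives roughly $B\eta^2/\tau^2+B^2\eta^2/\tau^2$, and with $\eta=\varepsilon_{\mathrm{score}}/(8B)$ the gating sum $B^2\eta^2/\tau^2$ is only $\varepsilon_{\mathrm{score}}^2/(64\tau^2)$. That is a constant multiple of the target bound, and the $B$-dependence has to be checked carefully rather than left loose.

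The fix I would try first is to sum the gating contributions jointly. The error vectors live on disjoint blocks, and the total gating error is at most $\Pr[\exists i<B: a_i(X_i)<1]\cdot\max_j$ (squared block error) summed over the blocks that are not public. This gives at most $B\zeta\cdot B\cdot 4m/\tau^4$ in the worst case. Under the stated bound on $\zeta$ this equals $B^2\eta^2/(4\tau^2)=\varepsilon_{\mathrm{score}}^2/(256\tau^2)$, which is comfortably below $\varepsilon_{\mathrm{score}}^2/\sigma^2$ since $\sigma\le\tau$.

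The other step needing care is the acceptance guarantee $\Pr[a_i(X_i)<1]\le\zeta$. It must hold for every fixed codebook, and I take it as given from \Cref{ni:accuracy-section}. If it instead holds only with high probability over the codebooks, the lemma would be restated per typical instance.

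Adding Steps 1–3, the total error is at most $B\eta^2/(2\tau^2)+\varepsilon_{\mathrm{score}}^2/(128\tau^2)\le\varepsilon_{\mathrm{score}}^2/\sigma^2$, as required.
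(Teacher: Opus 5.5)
Your proof is correct and follows essentially the paper's route: split by the two branches of $b_j$, bound the gated branch by a union bound over earlier gate failures times the Tweedie bound $4m/\tau^4$, and sum over blocks using $\zeta\le\eta^2\tau^2/(16m)$ and $\tau\ge\sigma$. The factor $2$ and the independence factorization are unnecessary, since in each branch only one of your two terms is nonzero and the bound $4m/\tau^4$ holds pointwise. The ``obstacle'' in Step 3 is not real: your per-block sum already gives at most $\varepsilon_{\mathrm{score}}^2/(64\tau^2)\le\varepsilon_{\mathrm{score}}^2/\sigma^2$, just as the paper obtains $B(B+1)\eta^2/(8\tau^2)\le\varepsilon_{\mathrm{score}}^2/(256\tau^2)$ directly. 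Also, in Step 1 the second integrand vanishes in the branch $b_j=s^0$, not $b_j=s_j$.
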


\begin{proof}
Fix the codebooks and a total noise level $\tau$. By the acceptance
guarantee above, each genuine block sample $X_i$ satisfies
\[
  \Pr\bigl[a_i(X_i)<1\bigr]\le\zeta.
\]
By Tweedie's formula, we have
\[
  \sup_x\bigl\|s_j(x)-s^0(x)\bigr\|_2
  \le \frac{2\sqrt m}{\tau^2},
\]
because both posterior means lie in $[-1,1]^m$.

There are two cases in block $j$. If $b_j\equiv s^0$, then the gated
oracle also returns $s^0$, and its expected squared error is at most
$\eta^2/(4\tau^2)$ by the definition of $b_j$. Otherwise $b_j\equiv s_j$.
The error then vanishes whenever all preceding gates equal one, and its
squared norm is at most $4m/\tau^4$. A union bound over the preceding
gates therefore gives, in either case,
\[
  \mathbb E\bigl\|\widehat s_j(X)-s_j(X_j)\bigr\|_2^2
  \le \frac{\eta^2}{4\tau^2}
       +\frac{4m(j-1)\zeta}{\tau^4}.
\]
Summing over the blocks, with no error from the padding coordinates,
we obtain
\[
\begin{aligned}
  \mathbb E_{X\sim\pi_\tau^{\bfS}}
    \bigl\|\widehat s(X)-\nabla\log\pi_\tau^{\bfS}(X)\bigr\|_2^2
  &\le \frac{B\eta^2}{4\tau^2}
       +\frac{2mB(B-1)\zeta}{\tau^4}\\
  &\le \frac{B(B+1)\eta^2}{8\tau^2}\\
  &\le \frac{\varepsilon_{\mathrm{score}}^2}{256\tau^2}
   \le \frac{\varepsilon_{\mathrm{score}}^2}{\sigma^2}.
\end{aligned}
\]
Here we used $\zeta\le\eta^2\tau^2/(16m)$,
$\eta=\varepsilon_{\mathrm{score}}/(8B)$, and
$\tau^2=\gamma^2+\sigma^2\ge\sigma^2$.
\end{proof}

Moreover, one can show that the constructed oracle is globally Lipschitz in its spatial argument, with Lipschitz constant \(O_\gamma(d)\) uniformly over all query noise levels. See \Cref{ni:accuracy-section}.


\subsection{Only one additional block per round}\label{subsec:one_block}

Having established the oracle's accuracy, we proceed in two steps. First, we show that a batch based on previously exposed codebooks and an independent random seed can be answered using at most one additional codebook with high probability. We then extend this guarantee to adaptive algorithms through a simulation.

\paragraph{Sparse and dense blocks.} Suppose the first $r - 1$ codebooks have been exposed. In \eqref{def:prefix-oracle}, all blocks after $r$ return null scores if block $r$’s gate is closed or all later local answers are null. The former is likely when block $r$ is sufficiently sparse at the queried noise level while the latter is likely when all later blocks are sufficiently dense. We separate the codebook sizes so that \emph{every noise level falls into one of these two cases}.

To make this precise, we compare each codebook's size with the information retained after smoothing as in \cite{XP26}:

\begin{equation}
\underbrace{\kappa_j=\frac{\log|S_j|}{m}}_{\text{codebook rate}}
\qquad\text{and}\qquad
\underbrace{\alpha_\tau=\operatorname{MI}(Y^0;Y^0+\tau Z)}_{\text{information per coordinate}}.\label{eq:alpha-definition}
\end{equation}
Here $\operatorname{MI}(U;V)$ denotes the mutual information between $U$ and $V$. $Y^0 \sim \Unif (\{\pm 1\})$ is a uniform sign, \(Z\sim\mathcal N(0,1)\) is independent, and \(\tau\geq\gamma\) is the total Gaussian noise. The rates \(\kappa_j\) stay fixed as \(\tau\) changes, while \(\alpha_\tau\) decreases.

Crucially, this information level $\alpha_\tau$ determines the transition. Let \(\varphi_\tau\) be the density of \(\mathcal N(0,\tau^2 I_m)\) and \(q_\tau\) the density of the smoothed full cube. For a center \(y\) and observation \(x\), the log-likelihood ratio
\[
\ell_\tau(y,x)=\log\frac{\varphi_\tau(x-y)}{q_\tau(x)}
\]
measures the evidence for that center relative to the full cube. At a genuine noisy observation of \(y\), this quantity concentrates around its mean  $m \alpha_{\tau}$. The gate therefore tests whether some center supplies nearly this much evidence. 

Now fix a query \((x,\tau)\) chosen independently of the hidden codebook \(S_j\). To open the gate, this point must supply enough evidence for at least one of the codebook's centers. First consider the cutoff \(m\alpha_\tau\), before allowing for concentration margins. Averaging \(\exp(\ell_\tau(y,x))\) over a uniform cube vertex gives one, so Markov's inequality bounds the probability that a random center meets this cutoff by \(\exp (-m\alpha_\tau)\). Taking a union bound over the \(|S_j|=\exp (m\kappa_j)\) centers gives
\[
\begin{aligned}
\Pr_{S_j}\!\left[\exists y\in S_j:\ell_\tau(y,x)\geq m\alpha_\tau\right]
&\leq |S_j|\exp(-m\alpha_\tau)= \exp(-m(\alpha_\tau-\kappa_j)).
\end{aligned}
\]
When \(\kappa_j\) is sufficiently below \(\alpha_\tau\), this probability is small: there are too few centers to compensate for how unlikely each individual match is. Thus a sparse codebook has too few centers for an independent query to find a match and rarely opens the gate.

For a dense block,  sufficiently many random centers average out after smoothing: when \(\kappa_j\) is above \(\alpha_\tau\) by a suitable margin, the null score is accurate with high probability. The estimates in \Cref{ni:analytic} first control this approximation in
relative entropy at a slightly smaller noise level, then convert it into a bound on the mean-square
error of the null score. Markov’s inequality shows that this error falls below the threshold in \eqref{eq:def_bj}
with high probability. On that event, the local answer $b_j$ equals the null score at every query point.

The following lemma summarizes these guarantees for a small failure budget $\delta$.

\begin{lemma}[Sparse and dense blocks]\label{lem:sparse_and_dense}
    With the parameter choice of \Cref{ni:parameters}, there are thresholds $\kappa^{-}(\tau) \leq \alpha_{\tau} \leq \kappa^+ (\tau)$ with gap $O(\sqrt{\log d / m})$ uniformly over $\tau \geq \gamma$. For each fixed $\tau \geq \gamma$, a random codebook $S_j$ from our construction satisfies:
    \begin{enumerate}
        \item (Sparse blocks) If $\kappa_j \leq \kappa^- (\tau)$, then $\Pr_{S_j}[a_j(x_j) > 0] \leq \delta$ for every fixed point $x_j$.
        \item (Dense blocks) If $\kappa_j \geq \kappa^+ (\tau)$, then $\Pr_{S_j}[b_j \equiv s^0] \geq 1 - \delta$, where the equality holds as functions of the query point.
    \end{enumerate}
\end{lemma}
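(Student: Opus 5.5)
We prove the two items separately, with the thresholds set as $\kappa^-(\tau)=\alpha_\tau-c_1\sqrt{\log d/m}$ and $\kappa^+(\tau)=\alpha_\tau+c_2\sqrt{\log d/m}$. The constants $c_1,c_2$ depend only on $\gamma$ and on the polynomial exponents in $P$, $\varepsilon_{\mathrm{score}}$ and $B$. The gap bound is then immediate. The substance is to make the concentration margins uniform in $\tau\ge\gamma$. The key input is that the per-coordinate log-likelihood ratio $\ell_\tau(y,x)/m$ is a sum of $m$ i.i.d. terms. Each term is $\log\cosh$-type and has sub-Gaussian fluctuations with variance proxy $O(1/\tau^2)\le O(1/\gamma^2)$, uniformly over $\tau\ge\gamma$.

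\emph{Sparse item.} The gate $a_j(x_j)$ is positive only if some $y\in S_j$ satisfies $\ell_\tau(y,x_j)\ge m\alpha_\tau-m\theta$. Here $m\theta$ is the lower edge of the interpolation window. We pick $\theta=c\sqrt{\log d/m}$ so that genuine samples clear the upper edge except with probability $\zeta$, as required in \Cref{ni:accuracy-section}; this uses sub-Gaussian concentration of $\ell_\tau(Y,Y+\tau Z)$ around its mean $m\alpha_\tau$. For a fixed $x_j$ chosen independently of $S_j$, we have $\mathbb E_{y\sim\Unif(\{\pm1\}^m)}\exp(\ell_\tau(y,x_j))=1$. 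By Markov, each center passes with probability at most $\exp(-m\alpha_\tau+m\theta)$. The centers of a uniform random $n_j$-subset are marginally uniform, so a union bound gives
\[
\Pr_{S_j}[a_j(x_j)>0]\le\exp\bigl(-m(\alpha_\tau-\theta-\kappa_j)\bigr).
\]
This is at most $\delta$ once $\alpha_\tau-\kappa_j\ge\theta+\log(1/\delta)/m$. That condition holds under $\kappa_j\le\kappa^-(\tau)$ with $c_1$ large, since $\log(1/\delta)=O(\log d)\le O(\sqrt{m\log d})$.

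\emph{Dense item.} Here we must show that, with probability at least $1-\delta$ over $S_j$,
\[
\mathbb E_{X_j}\|s^0(X_j)-s_j(X_j)\|^2\le\frac{\eta^2}{4\tau^2},
\]
so that $b_j\equiv s^0$ by \eqref{eq:def_bj}. We follow \cite{XP26} in two steps.

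First, we bound $\mathbb E_{S_j}\mathrm{KL}(p^{S_j}_{\tau'}\,\|\,q_{\tau'})$ at a slightly smaller noise level $\tau'<\tau$, where $p^{S_j}_{\tau'}$ is the smoothed codebook law. Writing $p^{S_j}/q=\frac1n\sum_{y\in S_j}e^{\ell(y,x)}$, we split on whether $\ell(y,x)$ exceeds roughly $m\kappa_j$. On the typical part, a second-moment bound on the empirical average gives relative error $e^{-m(\kappa_j-\alpha_{\tau'})/2}$ up to margins. The atypical part is controlled by the same sub-Gaussian tail of $\ell$. The result is
\[
\mathbb E_{S_j}\mathrm{KL}\le\exp\bigl(-\Omega(m(\kappa_j-\alpha_{\tau'})-O(\sqrt{m\log d}))\bigr).
\]

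Second, we convert the KL bound into a Fisher-type bound at level $\tau$. We use the de Bruijn / data-processing identity: the integral over noise levels between $\tau'$ and $\tau$ of $\mathbb E\|s^0-s_j\|^2$ is bounded by KL at $\tau'$. Combining this with monotonicity of the relative Fisher information along the heat flow yields $\mathbb E\|s^0-s_j\|^2\lesssim \mathrm{KL}_{\tau'}/(\tau^2-\tau'^2)$. We take $\tau'^2=\tau^2(1-1/m)$. Since $\alpha$ is Lipschitz in $\tau$ on $[\gamma,\infty)$ with constant $O(1/\gamma)$, we get $\alpha_{\tau'}\le\alpha_\tau+O(1/m)$, which is absorbed in the margin. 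Markov over $S_j$ then gives failure probability $\delta$ as soon as
\[
\exp\bigl(-\Omega(m(\kappa_j-\alpha_\tau))+O(\sqrt{m\log d})\bigr)\le\delta\,\eta^2/(4m).
\]
Because $\eta$ and $\delta$ are inverse-polynomial in $d$, this holds for $\kappa_j\ge\kappa^+(\tau)$ with $c_2$ large.

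The main obstacle is the dense item, specifically the second-moment/KL estimate for random subsets together with the KL-to-mean-square conversion. One delicate point is that sampling without replacement gives negatively correlated centers, which only helps the variance bound. Another is keeping every margin uniform as $\tau\to\infty$. In that limit $\alpha_\tau\to0$ and the condition becomes trivial, but the variance-proxy bounds must still not degrade, so I would check the large-$\tau$ regime explicitly.
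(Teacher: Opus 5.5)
Your sparse item is the paper's argument: Markov on $\mathbb E_Y L_\tau(Y,x)=1$, then a union bound over marginally uniform codewords. The construction's fixed cutoff satisfies $m\alpha_\tau-t_\tau\le\gamma^{-1}\sqrt{8m\log(1/\zeta(\gamma))}$, and that is what makes your $\theta$ uniform in $\tau$. Your dense item has the paper's architecture: a KL bound at a slightly smaller noise level, conversion to $F_\tau(S)$, and Markov over $S_j$.

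The genuine gap is in the conversion step. De Bruijn's identity does give $\int_{\tau'^2}^{\tau^2}F_{\sqrt t}(S)\,dt\le 2\,\mathrm{KL}(p^S_{\tau'}\,\|\,q_{\tau'})$. However, relative Fisher information along a joint heat flow is \emph{not} monotone in general, and it fails in exactly this setting. For $m=1$ and $S=\{+1\}$, one computes $F_\tau(\{+1\})=\tau^{-4}\,\mathbb E_{X\sim N(1,\tau^2)}\bigl(1-\tanh(X/\tau^2)\bigr)^2$. This is positive for every $\tau>0$ but tends to $0$ as $\tau\downarrow0$, so it is not non-increasing. Since $\gamma$ is an arbitrary fixed constant, such noise levels are in play. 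Monotonicity would follow from log-concavity of the reference, but $q_\tau$ is log-concave only for $\tau\ge1$.

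The missing ingredient is the semi-log-convexity bound $\nabla^2\log q_{\sqrt t}\preceq t^{-2}I_m$. It gives almost-monotonicity with a Gr\"onwall factor $\exp\bigl(2(\tau'^{-2}-\tau^{-2})\bigr)$, which is exactly the exponential factor in \Cref{ni:kl-fisher} (from \cite{XP26}, Proposition B.7(b)). With your choice $\tau'^2=\tau^2(1-1/m)$, this factor is at most $e^{2/(\gamma^2(m-1))}=O(1)$. So the repair is cheap, but as written the step rests on a false principle.

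Two smaller points.

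First, your displayed KL rate is too optimistic. On the atypical event the log-ratio can be as large as $\log(1+2^m)$, and that event has only sub-Gaussian probability. The split therefore yields $e^{-u}+\log(1+2^m)\exp\bigl(-\tau'^2(\log n-u-m\alpha_{\tau'})^2/(8m)\bigr)$. For $\tau$ of order one and margin $\Theta(\sqrt{m\log d})$, this is only polynomially small in $d$, not $\exp(-\Omega(\sqrt{m\log d}))$. It still suffices, because $\delta\eta^2/(4m)$ is inverse-polynomial, once $u=\Theta(\log d)$ and $c_2$ are large in terms of $\gamma$ and the polynomial exponents. The paper also obtains this bound more cleanly than a truncated second moment. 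It plants $X$ at the first codeword $Y_1$ of a uniform ordering of $S$. Then $\mathbb E[L_\tau(Y_i,X)\mid Y_1,X]=(2^m-L_1)/(2^m-1)$ and conditional Jensen give $\mathbb E_S\mathrm{KL}\le\mathbb E\log(1+L_1/n)$ directly under sampling without replacement (\Cref{ni:without-replacement}).

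Second, an $O(1/\gamma)$ Lipschitz constant for $\alpha$ only gives $\alpha_{\tau'}-\alpha_\tau=O(\tau/(\gamma m))$, which is not uniform in $\tau$. Use instead the scale-invariant bound $\alpha_u-\alpha_v\le\log(v/u)$ from \Cref{ni:concentration}.
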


The precise thresholds and proofs are given in \Cref{ni:simulation-section} (\Cref{ni:hiding} and \Cref{ni:window}). We choose
adjacent rates to be separated by more than this width. Consequently, at every noise level, either
block $r$ is sparse or every later block is dense. \Cref{subsec:exp_1_3} explains how many such rates can be packed into the available range.

\paragraph{From one batch to adaptive rounds.}

We now extend the one-batch guarantee to adaptive algorithms. In round \(r\), the simulation returns the original gated answers from \eqref{def:prefix-oracle} in the first \(r\) blocks and null scores in all later blocks, retaining the exact public padding response. Consequently, after \(R\) rounds its output depends only on the first \(R\) codebooks and the algorithm's random seed. The following lemma says that this restricted simulation can reproduce the actual output of an adaptive algorithm with high probability.

\begin{lemma}[One new block per round]\label{lm:one-block-per-round}
Let a sampler use \(R<B\) rounds within the prescribed polynomial query budget. Its actual output \(Y\) and a simulated output \(\widetilde Y\) can be generated together so that
\[
\Pr[Y\ne\widetilde Y]\leq0.001.
\]
The simulated output depends only on the algorithm's random seed and \(S_1,\ldots,S_R\). The probability averages over the independent codebooks and seed.
\end{lemma}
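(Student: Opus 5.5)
We will prove \Cref{lm:one-block-per-round} with a hybrid coupling between the actual execution and the progressive simulation. Both run on the same seed $\omega$ and the same codebooks $\bfS$. In round $r$, the simulated oracle $\widetilde s^{(r)}$ returns the gated answers \eqref{def:prefix-oracle} in blocks $1,\ldots,r$, null scores $s^0$ in blocks $r+1,\ldots,B$, and the exact public padding score. Let $\mathcal T_r$ denote the simulated transcript through round $r$. By induction, the round-$r$ batch chosen in the simulation is a function of $(\omega,S_1,\ldots,S_{r-1})$ only, since its inputs are $\omega$ and the simulated answers from rounds $<r$. The simulated answers in round $r$ depend only on $S_1,\ldots,S_r$. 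Hence $\widetilde Y$ is a function of $(\omega,S_1,\ldots,S_R)$, which gives the measurability claim.

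We couple by running both processes on the same $(\omega,\bfS)$. If, in every round $r\le R$, every query $(x,\sigma)$ of the simulated batch receives the same answer from $\widehat s$ and from $\widetilde s^{(r)}$, then the two transcripts agree round by round, and so $Y=\widetilde Y$. Let $E_r$ be the event that the simulated round-$r$ batch contains a query on which $\widehat s$ and $\widetilde s^{(r)}$ disagree. Then
\[
\Pr[Y\neq \widetilde Y]\le \sum_{r\le R}\Pr[E_r].
\]
To bound $\Pr[E_r]$, condition on $(\omega,S_1,\ldots,S_{r-1})$. This fixes the round-$r$ batch, which has at most $P$ queries. The codebooks $S_r,\ldots,S_B$ remain independent and retain their prior law.

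Fix one query $(x,\tau)$ with $\tau=\sqrt{\gamma^2+\sigma^2}$. The two oracles agree in blocks $\le r$ and on the padding. In each block $j>r$, they differ only if $\bigl(\prod_{i<j}a_i(x_i)\bigr)(b_j-s^0)(x_j)\neq 0$. This requires both $a_r(x_r)>0$ and $b_j\not\equiv s^0$ for some $j>r$. Now apply the dichotomy from the rate separation. Adjacent rates differ by more than the gap $\kappa^+(\tau)-\kappa^-(\tau)=O(\sqrt{\log d/m})$ from \Cref{lem:sparse_and_dense}, uniformly in $\tau$. So either $\kappa_r\le\kappa^-(\tau)$, or $\kappa_j\ge\kappa_{r+1}>\kappa_r+W\ge \kappa^+(\tau)$ for all $j>r$.

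In the first case, $x_r$ is fixed independently of $S_r$, so the sparse part of \Cref{lem:sparse_and_dense} gives $\Pr_{S_r}[a_r(x_r)>0]\le\delta$. In the second case, the dense part and a union bound over blocks give $\Pr[\exists j>r: b_j\not\equiv s^0]\le B\delta$. Because $b_j\equiv s^0$ holds as an identity of functions, the answers are then exactly equal, and no issue of output precision arises.

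A union bound over the $P$ queries and the $R<B$ rounds yields $\Pr[Y\ne\widetilde Y]\le RPB\delta\le PB^2\delta$. The parameters in \Cref{ni:parameters} will therefore be chosen so that $\delta\le 10^{-3}/(PB^2)$. Since $P$ and $B$ are polynomial in $d$, we need $\delta$ to be inverse-polynomial, and this is what the $\log d$ in the window width pays for.

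The main obstacle is the case split for a query whose noise level is itself chosen adaptively from earlier data. The dichotomy must hold for the realized $\tau$ and not for a prespecified one. This is handled because \Cref{lem:sparse_and_dense} holds for each fixed $\tau$, the thresholds are deterministic, and after conditioning the batch (including every $\tau$) is independent of $S_r,\ldots,S_B$. One subtlety is that $\kappa_r$ may lie above $\kappa^-(\tau)$ without $\kappa_{r+1}$ reaching $\kappa^+(\tau)$. We exclude this by requiring the rate separation to strictly exceed $\sup_{\tau\ge\gamma}(\kappa^+(\tau)-\kappa^-(\tau))$, as \Cref{lem:sparse_and_dense} allows.
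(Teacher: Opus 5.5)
Your proposal is correct and takes essentially the same route as the paper: run the progressive simulation that answers round $r$ with the $r$-truncated oracle, condition on the exposed prefix and seed (which fixes every location and noise level in the batch), apply the per-query sparse/dense dichotomy, and union-bound at the first disagreement to get $RPB\delta$. The paper additionally packages the coupling step as a standalone lemma (reused for the box construction) and fixes $\delta=\rho^2/(100Pd^3)$, so that $RPB\delta\le\rho^2/(100d)<0.001$; this satisfies your requirement $\delta\le 10^{-3}/(PB^2)$.
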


\begin{proof}[Proof sketch] Run both interactions with the same codebooks and random seed. Fixing \(S_1,\ldots,S_{r-1}\) and the seed fixes the simulation’s round-\(r\) batch, including its noise levels, while the remaining codebooks retain their independent laws. By \Cref{lem:sparse_and_dense} and the rate separation, the full and simulated responses disagree on this batch with probability at most \(PB\delta\), where \(P\) bounds the number of queries per round. If the interactions first disagree in round \(r\), they have selected this same batch. Summing over rounds therefore gives \(\Pr[Y\ne\widetilde Y]\le RPB\delta\le0.001\), by the parameter choice in \Cref{ni:parameters}. The detailed argument appears in \Cref{ni:simulation-section} (\Cref{ni:one-batch} and \Cref{ni:progressive}).
    
\end{proof}



\subsection{An unseen block gives a TV gap}\label{subsec:unseen_block_gives_TV}
We now turn the restriction on information revealed per round into a lower bound on sampling error, using an unseen block to distinguish the sampler’s output from the target distribution. 

Fix a sampler using \(R\) rounds, where \(R<B\) and \(B\) is the number of blocks. By \Cref{lm:one-block-per-round}, its output can be coupled to a simulated
output independent of the codebook $S_{R+1}$. We distinguish the simulated output from a genuine
sample using the gate in block $R+1$ at the target noise level $\gamma$: a genuine sample passes with high probability, while a point independent of the hidden codebook rarely does. 

Let \(t_{\tau}\) be the log-likelihood cutoff of the gate at total noise \(\tau\), chosen slightly below \(m\alpha_{\tau}\) as in equation \eqref{ni:cutoff} in \Cref{ni:accuracy-section}. For a codebook \(S\subseteq\{\pm1\}^m\), define the target-level test:
\[
E(S)=\left\{x\in\mathbb R^m:
\max_{y\in S}\ell_\gamma(y,x)\geq t_\gamma\right\}.
\] 
Intuitively, \(E(S)\) consists of points close to a codebook center, where closeness is measured by the likelihood test above. The following lemma shows that genuine samples usually belong to \(E(S)\), whereas points chosen independently of the codebook $S$ rarely do.

\begin{lemma}[Recognizing samples from a hidden block]\label{lm:hidden}
Let \(n\) be one of the codebook sizes used in our construction. For every fixed \(n\)-element codebook \(S\), a sample from its target block \eqref{ni:main_rates} at $\tau = \gamma$ belongs to \(E(S)\) with probability at least \(0.999\). If \(S\) is uniform among the \(n\)-element subsets of \(\{\pm1\}^m\), then for every fixed \(x\),
\[
\Pr_S[x\in E(S)]\leq0.001.
\]
\end{lemma}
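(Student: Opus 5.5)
The two claims have different sources of randomness: the first comes from the Gaussian noise at a genuine sample, the second from the random codebook. I would prove each with a one-center estimate. Throughout, $t_\gamma = m\alpha_\gamma - \Delta$ for a margin $\Delta$ of order $\sqrt{m\log d}$, as in \eqref{ni:cutoff}, and the construction places every codebook rate in the sparse regime at the target level, so $\kappa_j \le \kappa^-(\gamma)$ for all $j$.

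\emph{Completeness.} Fix $S$ and draw $X = Y + \gamma Z$ with $Y \sim \Unif(S)$. Since $E(S)$ requires only that some center pass, it suffices that the true center $Y$ passes, i.e.\ $\Pr[\ell_\gamma(Y,X) < t_\gamma] \le 0.001$. Conditionally on $Y=y$, we have $X - y = \gamma Z$. By the symmetry of the cube we may reflect coordinates to take $y=\mathbf{1}$. Then the full-cube density $q_\gamma$ factorizes over coordinates, and so does
\[
\ell_\gamma(y,X)=\sum_{k=1}^m \log\frac{\varphi^{(1)}_\gamma(X_k-1)}{\tfrac12\varphi^{(1)}_\gamma(X_k-1)+\tfrac12\varphi^{(1)}_\gamma(X_k+1)},
\]
which is a sum of $m$ i.i.d.\ terms. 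Each term is $\log 2 - \log(1+e^{-2X_k/\gamma^2})$, which lies in $[\log 2 - \log(1+e^{-2X_k/\gamma^2}),\,\log 2]$ and has sub-exponential lower tail since $X_k$ is Gaussian. Its mean is $\operatorname{MI}(Y^0;Y^0+\gamma Z)=\alpha_\gamma$, because the MI of a uniform sign channel equals the expected log-likelihood ratio at the true input. Chebyshev or Bernstein then gives failure probability at most $\operatorname{Var}/\Delta^2 \cdot m$, or $\exp(-c\Delta^2/m)$, which is at most $0.001$ (indeed at most $\zeta$) for the chosen $\Delta$. This bound is uniform in $S$, which is exactly the statement.

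\emph{Soundness.} Fix $x$. A uniformly random $n$-subset $S$ has each element marginally uniform on $\{\pm1\}^m$, so a union bound over its $n$ elements gives
\[
\Pr_S[x\in E(S)] \le n\,\Pr_{y\sim\Unif}\bigl[\ell_\gamma(y,x)\ge t_\gamma\bigr] \le n\,e^{-t_\gamma}\,\E_y\bigl[e^{\ell_\gamma(y,x)}\bigr] = e^{m\kappa - m\alpha_\gamma + \Delta}.
\]
The last step uses $\E_y[\varphi_\gamma(x-y)] = q_\gamma(x)$, i.e.\ Markov's inequality. It remains to check that the parameter choices give $m(\alpha_\gamma - \kappa_j) - \Delta \ge \log 1000$ for every codebook rate. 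This is precisely the sparse condition $\kappa_j \le \kappa^-(\gamma)$ with $\kappa^-(\gamma) = \alpha_\gamma - O(\sqrt{\log d/m})$, taking the margin in the definition of $\kappa^-$ to exceed $\Delta/m$ plus a constant over $m$. It is the same computation as part 1 of \Cref{lem:sparse_and_dense}, except that there the gate is only required to be positive (a slightly lower cutoff), so the margins must be ordered consistently.

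The main obstacle is bookkeeping rather than analysis. The Markov union bound and the concentration step are routine. The work is confirming that a single $\Delta$ is simultaneously large enough for concentration and small enough that all $B$ rates, which lie in a constant-width interval, stay below $\alpha_\gamma - \Delta/m - O(1/m)$. This is arranged in \Cref{ni:parameters} by placing the whole rate range below $\kappa^-(\gamma)$.
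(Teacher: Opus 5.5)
Your proposal is correct and matches the paper's proof. Soundness is the same Markov bound $\Pr_Y[\ell_\gamma(Y,x)\ge t_\gamma]\le e^{-t_\gamma}$ plus a union bound over a uniformly ordered $S$; the paper finishes via $\kappa_j\le\kappa_{\max}=\alpha_\gamma-2c_0$ and $\Delta/m\le c_0$ to get $e^{-c_0m}$, which plays the same role as your condition $\kappa_j\le\kappa^-(\gamma)$. In fact, since $E(S)=\{a^S_\gamma=1\}\subseteq\{a^S_\gamma>0\}$, sparse hiding at $\tau=\gamma$ applies directly, so no reordering of margins is needed. Completeness is, as you say, that the generating center clears $t_\gamma$; the only difference is the concentration tool: the paper applies Gaussian concentration to the $2\sqrt m/\gamma$-Lipschitz map $z\mapsto\ell_\gamma(y,y+\gamma z)$, while you use an i.i.d.\ coordinate decomposition with Chebyshev or Bernstein. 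That suffices for $0.999$, but your parenthetical ``indeed at most $\zeta$'' needs the sharp sub-Gaussian constant (each summand is $2/\gamma$-Lipschitz in $Z_k$), not Chebyshev.
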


The first guarantee is the gate-acceptance property used in \Cref{eq:prefix-oracle}. For the second, our codebook sizes are chosen so that every block is sufficiently sparse at the target noise level \(\gamma\), allowing us to apply the sparse-block calculation from \Cref{subsec:one_block}. \Cref{ni:target-test} gives the quantitative bounds for the parameters chosen in \Cref{ni:parameters}.

Now we are ready to prove the total variation lower bound in our main theorem.
\begin{proof}[Proof of \Cref{thm:parallel-lower-bound}]
We first work under the unconditioned independent codebook law. Fix a sampler using $R<B$ rounds, and write $Q_{\bfS}$ for its output
law on instance $\bfS = (S_1,\ldots,S_B)$. For $x\in\mathbb{R}^d$, let
$x_{R+1}\in\mathbb{R}^m$ denote its coordinates in block $R+1$.
Define
\[
    D_{\bfS}
    =
    \{x\in\mathbb{R}^d:x_{R+1}\in E(S_{R+1})\}.
\]
This event checks whether the output passes the target-level test
in the first unseen block.

By \Cref{lm:hidden},
\[
    \pi^{\bfS}(D_{\bfS})\ge 0.999
\]
for every fixed instance $\bfS$. Couple the actual output $Y$ to the
simulated output $\widetilde Y$ as in \Cref{lm:one-block-per-round}. Conditional on
$S_1,\ldots,S_R$ and the algorithm's random seed,
$\widetilde Y_{R+1}$ is fixed, while $S_{R+1}$ remains an independent
random codebook. The second part of \Cref{lm:hidden} therefore gives,
after averaging,
\[
    \Pr[\widetilde Y\in D_{\bfS}]\le 0.001.
\]
Here the probability averages over the independent codebooks and
the algorithm's randomness. Consequently,
\[
\begin{aligned}
    \mathbb{E}_{\bfS}[Q_{\bfS}(D_{\bfS})]
    &= \Pr[Y\in D_{\bfS}] \\
    &\le
       \Pr[\widetilde Y\in D_{\bfS}]
       +\Pr[Y\neq\widetilde Y] \\
    &\le 0.002.
\end{aligned}
\]
For every fixed $\bfS$, total variation is at least the difference
between the probabilities of $D_{\bfS}$. Averaging gives
\begin{equation}
    \mathbb{E}_{\bfS}\!\left[
        d_{\mathrm{TV}}(Q_{\bfS},\pi^{\bfS})
    \right]
    \ge
    \underbrace{0.999}_{\text{genuine sample}}
    -
    \underbrace{0.001}_{\text{simulated output}}
    -
    \underbrace{0.001}_{\text{simulation error}}
    =
    0.997.
\end{equation}

Let \(G\) be the event in \Cref{ni:near-isotropy}, on which the targets are near-isotropic. For sufficiently large \(d\), \(\Pr(G)\ge1/2\). 
Since $1-d_{\mathrm{TV}}(Q_{\bfS},\pi^{\bfS})$ is nonnegative, the preceding
bound implies
\[
    \mathbb{E}_{\bfS}\!\left[
        1-d_{\mathrm{TV}}(Q_{\bfS},\pi^{\bfS})
        \,\middle|\,G
    \right]
    \le
    \frac{0.003}{\Pr(G)}
    \le 0.006.
\]
Markov's inequality now gives
\[
     \Pr_{\bfS}\!\left[
        d_{\mathrm{TV}}(Q_{\bfS},\pi^{\bfS})\ge 0.9
        \,\middle|\,G
    \right] = 1 - \Pr_{\bfS}\!\left[
        d_{\mathrm{TV}}(Q_{\bfS},\pi^{\bfS})<0.9
        \,\middle|\,G
    \right]
    \ge 1 - 
    \frac{0.006}{0.1} > 0.9.
\]

Choose the hard-instance distribution by conditioning the codebook law on \(G\). This distribution is independent of the sampler and is supported on near-isotropic targets equipped with legal oracles. The preceding probability bound, with \(B=\Theta((d/\log d)^{1/3})\), proves the theorem.
\end{proof}

\subsection{Why the exponent is $1/3$}\label{subsec:exp_1_3}

The preceding argument gives a lower bound of $B$ rounds, provided the codebook rates are sufficiently separated. We now explain how this requirement permits the choice $B = \Theta((d/ \log d)^{1/3})$. Adding
blocks strengthens the lower bound but leaves fewer coordinates in each block, increasing the
separation required between adjacent rates.

We choose the rates $\kappa_j$ in a fixed interval strictly inside
$(0,\alpha_\gamma)$. Keeping them a fixed distance below $\alpha_\gamma$
ensures that every block is sufficiently sparse at the target noise level,
as required in \Cref{subsec:unseen_block_gives_TV}. By \Cref{lem:sparse_and_dense}, the transition between the sparse
and dense regimes has width $O(\sqrt{\log d/m})$, uniformly over the noise
level. We therefore separate adjacent rates by a sufficiently large constant
multiple of $\sqrt{\log d/m}$. The logarithmic factor accounts for the
polynomial query budget and the prescribed inverse-polynomial score accuracy.

Such a choice is feasible when the required spacings fit inside the fixed
rate interval:
\[
  \underbrace{B}_{\text{number of blocks}}
  \cdot
  \underbrace{\sqrt{\frac{\log d}{m}}}_{\text{rate spacing}}
  \;\lesssim\;
  \underbrace{1}_{\text{available rate range}}.
\]
Equivalently, this choice of spacing requires block dimension
$m\gtrsim B^2\log d$. Since $m=\lfloor d/B\rfloor$ and $d\asymp Bm$,
we obtain
\[
  B^3\log d\lesssim d.
\]
Thus we may choose
\[
  B=\Theta\!\left((d/\log d)^{1/3}\right),\footnote{A similar cubic tradeoff underlies the $\widetilde{\Omega}(d^{1/3})$ lower bounds for parallel matroid search \cite{KUW85} and discrete sampling with counting oracles \cite{AGR24}: concentration estimates require block sizes quadratic in the number of blocks. Here the constraint $m\gtrsim B^{2}\log d$ comes from the separation needed between adjacent codebook rates.}
\]
with a sufficiently small constant factor. \Cref{ni:parameter-lemma} realizes this choice
with integer codebook sizes and verifies the remaining error bounds.
This supplies the parameter choice used in the proof of \Cref{thm:parallel-lower-bound}.

\paragraph{Upper bound for the same family.}
For comparison, the same target family admits a sequential sampler under
the same approximate-score guarantee, by specializing the algorithm
of~\cite{ChenEtAl2026HighAccuracy}.

\begin{lemma}[Upper bound for the same target family]
\label{lem:section33-upper-bound}
For all sufficiently large $d$, every target in the constructed family can
be sampled to total variation error at most $1/10$ using
$\widetilde O(d^{5/6})$ score queries, with one query per round, for every
fixed legal oracle.
\end{lemma}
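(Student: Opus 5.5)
We plan to instantiate the high-accuracy sequential sampler of \cite{ChenEtAl2026HighAccuracy}, whose complexity is $\widetilde O(d\,\mathrm{polylog}(1/\delta))$ in general but which refines to a bound governed by an \emph{intrinsic} dimension of the target. The key observation is that $\pi^{\bfS}$ is a product over $B$ blocks of Gaussian-smoothed uniform distributions on codebooks $S_j\subseteq\{\pm1\}^m$, plus padding coordinates. The underlying clean distribution $\Unif(S_1)\otimes\cdots\otimes\Unif(S_B)$ therefore has finite support of size $\prod_j|S_j|=\exp(m\sum_j\kappa_j)$. Its entropy, and hence the relevant covering-number notion of intrinsic dimension, is $\Theta(Bm)=\Theta(d)$, so that route alone does not give a gain. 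Instead we intend to use that the target is already $\gamma$-smoothed. We run the reverse diffusion only from a large noise level down to $\tau=\gamma$, i.e.\ down to additional noise $\sigma\to0$. Early stopping is then unnecessary, and the step count depends on the smoothness of the path $\tau\mapsto\pi_\tau^{\bfS}$ for $\tau\ge\gamma$ rather than on a singular endpoint.

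The steps are as follows. First, verify the hypotheses of \cite{ChenEtAl2026HighAccuracy} for this family. We need a bounded second moment, which holds since $\|Y\|_2^2=Bm\le d$. We need the score at total noise $\tau\ge\gamma$ to be Lipschitz with constant $O(1/\gamma^2+m/\gamma^4)$ blockwise; this follows by Tweedie, because the posterior covariance in each block has operator norm at most $m$. We also need the warm-start error at high noise to be controlled. Second, plug these into the discretization bound. With a constant lower noise level $\gamma$, the number of steps scales like $\widetilde O(\sqrt{d\cdot L_{\mathrm{eff}}})$ or $\widetilde O(d\cdot\text{(log-Sobolev/Lipschitz factor)})$ depending on which of their theorems is used. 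We aim for the form $\widetilde O(d^{1/2}\cdot m^{1/2})$. With $m=d/B=\Theta(d^{2/3}(\log d)^{1/3})$ this is $\widetilde O(d^{1/2}\cdot d^{1/3})=\widetilde O(d^{5/6})$, which matches the stated exponent and suggests that this is the intended decomposition: $\sqrt d$ from dimension times $\sqrt m$ from the per-block Lipschitz constant $m/\gamma^4$. Third, check that the score error needed by the theorem is met by our fixed inverse-polynomial $\varepsilon_{\mathrm{score}}$. We also need that the $L^2$ guarantee \eqref{eqn:L2_approx} is exactly their assumption on every noise level used, and that the accumulated error in total variation is at most $1/10$. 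Because the sampler is sequential, one query per step automatically gives one query per round, and the guarantee holds for every legal oracle since it uses only \eqref{eqn:L2_approx}.

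The main obstacle is the second step: locating the precise form of the specialized bound in \cite{ChenEtAl2026HighAccuracy} that yields $\sqrt{d\,m}$ rather than $d$. This requires identifying that only the block-local score Lipschitz constant $m/\gamma^4$ enters, rather than a global quantity of order $d$. We would first try to bound the integrated quantity $\int\E\|\partial_\tau s\|^2$ (or the trace of the posterior-covariance derivative) blockwise, using the product structure. We would then check that the resulting schedule sums to $\widetilde O(\sqrt{dm})$ steps. If only a cruder $O(d\cdot\mathrm{polylog})$ bound is available, that still yields a sequential upper bound, and the refinement to $d^{5/6}$ would come from this blockwise posterior-covariance estimate.
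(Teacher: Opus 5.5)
Your route is the paper's: run the sampler of \cite{ChenEtAl2026HighAccuracy} with additional noise down to a small $\sigma_1>0$, using that $\pi^{\bfS}$ is already $\gamma$-smoothed and block-structured. But the step you call ``the main obstacle'' is the heart of the proof, and the proposal leaves it open.

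The regularity hypothesis the cited theorem needs (Assumptions~4.6 and~4.8 for its Theorem~4.9) is a uniform bound on the \emph{Frobenius} norm of the Jacobian of the true denoiser $D_\sigma(x)=\E[X_0\mid X_0+\sigma Z=x]$, with $X_0\sim\pi^{\bfS}$. The admissible step size is inversely proportional to this bound, up to logarithmic factors. Gaussian conditioning gives
\[
\nabla D_\sigma(x)=\frac{\gamma^2}{\tau^2}I_d+\frac{\sigma^2}{\tau^4}\Sigma_\sigma(x),\qquad \tau^2=\gamma^2+\sigma^2,
\]
where $\Sigma_\sigma(x)$ is the posterior covariance of the cube centers.

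Your bound $\|\Sigma_\sigma(x)\|_{\mathrm{op}}\le m$, the per-block Lipschitz constant, is only half of what is needed. With it alone, $\|\cdot\|_{\mathrm F}\le\sqrt d\,\|\cdot\|_{\mathrm{op}}$ yields $\sqrt d\,m=\widetilde\Theta(d^{7/6})$, which is worse than the generic $\widetilde O(d)$. The missing ingredient is the trace bound $\operatorname{tr}\Sigma_\sigma(x)\le d$. It holds because every center coordinate, including padding, is $\pm1$ and so has posterior variance at most one. Since $\Sigma_\sigma$ is positive semidefinite, this gives $\|\Sigma_\sigma\|_{\mathrm F}^2\le\|\Sigma_\sigma\|_{\mathrm{op}}\operatorname{tr}\Sigma_\sigma\le dm$. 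Using $\sigma^2/\tau^4\le1/(4\gamma^2)$, you get $\sup_{\sigma,x}\|\nabla D_\sigma(x)\|_{\mathrm F}\le\sqrt d+\sqrt{dm}/(4\gamma^2)$.

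Your heuristic ``$\sqrt d$ from dimension times $\sqrt m$ from the Lipschitz constant'' matches this bound, but only because of the trace-times-operator-norm inequality, which the proposal never states. Your fallback of bounding $\int\E\|\partial_\tau s\|^2$ is not the quantity the theorem uses.

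With the Frobenius bound, take $h\asymp1/(\sqrt{dm}(\log d)^2)$ and the geometric schedule $\sigma_1^2=\gamma^2/d$, $\sigma_{k+1}^2=(1+h)\sigma_k^2$, run until $\sigma_K^2\ge(1+\gamma^2)d^2$. Then $K=O(\sqrt{dm}(\log d)^3)$ and $Kh=O(\log d)$. The score-error term $h\sum_k\sigma_k^2\E\|\widehat s-s\|_2^2\le\varepsilon_{\mathrm{score}}^2Kh$ is therefore negligible for every legal oracle.

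Two smaller corrections. First, early stopping is not ``unnecessary'': legal queries require $\sigma>0$, so you still stop at $\sigma_1>0$ and must compare $\pi^{(\sigma_1)}$ with $\pi$. This is cheap here. The two are mixtures with the same centers and weights, with component covariances $\gamma^2I_d$ versus $(\gamma^2+\sigma_1^2)I_d$. Convexity of relative entropy then gives $\mathrm{KL}(\pi\,\|\,\pi^{(\sigma_1)})\le1/(4d)$ for $\sigma_1^2=\gamma^2/d$, followed by Pinsker and the triangle inequality. Second, the cited sampler uses rejection sampling with a random number of queries. A worst-case $\widetilde O(d^{5/6})$ query count therefore requires its high-probability bound $O(K+\log d)$ together with a truncation that perturbs the output law by at most $d^{-2}$.
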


The block product structure controls the regularity of the true denoiser,
the posterior-mean map: after a Gaussian observation, the posterior still
factorizes across blocks. This yields a query bound of
$\widetilde O(\sqrt{dm})=\widetilde O(d^{5/6})$ for our choice
$m=\widetilde\Theta(d^{2/3})$. \Cref{ni:upper} gives the proof and the complete
error and query bounds.

\ignore{
The same balance appears in \cite{KUW85} and \cite{AGR24}. The table compares their choices with ours, using \(B\) blocks of size \(m\), so that the total dimension or number of elements is of order \(Bm\).
\begin{center}
\begingroup
\small
\setlength{\tabcolsep}{4pt}
\renewcommand{\arraystretch}{1.22}
\begin{tabularx}{\linewidth}{@{}>{\raggedright\arraybackslash}p{0.13\linewidth}>{\raggedright\arraybackslash}p{0.24\linewidth}>{\raggedright\arraybackslash}X>{\raggedright\arraybackslash}p{0.19\linewidth}@{}}
\toprule
Construction & Parameter increased across blocks & Source of the square-root scale & Spacing used\\
\midrule
\cite{KUW85} & Selection limit & Random partition: \(\sqrt m\)-scale fluctuations in the number of selected elements in an unseen block. & \(\sqrt{m\log d}\) \\
\addlinespace
\cite{AGR24} & Number of variables minus number of equations & Random partition: \(\sqrt m\)-scale fluctuations in the number of fixed coordinates in an unseen block. & \(\sqrt{m\log d}\) \\
\addlinespace
Ours & Logarithm of the number of centers & Concentration of log-likelihood ratios under Gaussian smoothing gives a \(\sqrt m\) scale for separating sparse and dense codebooks. & \(\sqrt{m\log d}\) \\
\bottomrule
\end{tabularx}
\endgroup
\end{center}
}

%% file: convex_anisotropic_LB.tex
In this section, we prove an $\Omega(d)$ parallel-round lower bound for
uniform distributions over boxes and provide a matching upper bound. We give the construction and main proof steps here, deferring parameter choices and detailed estimates to \Cref{app:boxes}.

\begin{theorem}[Linear parallel lower bound]
\label{thm:box-parallel}
For sufficiently large $d$, there is a distribution over full-dimensional
axis-aligned boxes $K^\theta\subset B_2(0,1)$, with a fixed legal approximate oracle satisfying \eqref{eqn:L2_approx} for each
$\pi^\theta=\Unif(K^\theta)$, such that every randomized sampler using fewer than
$d$ rounds has output law $Q^\theta$ satisfying
\[
  \Pr_\theta\!\left[\TV(Q^\theta,\pi^\theta)\ge 0.9\right]\ge 0.9.
\]
\end{theorem}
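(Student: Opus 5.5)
The plan follows the same three-part template as \Cref{thm:parallel-lower-bound}: a legal gated oracle, a progressive simulation that exposes at most one coordinate per round, and a test on an unexposed coordinate.

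\textbf{Construction.} Fix $M=\poly(d)$ with $M\ge 10^4 dP$, and geometrically decreasing scales $L_1\gg L_2\gg\cdots\gg L_d$, say $L_j=c\,\rho^{j}/\sqrt d$ with $\rho=1/\poly(d)$. In coordinate $j$, place $M$ disjoint public candidate intervals of length $\ell_j$ inside $[-L_j,L_j]$, separated by gaps that are large compared with $\ell_j$, and draw a hidden label $\theta_j\in[M]$ uniformly and independently across coordinates. Set $K^\theta=\prod_j I_j^{(\theta_j)}\subset B_2(0,1)$. Then $\pi^{\theta,(\sigma)}$ is a product, and its score splits coordinatewise. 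By Tweedie's formula,
\[
s_j(x_j)=\frac{-x_j+m_j(x_j)}{\sigma^2},
\qquad
m_j(x_j)=\mathbb E[Y_j\mid Y_j+\sigma Z_j=x_j],
\]
with $|m_j|\le L_j$. Take the public field $s^{\mathrm{pub}}_j=-x_j/\sigma^2$. For $i<j$, define a gate $a_i(x_i)\in[0,1]$ equal to $1$ when $x_i$ lies within $C\sigma\sqrt{\log(d/\varepsilon_{\mathrm{score}})}+\ell_i$ of the hidden interval, and $0$ outside a slightly larger neighbourhood; interpolate continuously in between (continuity is optional). The oracle is
\[
\widehat s_j=s^{\mathrm{pub}}_j+\prod_{i<j}a_i(x_i)\,\bigl(b_j-s^{\mathrm{pub}}_j\bigr),
\]
where $b_j=s^{\mathrm{pub}}_j$ if $\mathbb E|m_j(X_j)|^2/\sigma^4\le \eta^2/(4\sigma^2)$ with $\eta=\varepsilon_{\mathrm{score}}/(8d)$, and $b_j=s_j$ otherwise.

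\textbf{Accuracy.} This is the analogue of \Cref{eq:prefix-oracle}. A genuine noisy sample fails gate $i$ with probability $\zeta$, which is a Gaussian tail and can be made smaller than $\eta^2\sigma^2/(16L_j^2)$. The suppressed correction has squared size at most $L_j^2/\sigma^4$. The key point is that $b_j\ne s^{\mathrm{pub}}_j$ requires $L_j\gtrsim\eta\sigma$, so $\sigma\lesssim L_j/\eta$. Summing over blocks exactly as in \Cref{eq:prefix-oracle} gives the bound \eqref{eqn:L2_approx}.

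\textbf{Dichotomy.} Fix a query $(x,\sigma)$ chosen independently of $\theta_r,\dots,\theta_d$, and consider the later blocks $j>r$.
\begin{enumerate}
\item If $\sigma\gtrsim L_{r+1}/\eta$, all later responses are public by the choice of $b_j$: since $L_j\le L_{r+1}$, we have $\mathbb E|m_j|^2/\sigma^4\le L_j^2/\sigma^4\le\eta^2/(4\sigma^2)$.
\item Otherwise $\sigma\lesssim L_{r+1}/\eta$. Because $\rho$ is small, the gate width for coordinate $r$ is much smaller than the spacing of the candidates in coordinate $r$. A fixed $x_r$ therefore lies near at most one candidate, so $\Pr_{\theta_r}[a_r(x_r)>0]\le 1/M$.
\end{enumerate}
The only subtlety is choosing $\rho$ small enough that $L_{r+1}\sqrt{\log}/\eta\ll L_r/M$; this is what forces the $\exp(\widetilde\Theta(d))$ anisotropy. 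A union bound over the $P$ queries of a batch gives disagreement probability at most $P/M$ per batch. The progressive coupling of \Cref{lm:one-block-per-round} then gives $\Pr[Y\ne\widetilde Y]\le RP/M\le 10^{-4}$, with $\widetilde Y$ depending only on $\theta_1,\dots,\theta_R$ and the seed.

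\textbf{TV gap.} If $R<d$, coordinate $d$ (or $R+1$) is unexposed. Let $D_\theta=\{x:x_{R+1}\in I_{R+1}^{(\theta_{R+1})}\}$. Then $\pi^\theta(D_\theta)=1$, while $\Pr[\widetilde Y\in D_\theta]\le 1/M$ by independence of $\theta_{R+1}$ and disjointness of the candidates. Hence $\mathbb E_\theta\,\TV\ge 1-2\cdot10^{-4}$, and Markov's inequality as in the proof of \Cref{thm:parallel-lower-bound} gives $\Pr_\theta[\TV\ge0.9]\ge0.9$; no conditioning step is needed here.

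The main obstacle is the accuracy step at intermediate $\sigma$. There $b_j=s_j$ while gate $i$ is tested at a width that must exceed $\sigma$, yet this width must stay below the candidate spacing of coordinate $i$. Reconciling these fixes how fast $L_j$ must decay. I would first try $\rho=(\varepsilon_{\mathrm{score}}/(dMP))^{C}$ and verify the Gaussian-tail bound for $\zeta$ case by case.
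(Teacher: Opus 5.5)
Your proposal is correct and follows essentially the same route as the paper. The shared elements are the Tweedie split into the public field $-x_j/\sigma^2$ plus a gated posterior-mean correction, the per-query dichotomy, the progressive coupling, and the test $x_{R+1}\in I_{R+1}(\theta_{R+1})$ on an unexposed coordinate. In the dichotomy, either $\sigma$ is large enough that every later correction is public, or the gate in coordinate $r$ is narrow enough to be positive for at most one of the $M$ disjoint candidates. This gives $P/M$ per batch and $RP/M$ through the coupling. Your $\theta_j$-dependent switch $b_j$ and $\sigma$-dependent gate width are cosmetic variants of the paper's public cutoffs $t_j$ and fixed width $10h_i$.

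One step should be made explicit: the accuracy bound as $\sigma\to0$. It comes from the fixed $\ell_i$ part of your gate width, whose tail factor $e^{-\ell_i^2/(2\sigma^2)}$ beats the $\sigma^{-4}$ in the squared correction once $L_{i+1}\lesssim\ell_i$. This is the analogue of the paper's use of $ue^{-u}\le e^{-1}$ together with $a_{i+1}/h_i=\varepsilon_{\mathrm{score}}/d$.
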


Here, the instance distribution and oracle family are independent of the sampler.
The probability is over the instance. As in \Cref{sec:near_isotropic_LB}, the construction forces each round to expose information about at most one additional coordinate. 

Each coordinate hides one of many disjoint candidate intervals. The coordinate scales decrease rapidly, so that at large noise levels the oracle can suppress all later corrections within its accuracy budget. At smaller noise levels, a query must pass the proximity tests for earlier selected intervals to obtain a nonpublic answer in a later coordinate. The disjoint candidate neighborhoods make these tests difficult to pass without knowing the corresponding labels. 

In what follows, we first construct the box family and verify the accuracy of its oracle. We then show that each round exposes at most one additional coordinate and use an unseen coordinate to obtain the TV lower bound. Finally, we give a matching recovery algorithm.

\paragraph{Construction.}
Let $M$ denote the number of candidate intervals in each coordinate, chosen to be a sufficiently large polynomial in $d$.
In coordinate $i$, let $a_i>0$ denote their location scale. We place $M$ disjoint intervals
$I_i(1),\ldots,I_i(M)$ in $[a_i,2a_i]$, with equally spaced centers and lengths
much smaller than the gaps between them. The location scales and interval
lengths decrease rapidly with $i$. We defer the parameter choices to \Cref{app:box-parameters}.

Choose $\theta_1,\ldots,\theta_d$ independently and uniformly from $[M]$, and set
\begin{equation}
  K^\theta=\prod_{i=1}^d I_i(\theta_i),
  \qquad \pi^\theta=\Unif(K^\theta).
  \label{eq:box-target}
\end{equation}
All candidate intervals are available to the algorithms. Here only the labels are hidden, and no point
inside the box is supplied. The parameters are chosen so that every box lies
in the unit ball. The rapidly decreasing interval lengths give widely different coordinate variances, making the construction anisotropic.

\paragraph{The approximate oracle.}
Fix an instance $\theta$. For each coordinate $j$ and noise
level $\sigma>0$, let $m_{j,\sigma}^{\theta_j}(z)$ denote the
posterior mean of $U\sim\operatorname{Unif}(I_j(\theta_j))$
given $U+\sigma Z=z$, where $Z\sim N(0,1)$ is independent.
The superscript $\theta_j$ records the selected interval
in coordinate $j$.

Since Gaussian smoothing preserves the product structure,
Tweedie's formula gives
\begin{equation}
    s_{j,\sigma}^{\theta}(x)
    = -\frac{x_j}{\sigma^2}
      + \frac{m_{j,\sigma}^{\theta_j}(x_j)}{\sigma^2},
    \qquad
    m_{j,\sigma}^{\theta_j}(x_j)
    \in I_j(\theta_j)\subset[a_j,2a_j].
\end{equation}
We use the public null score $s_\sigma^0(z)=-z/\sigma^2$
and gate only the posterior-mean correction
$m_{j,\sigma}^{\theta_j}(x_j)/\sigma^2$.
Let $t_1>\cdots>t_d>0$ denote the noise cutoffs:
above $t_j$, the correction in coordinate $j$ is small
enough to discard within the score-error budget.

For each coordinate $i<d$ and candidate label $k\in[M]$,
we construct a gate
$g_{i,\sigma}^k:\mathbb R\to[0,1]$
that tests proximity to the interval $I_i(k)$:

\begin{itemize}
    \item When $\sigma\le t_{i+1}$, the gate equals one on a fixed
    neighborhood of $I_i(k)$, decreases linearly with distance
    from the interval, and equals zero outside a slightly
    larger neighborhood.
    We choose these neighborhoods so that
    $g_{i,\sigma}^k(U+\sigma Z)=1$ with high probability
    for $U\sim\operatorname{Unif}(I_i(k))$ and independent
    $Z\sim N(0,1)$.
    At the same time, the larger neighborhoods are disjoint:
    for any fixed input $z$, at most one label $k$ can satisfy
    $g_{i,\sigma}^k(z)>0$.
    \item When $\sigma>t_{i+1}$, set $g_{i,\sigma}^k\equiv1$.
    At these noise levels, every later coordinate $j>i$
    already satisfies $\sigma>t_j$, so its correction is
    discarded and the gate is unnecessary.
\end{itemize}

\Cref{app:box-parameters} specifies the neighborhoods and their Gaussian tail bounds. For the instance $\theta$, we use the gate
$g_{i,\sigma}^{\theta_i}$ corresponding to the selected
interval $I_i(\theta_i)$, and define
\begin{equation}
  \widehat s^\theta_{j,\sigma}(x)
  =s^0_\sigma(x_j)
   +\mathbf{1}_{\{\sigma\le t_j\}}
     \left(\prod_{i<j}g^{\theta_i}_{i,\sigma}(x_i)\right)
     \frac{m^{\theta_j}_{j,\sigma}(x_j)}{\sigma^2}.
  \label{eq:box-gated-oracle}
\end{equation}
Just as in the construction in \Cref{sec:near_isotropic_LB}, a closed earlier gate removes every later hidden correction.
The oracle is fixed once \(\theta\) is chosen. We need the following fact regarding the accuracy of the constructed oracle.

\begin{figure}[t]
\centering
\includegraphics[width=0.6\textwidth]{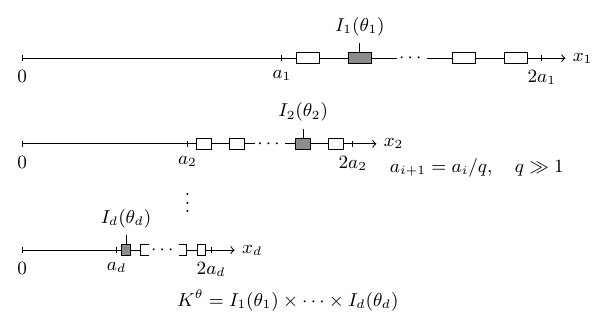}











\caption{
The anisotropic box construction. In coordinate $i$, the public
candidate intervals lie in $[a_i,2a_i]$, and the shaded interval
$I_i(\theta_i)$ is selected by an independent uniform label $\theta_i$.
The target is uniform on their Cartesian product. Both the coordinate
scales and the interval lengths decrease by a factor of $q$ at each
step. Only representative candidates are shown; scales are schematic.
}
\label{fig:box-construction}
\end{figure}

\begin{lemma}[Accuracy of the gated oracle]
\label{lem:box-accuracy}
For every instance $\theta$, the oracle in~\eqref{eq:box-gated-oracle} satisfies
the score-error guarantee \eqref{eqn:L2_approx} at every noise level $\sigma>0$.
\end{lemma}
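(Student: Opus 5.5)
Fix $\theta$ and $\sigma>0$. The plan is to mirror the proof of \Cref{eq:prefix-oracle}: decompose the error coordinatewise and split according to whether the correction is discarded by the indicator $\mathbf 1_{\{\sigma\le t_j\}}$ or suppressed by a closed gate. Since $\pi^\theta$ is a product, $X\sim(\pi^\theta)^{(\sigma)}$ has independent coordinates $X_i=U_i+\sigma Z_i$ with $U_i\sim\Unif(I_i(\theta_i))$. By Tweedie, the error in coordinate $j$ is
\[
\widehat s^\theta_{j,\sigma}(X)-s^\theta_{j,\sigma}(X)
=\Bigl(\mathbf 1_{\{\sigma\le t_j\}}\prod_{i<j}g^{\theta_i}_{i,\sigma}(X_i)-1\Bigr)\frac{m^{\theta_j}_{j,\sigma}(X_j)}{\sigma^2},
\]
and $|m^{\theta_j}_{j,\sigma}|\le 2a_j$ pointwise because the posterior mean lies in $I_j(\theta_j)\subset[a_j,2a_j]$. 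The target bound is $\sum_j\E[\cdot]\le\varepsilon_{\mathrm{score}}^2/\sigma^2$, and we allocate an error budget of $\varepsilon_{\mathrm{score}}^2/(2d\sigma^2)$ to each coordinate.

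\emph{Case $\sigma>t_j$.} The error is exactly $-m^{\theta_j}_{j,\sigma}(X_j)/\sigma^2$, with squared norm at most $4a_j^2/\sigma^4<4a_j^2/(t_j^2\sigma^2)$. So we need the cutoffs chosen with $t_j\ge 2\sqrt{2d}\,a_j/\varepsilon_{\mathrm{score}}$, which is the defining property of $t_j$ from \Cref{app:box-parameters}. One subtlety: the correction $m/\sigma^2$ is not small relative to $-x/\sigma^2$ pointwise, only in expectation. But the bound above is deterministic, so nothing further is needed.

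\emph{Case $\sigma\le t_j$.} For every $i<j$ we have $\sigma\le t_j\le t_{i+1}$, so each gate is the nontrivial proximity test. The error vanishes unless some $g^{\theta_i}_{i,\sigma}(X_i)<1$, and in all cases it is bounded by $2a_j/\sigma^2$ in norm. A union bound gives
\[
\E\bigl|\widehat s^\theta_{j,\sigma}-s^\theta_{j,\sigma}\bigr|^2\le\frac{4a_j^2}{\sigma^4}\sum_{i<j}\Pr\bigl[g^{\theta_i}_{i,\sigma}(U_i+\sigma Z_i)<1\bigr].
\]
It then suffices that each gate failure probability is at most $\zeta_{j}$ with $4a_j^2 d\,\zeta_j/\sigma^2\le\varepsilon_{\mathrm{score}}^2/(2d)$. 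For $\sigma$ bounded below this is easy; for tiny $\sigma$ the factor $1/\sigma^2$ must be beaten by the failure probability.

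This last requirement is the main obstacle: the failure probability must be $O(\varepsilon_{\mathrm{score}}^2\sigma^2/(d^2a_j^2))$ uniformly over $\sigma\le t_{i+1}$, including $\sigma\to0$. The plan is to take the inner neighborhood where $g\equiv1$ to be the interval $I_i(k)$ enlarged by a fixed margin $\rho_i$ (independent of $\sigma$). A failure then requires $\sigma|Z_i|>\rho_i$, whose probability is at most $2\exp(-\rho_i^2/(2\sigma^2))$. Since $\sigma\le t_{i+1}$, this is at most $2\exp(-\rho_i^2/(2t_{i+1}^2))$, and it also decays superpolynomially in $1/\sigma$, so it dominates $\sigma^2$ as $\sigma\to0$. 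Concretely, I would bound $\sup_{\sigma\le t_{i+1}}\sigma^{-2}e^{-\rho_i^2/(2\sigma^2)}$ by $t_{i+1}^{-2}e^{-\rho_i^2/(2t_{i+1}^2)}$, valid when $\rho_i\ge\sqrt2\,t_{i+1}$. The inequality $4a_j^2d\cdot 2t_{i+1}^{-2}e^{-\rho_i^2/(2t_{i+1}^2)}\le\varepsilon_{\mathrm{score}}^2/(2d)$ then follows once $\rho_i/t_{i+1}\gtrsim\sqrt{\log(d\,a_1/(\varepsilon_{\mathrm{score}}t_d))}$. Because scales decrease geometrically, this ratio is available while keeping the outer neighborhoods (margin say $2\rho_i$) disjoint, provided the gaps between candidates in coordinate $i$ exceed $4\rho_i$. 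Verifying that the parameters of \Cref{app:box-parameters} satisfy these inequalities simultaneously, with $M$ polynomial and the boxes inside $B_2(0,1)$, is the step where the real work lies. Summing the $d$ per-coordinate bounds then gives \eqref{eqn:L2_approx}.
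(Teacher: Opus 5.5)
Your skeleton matches the paper's proof. You discard the correction when $\sigma>t_j$, bound the active error by $2a_j/\sigma^2$ times a union bound over earlier gate failures, and let a $\sigma$-independent Gaussian margin beat the $\sigma^{-4}$ factor. However, the inactive case does not close as written.

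You give each coordinate a budget $\varepsilon_{\mathrm{score}}^2/(2d\sigma^2)$ and assert that the cutoffs satisfy $t_j\ge 2\sqrt{2d}\,a_j/\varepsilon_{\mathrm{score}}$. The paper's cutoffs are $t_j=4a_j/\varepsilon_{\mathrm{score}}$, with no $\sqrt d$. The posterior mean is at least $a_j$, so for $\sigma$ slightly above $t_j$, coordinate $j$ alone contributes about $a_j^2/\sigma^4\approx\varepsilon_{\mathrm{score}}^2/(16\sigma^2)$. No per-coordinate budget of order $1/d$ can be met. Summing your per-coordinate bound $\varepsilon_{\mathrm{score}}^2/(4\sigma^2)$ over up to $d$ inactive coordinates overshoots by a factor of order $d$.

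The missing idea is to use the geometric scales in this term too. Because the $t_j$ decrease, the inactive coordinates form a suffix $\{j\ge j_0\}$, and $a_j=a_{j_0}q^{-(j-j_0)}$. Hence
\[
\frac{4}{\sigma^4}\sum_{j\ge j_0}a_j^2\le\frac{4a_{j_0}^2}{(1-q^{-2})\sigma^4}\le\frac{\varepsilon_{\mathrm{score}}^2}{4(1-q^{-2})\sigma^2}\le\frac{\varepsilon_{\mathrm{score}}^2}{3\sigma^2},
\]
using only $t_{j_0}<\sigma$. Only the first discarded correction matters. Inflating every $t_j$ by $\sqrt d$ would rescue your allocation, but then you would be proving the bound for a different oracle than the one in the statement.

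The gate term, which you defer as ``the real work,'' is short with the paper's parameters. Interchange the sums and bound $\sum_{j>i}a_j^2\le a_{i+1}^2/(1-q^{-2})$. Then apply $ue^{-u}\le e^{-1}$ with $u=50h_i^2/\sigma^2$, which is the tail exponent for the inner margin $10h_i$. This gives $\sigma^{-4}e^{-50h_i^2/\sigma^2}\le 1/(50e\,h_i^2\sigma^2)$ for every $\sigma>0$, with no restriction to $\sigma\le t_{i+1}$. Everything then reduces to the single ratio $a_{i+1}/h_i=100M/q=\varepsilon_{\mathrm{score}}/d$, for a total of at most $\varepsilon_{\mathrm{score}}^2/(10\sigma^2)$.

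Your supremum argument would also work here, since $\rho_i=10h_i$ and $t_{i+1}=4h_i/d$ give $\rho_i/t_{i+1}=5d/2$. The cleaner comparison uses $a_j\le a_{i+1}$ and $a_{i+1}/t_{i+1}=\varepsilon_{\mathrm{score}}/4$. That only needs $\rho_i/t_{i+1}$ of order $\sqrt{\log d}$, rather than the crude ratio $a_1/t_d$, which is $e^{\Theta(d\log d)}$.
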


\begin{proof}[Proof sketch]
If $\sigma>t_j$, the discarded correction has magnitude at most
$2a_j/\sigma^2$. Otherwise, the oracle is exact in coordinate $j$ whenever
all preceding gates equal one, and its error is at most $2a_j/\sigma^2$ on
the remaining event. Under the noisy target, these gate failures have
Gaussian tail probabilities. Geometric scale separation controls the sum
of the discarded corrections' squared errors and makes the gate-failure
contribution small enough as well. Here we retain the noise dependence of
the tails: as $\sigma$ decreases, they decay fast enough to compensate for
the $\sigma^{-4}$ factor in the squared correction. Summing the two
contributions gives \eqref{eqn:L2_approx} uniformly over $\sigma>0$. \Cref{app:box-accuracy} gives the
complete calculation.
\end{proof}

\paragraph{One new coordinate per round.} The disjoint gate neighborhoods allow us to simulate a sampler while exposing at most one additional label in each round. 

\begin{lemma}[One new coordinate per round]
\label{lem:box-simulation}
Let a sampler use $R<d$ rounds and at most $P$  queries per round. Its actual output $Y$ can be coupled to an
output $\widetilde Y$ depending only on the sampler's random seed and
$\theta_1,\ldots,\theta_R$, such that
\[
  \Pr[Y\ne\widetilde Y]\le\frac{RP}{M},
\]
where the probability is over the independent labels and the sampler’s seed.
\end{lemma}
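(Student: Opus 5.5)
We build a progressive simulation and couple it to the real run. In round $r\in\{1,\dots,R\}$ the simulated oracle $\widetilde s^{(r)}$ returns the gated answers \eqref{eq:box-gated-oracle} in coordinates $j\le r$. In every coordinate $j>r$ it returns the public null score $s^0_\sigma(x_j)=-x_j/\sigma^2$. Both interactions run with the same labels $\theta$ and the same seed $\omega$. By induction on $r$, the simulated round-$r$ batch is a deterministic function of $\omega$ and $\theta_1,\ldots,\theta_{r-1}$. The round-$r$ simulated answers depend additionally only on $\theta_r$, because the coordinates $j\le r$ involve only gates $g^{\theta_i}_{i,\sigma}$ with $i<j\le r$ and corrections $m^{\theta_j}_{j,\sigma}$ with $j\le r$. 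Consequently the final simulated output $\widetilde Y$ is a function of $\omega$ and $\theta_1,\ldots,\theta_R$ alone.

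The key one-query estimate is as follows. Fix a query $(x,\sigma)$ in the simulated round-$r$ batch, which is determined by $\omega,\theta_{<r}$. We claim that the full oracle and the round-$r$ simulation can differ on this query only if $\sigma\le t_{r+1}$ and $g^{\theta_r}_{r,\sigma}(x_r)>0$. Indeed, they agree in coordinates $j\le r$. In a coordinate $j>r$, the full answer differs from $s^0_\sigma(x_j)$ only if $\sigma\le t_j\le t_{r+1}$ and $\prod_{i<j}g^{\theta_i}_{i,\sigma}(x_i)>0$. That product contains the factor $g^{\theta_r}_{r,\sigma}(x_r)$, since $r<j$. (Here $r<R<d$, so coordinate $r+1$ exists and the gate for coordinate $r$ is defined.) When $\sigma\le t_{r+1}$, the larger gate neighborhoods are disjoint, so at most one label $k\in[M]$ has $g^k_{r,\sigma}(x_r)>0$. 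The label $\theta_r$ is uniform on $[M]$ and independent of $(\omega,\theta_{<r})$, hence independent of the query. Therefore the disagreement probability for this query is at most $1/M$, and a union bound over at most $P$ queries gives at most $P/M$ per batch, conditionally on $(\omega,\theta_{<r})$.

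For the adaptive coupling, let $\mathcal{E}_r$ be the event that the two transcripts agree through round $r-1$ but disagree in round $r$. On the event that they agree through round $r-1$, the real round-$r$ batch equals the simulated one, since both are the same function of $\omega$ and the common past answers. So $\mathcal{E}_r$ is contained in the event that the oracles disagree on the simulated round-$r$ batch, which has probability at most $P/M$ by the previous step. If no disagreement occurs in any round, the transcripts coincide, so $Y=\widetilde Y$; this holds because the output is a function of $\omega$ and the transcript. Summing over $r\le R$ gives $\Pr[Y\ne\widetilde Y]\le RP/M$.

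The main subtlety is the independence bookkeeping. The batch must be fixed before $\theta_r$ is revealed, and the $1/M$ bound must use the disjointness of the gate supports at the specific queried $\sigma$. Queries with $\sigma>t_{r+1}$ need no gate at all, because all later corrections are already switched off for them.
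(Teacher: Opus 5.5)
Your proof is correct and matches the paper's argument. It uses the same truncated oracle and the same observation that a disagreement in a coordinate $j>r$ forces $\sigma\le t_{r+1}$ and $g^{\theta_r}_{r,\sigma}(x_r)>0$. Disjointness of the gate supports then gives $1/M$ per query and $P/M$ per batch, and the first-disagreement coupling sums this to $RP/M$. The only structural difference is that the paper states the coupling step once as the general progressive-simulation lemma (\Cref{ni:progressive}), reused from the mixture construction, whereas you inline it. Also, your parenthetical ``$r<R<d$'' should read $r\le R<d$, which still gives the needed $r<d$.
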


The complete statement and proof appear in \Cref{app:box-simulation}. With these lemmas in hand, we are ready to prove \Cref{thm:box-parallel}.

\begin{proof}[Proof of \Cref{thm:box-parallel}]
The construction and \Cref{lem:box-accuracy} give the required targets and
oracles. Fix a sampler using $R<d$ rounds and couple its output as in
\Cref{lem:box-simulation}. Conditional on the first $R$ labels and the seed,
$\widetilde Y_{R+1}$ is fixed and $\theta_{R+1}$ remains uniform. Since the
candidate intervals are disjoint,
\[
  \Pr[\widetilde Y\in K^\theta]
  \le \Pr[\widetilde Y_{R+1}\in I_{R+1}(\theta_{R+1})]
  \le\frac1M.
\]
The coupling therefore gives
\[
  \E_\theta [Q^\theta(K^\theta)]
  =\Pr[Y\in K^\theta]
  \le\frac{RP+1}{M}.
\]
The choice of $M$ in \Cref{app:box-parameters} makes this at most $0.01$ for every $R<d$.
By Markov's inequality, $Q^\theta(K^\theta)\le0.1$ with probability at least
$0.9$ over $\theta$. For each such instance, $\pi^\theta(K^\theta)=1$, and hence
\[
  \TV(Q^\theta,\pi^\theta)
  \ge\pi^\theta(K^\theta)-Q^\theta(K^\theta)\ge0.9.
\]
This is the claimed probability bound.
\end{proof}

\paragraph{Matching upper bound.} Finally, we prove a matching upper bound. In particular, we show that for the same parameter choice, the selected intervals can also be recovered one coordinate at a time.

\begin{lemma}[Upper bound for the same target family]
\label{lem:box-upper}
For sufficiently large $d$, every target in the constructed family can be
sampled to total variation error at most $1/10$ using $O(d)$ rounds and
$O(\log d)$ queries per round, for every fixed legal oracle.
\end{lemma}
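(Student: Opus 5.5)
The plan is to recover the labels $\theta_1,\theta_2,\ldots,\theta_d$ one coordinate at a time, in that order. Once all labels are known, we output an exact sample from $\Unif(K^\theta)$: the candidate intervals are public, so this needs no further queries.

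\textbf{Where the information sits.} By Tweedie's formula, a query at noise level $\sigma$ returns in coordinate $j$ the value $\sigma^2\widehat s_{j,\sigma}(x)+x_j$. For the true score this equals the posterior mean $m^{\theta_j}_{j,\sigma}(x_j)\in I_j(\theta_j)$. The candidate intervals lie in $[a_j,2a_j]$, are equally spaced, and are separated by gaps much larger than their lengths. Hence any estimate of the posterior mean whose error is small compared with the gap determines $\theta_j$.

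\textbf{Round $j$.} Suppose $\theta_1,\ldots,\theta_{j-1}$ are already known. Pick a noise level $\sigma_j\le t_j$ that is small compared with the gap in coordinate $j$. Then the posterior mean at a point drawn from the noisy target sits in the correct interval, up to a small error.

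\textbf{Query points.} We want query points distributed exactly as $\pi^{\theta,(\sigma_j)}$ in the coordinates that matter. In the known coordinates $i<j$, draw $U_i\sim\Unif(I_i(\theta_i))$ and add noise $\sigma_j Z_i$. In coordinate $j$, we need $U_j+\sigma_j Z_j$ with $U_j$ unknown. Instead, draw $x_j$ from the public mixture $\frac1M\sum_k \Unif(I_j(k))*\calN(0,\sigma_j^2)$, or simply draw it near each candidate $k$ in turn. Coordinates $i>j$ do not affect the coordinate-$j$ answer in a general legal oracle only if we argue via $L^2$ accuracy, so we sample them from a fixed public product proxy. Cleaner: in coordinate $j$, try all $M$ candidates, each with $O(\log d)$ independent points.

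\textbf{Main obstacle.} The real difficulty is that the oracle is arbitrary legal, so accuracy holds only on average under the true noisy law. For a candidate $k\ne\theta_j$, the query points are off-distribution and the answers carry no guarantee. For the true candidate, the points are distributed as the true marginal in coordinates $\le j$, but later coordinates still need the right law.

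\textbf{Fixing the later coordinates.} Since $\theta_{>j}$ is unknown, draw each later coordinate uniformly from a random candidate. The density ratio to the true law is then at most $M^{d}$, which is too large. So the plan changes: choose $\sigma_j$ between $t_{j+1}$ and the scale of coordinate $j$. Because $\sigma_j$ is much larger than $a_{j+1}$ (geometric scale separation), the smoothed marginals of later coordinates are nearly independent of their labels. Replacing them by a public proxy such as $a_i$ plus noise then changes the law by small TV. This uses the same scale-separation estimates as in \Cref{lem:box-accuracy}.

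\textbf{Handling coordinate $j$ itself.} Average over the uniformly random candidate: the query law is $\frac1M$ times the true law plus other terms. By Markov's inequality, the $L^2$ error bound $\varepsilon_{\mathrm{score}}^2/\sigma_j^2$ gives, for the true-candidate points, an error in $\sigma_j^2\widehat s$ of size at most $\varepsilon_{\mathrm{score}}\sigma_j\sqrt{100}$ with probability $0.99$. This is far below the gap once $\varepsilon_{\mathrm{score}}$ is inverse-polynomial. The TV loss from the proxy is small as well. Thus a majority of the $O(\log d)$ true-candidate queries decode $\theta_j$.

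\textbf{Rejecting wrong candidates.} For wrong candidates $k$, the answers may lie. The plan is to avoid choosing $k$ at all: draw all query points from the true-law proxy with coordinate $j$ drawn from the public mixture, and decode each answer to a label. Each point lies within one noise width of its generating candidate. The rule is to accept a decoded label only if it agrees with the candidate the point was drawn near, and to take the plurality over $O(\log d)$ points.

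\textbf{Unresolved step.} It remains to bound how often wrong-candidate points produce self-consistent lies. I expect this to be the main obstacle and to need a specific argument from the parameter appendix. The first attempt will be to use a fixed $\sigma_j$ for which the mixture has density comparable to the true law up to a factor $M$, and absorb that factor into $\varepsilon_{\mathrm{score}}$. With $\varepsilon_{\mathrm{score}}$ and $M$ both polynomial and the gap polynomially large relative to $\varepsilon_{\mathrm{score}}\sigma_j$, this should still leave each query correct with probability $0.99$.

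\textbf{Conclusion.} A union bound over $d$ rounds, with $O(\log d)$ queries per round giving failure probability $1/(100d)$ each, makes the total failure probability at most $1/10$.
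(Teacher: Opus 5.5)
You have the right outer structure, and it matches the paper: decode $\theta_1,\dots,\theta_d$ in order, draw the known prefix coordinates from the true noisy law, and use scale separation to replace the later coordinates by a public Gaussian proxy. But there is a genuine gap exactly at the step you mark as unresolved, and a better decoding rule cannot close it.

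The $L^2$ guarantee controls answers only on a query law $f$ dominated by the true noisy target $p=(\pi^\theta)^{(\sigma)}$, for instance with $\int f^2/p=O(1)$. Your public mixture in coordinate $j$ gives essentially only the reverse comparison $f\ge p/M$. So about a $1/M$ fraction of queries is on-distribution, and $O(\log d)$ queries per round typically contain no good point. Trying each candidate separately instead costs $\Theta(M\log d)=\poly(d)$ queries per round, which contradicts the $O(\log d)$ budget.

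Points drawn near a wrong candidate $k$ lie at distance about $a_j/M\gg\sigma_j$ from $I_j(\theta_j)$, where $p$ is exponentially small. A legal oracle may therefore answer there with the posterior mean for label $k$. Concretely, once $\sigma_j\le h_j/\sqrt{C\log(M/\varepsilon_{\mathrm{score}})}$, take the field that uses the candidate nearest to $x_j$ in coordinate $j$ and the exact score elsewhere. Its mean-square error is at most $\varepsilon_{\mathrm{score}}^2/\sigma_j^2$ for every value of $\theta_j$, yet it does not depend on $\theta_j$. Every wrong candidate then passes your self-consistency test, and no rule can decode.

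Your proposed fallback fails as well. Getting $f\lesssim p$ with a single noise level forces $\sigma_j\gtrsim a_j$. The resolution is then $\varepsilon_{\mathrm{score}}a_j$, which beats the spacing $a_j/M$ only if $\varepsilon_{\mathrm{score}}\lesssim 1/M$. That is not assumed, since $M\approx 100dP$ for an arbitrary polynomial $P$.

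The missing idea is a coarse-to-fine refinement in which the noise level tracks your current uncertainty. Suppose $|b-c_j(\theta_j)|\le r$ with $r\ge h_j$. Query at $\sigma=4r$, with coordinate $j$ drawn as $b+\sigma Z_j$ and the later coordinates as $\sigma Z$.

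1. Because $\sum_{i>j}U_i^2\le h_j^2$, every suffix point satisfies $\|U-v\|_2^2\le(r+h_j)^2+h_j^2\le 5r^2$. The Gaussian change-of-measure bound then gives $\int f^2/p\le e^{5/16}$.
2. Cauchy--Schwarz and Markov place $V=X_j+\sigma^2\widehat s_{j,\sigma}(X)$ within $h_j+64\varepsilon_{\mathrm{score}}r$ of $c_j(\theta_j)$ with probability above $7/8$. This uses that the exact posterior mean always lies in $I_j(\theta_j)$.
3. A median of $N=O(\log d)$ such queries amplifies this.
4. Start from $b=3a_j/2$ and $r=a_j$, and update $r\mapsto h_j+64\varepsilon_{\mathrm{score}}r$. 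Since $\varepsilon_{\mathrm{score}}$ is inverse-polynomial and $M$ polynomial, $L=O(1)$ rounds bring the radius below $3h_j$, well under half the center spacing $100h_j$.

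This gives $dL=O(d)$ rounds of $O(\log d)$ queries each. A union bound over the $dL$ refinements keeps the failure probability below $1/10$.
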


Tweedie’s formula expresses the posterior mean in terms of the score, and this mean lies in the selected interval. The separation of coordinate scales lets us choose query distributions under which the \(L^2\) score guarantee provides sufficiently accurate estimates of this mean. A constant number of refinements per coordinate, using \(O(\log d)\) queries each, identifies all selected intervals with sufficiently high probability. The algorithm then draws a fresh uniform sample from the recovered box. \Cref{app:box-recovery} gives the algorithm and proof.


%% file: conclusion.tex
We establish polynomial lower bounds on the number of sequential
rounds needed for diffusion sampling with approximate scores.
For a fixed polynomial query budget per round, we prove a
$\widetilde{\Omega}(d^{1/3})$ lower bound for smooth, near-isotropic
Gaussian mixtures and an $\Omega(d)$ lower bound for anisotropic
boxes. The latter matches an $O(d)$-round algorithm for our box family.

These results show that mean-square score accuracy alone does not guarantee polylogarithmic-round sampling, even for smooth, near-isotropic targets. Our constructions use fixed approximate oracles that remain accurate under the corresponding noisy target distribution at every noise level while making informative queries depend on information learned in earlier rounds.

Our results suggest several directions for future work.
\begin{enumerate}
    \item \textbf{Tightening the near-isotropic bounds.}
    For our near-isotropic Gaussian mixture family, a polynomial
    gap remains between the $\widetilde{\Omega}(d^{1/3})$ lower
    bound and the $\widetilde{O}(d^{5/6})$ upper bound obtained by
    specializing the analysis of \cite{ChenEtAl2026HighAccuracy}.
    Since this upper bound uses a sequential algorithm, exploiting
    parallel queries may reduce the number of rounds.
    On the lower bound side, can a sharper analysis or a modified
    construction improve the exponent?

    \item \textbf{Near-isotropic convex bodies.}
    Our box construction gives a linear lower bound for uniform
    sampling from convex bodies, but its proof relies on strong
    anisotropy. Can polynomial round lower bounds also be
    established for uniform sampling from near-isotropic convex
    bodies using approximate score oracles?
    This would show that convexity and near-isotropy together
    do not suffice for polylogarithmic-round parallel sampling in this model.

    \item \textbf{Relating sequential and parallel lower bounds.}
    The $\widetilde{\Omega}(d^{1/2})$ sequential query lower bound
    of \cite{XP26} raises the question of whether a comparable
    parallel round lower bound holds for near-isotropic targets.
    More generally, when can techniques for sequential query
    lower bounds be adapted to withstand polynomially many
    queries per round?
    The counting-oracle setting provides another context for
    studying this relationship \cite{AGR24}.

    \item \textbf{Stronger score guarantees.} Do polynomial parallel-round lower bounds persist under exact score access or uniform pointwise score-error guarantees? Our constructions exploit the freedom to suppress information on atypical query points while maintaining accuracy under the noisy target distribution. Stronger guarantees may require a different mechanism for enforcing sequential dependence.

\end{enumerate}

%% file: appendix_simplified.tex
\providecommand{\poly}{\operatorname{poly}}
\providecommand{\niTV}{d_{\mathrm{TV}}}
\providecommand{\niKL}{D_{\mathrm{KL}}}
\providecommand{\niMI}{\mathrm{MI}}
\providecommand{\niUnif}{\operatorname{Unif}}
\providecommand{\niLip}{\operatorname{Lip}}
\providecommand{\niCov}{\operatorname{Cov}}
\providecommand{\niScoreError}{\varepsilon_{\mathrm{score}}}
\providecommand{\niXP}{\textnormal{[XP26]}}
\providecommand{\niCCDR}{\textnormal{[CCDR26]}}
\let\niEquationLabel\label

\appendix\label{ni:appendix}
\crefalias{section}{appendix}
\crefalias{subsection}{subappendix}

\newpage
\section{Deferred proofs from \Cref{sec:near_isotropic_LB}}\label{appendix_for_near_isotropic}

We now provide details of omitted proofs from \Cref{sec:near_isotropic_LB}.

\paragraph{Shared notation and instance law.}
As in \Cref{sec:near_isotropic_LB}, $d$ is the ambient dimension, $B$ is the
number of hidden blocks, and $m=\lfloor d/B\rfloor$ is their common
dimension. Set $d_{\mathrm{pad}}=d-Bm$. The padding coordinates are independent
uniform signs; when $d_{\mathrm{pad}}=0$, their law is the unit mass on
the zero-dimensional space. For a nonempty codebook $S_j\subseteq\{-1,+1\}^m$, write
$n_j=|S_j|$ and $\kappa_j=(\log n_j)/m$.
For the full instance $\bfS=(S_1,\ldots,S_B)$ and any $\tau>0$, define
\begin{equation}\label{ni:center-law}
 \mu^{\bfS}=\left(\bigotimes_{j=1}^B\niUnif(S_j)\right)\otimes\niUnif(\{-1,+1\}^{d_{\mathrm{pad}}}),
 \qquad
 \pi^{\bfS}_\tau=\mu^{\bfS}*\mathcal N(0,\tau^2I_d).
\end{equation}
Fix a target smoothing parameter $\gamma>0$, and set
\begin{equation}\label{ni:target}
 \pi^{\bfS}=\pi^{\bfS}_\gamma,\qquad s^{\bfS}_\tau=\nabla\log\pi^{\bfS}_\tau,\qquad
 (\pi^{\bfS})^{(\sigma)}=\pi^{\bfS}_{\sqrt{\gamma^2+\sigma^2}}.
\end{equation}
Thus $\sigma>0$ is the additional query noise, while
$\tau=\sqrt{\gamma^2+\sigma^2}$ is total noise. The target itself is at
$\tau=\gamma$. When a random instance is used, its codebooks are
independent, and $S_j$ is uniform among all $n_j$-element subsets of the
cube. All instance probabilities refer to this unconditioned product law
unless conditioning is explicitly indicated.  We use $S$ for a single codebook and
$\mathbf{S}=(S_1,\ldots,S_B)$ for the full collection of codebooks. All logarithms are natural.

\paragraph{Public parameters.}
Unless stated otherwise, fix an integer query budget
$P=P(d)=\operatorname{poly}(d)\ge 1$, a score accuracy
$0<\varepsilon_{\mathrm{score}}
=\operatorname{poly}(d)^{-1}\le 1$ tending to zero, and a tolerance
$\operatorname{poly}(d)^{-1}\le \rho=\rho(d)\le 1/10$.
These polynomial bounds are fixed before the instance family is
constructed. All public parameter choices are independent of the
algorithm and sampled instance. Implicit constants may depend on
$\gamma$, these fixed polynomial bounds, and the fixed rate interval
chosen below.

\subsection{Parameters and near-isotropy}
\label{ni:parameters}
In this section, we choose the block dimensions and codebook sizes and verify
near-isotropy.

As in \Cref{sec:near_isotropic_LB}, let
$\alpha_\gamma=\operatorname{MI}(U;U+\gamma Z)$, for independent
$U\sim\operatorname{Unif}(\{-1,+1\})$ and $Z\sim\mathcal N(0,1)$.
Fix
\[
 0<\kappa_{\min}<\kappa_{\max}<\alpha_\gamma,
 \qquad c_0=\frac{\alpha_\gamma-\kappa_{\max}}2.
\]
For any proposed dimensions $1\le B\le d$ and
$\max\{2,\sqrt2\,\gamma\}\le m\le d$, set
\begin{equation}\label{ni:global-budgets}
 \delta=\frac{\rho^2}{100Pd^3},
 \qquad \eta=\frac{\niScoreError}{8B},
 \qquad
 \zeta(\tau)=\min\left\{\frac{\rho^2}{100},
                   \frac{\eta^2\tau^2}{16m}\right\}
 \quad(\tau\ge\gamma).
\end{equation}
These are the hiding, local score-error, and gate-failure budgets.
Choose
\begin{equation}\label{eqn:def_H}
    H=\Theta_\gamma\!\left(
        \log\frac{dP}{\varepsilon_{\mathrm{score}}\rho}
    \right),
\end{equation}
with a sufficiently large implicit constant, uniformly over the
proposed dimensions $B,m\le d$.
For the fixed polynomial budgets, $H=\Theta(\log d)$.

\begin{lemma}[Public parameter choice]
\label{ni:parameter-lemma}\label{ni:packing}
For all sufficiently large $d$, there are public integers
$B,n_1,\ldots,n_B$ such that, with $m=\lfloor d/B\rfloor$,
\begin{equation}\label{ni:dimension-scales}
 B=\Theta\!\left((d/\log d)^{1/3}\right),\qquad
 m=\Theta\!\left(d^{2/3}(\log d)^{1/3}\right),\qquad
 H=\Theta(\log d).
\end{equation}
Each $n_j$ is a power of two, the rates $\kappa_j=(\log n_j)/m$
belong to $[\kappa_{\min},\kappa_{\max}]$, and
\begin{equation}\label{ni:packed-gaps}
 \kappa_{j+1}-\kappa_j\ge
 \frac{\kappa_{\max}-\kappa_{\min}}{2B}\qquad(j<B).
\end{equation}
The constant factor in $B$ can be chosen so that these gaps exceed any
prescribed constant multiple of $\sqrt{H/m}+H/m$.
The product has public cardinality
$N_d=2^{d_{\mathrm{pad}}}\prod_jn_j=\exp(\Theta(d))$.
Furthermore, $m\ge\max\{2,\sqrt2\,\gamma\}$ and
\begin{equation}\label{ni:parameter-test-margin}
 \begin{gathered}
 \frac1\gamma\sqrt{\frac{8\log(1/\zeta(\gamma))}{m}}\le c_0,
 \qquad
 \zeta(\gamma),\ e^{-c_0m}\le\frac{\rho^2}{100},\qquad
 RPB\delta\le\frac{\rho^2}{100d}\quad(0\le R<B).
 \end{gathered}
\end{equation}
\end{lemma}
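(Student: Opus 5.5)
The plan is a direct construction followed by verification of each inequality, with $B$ chosen of the form $B=\lfloor c\,(d/\log d)^{1/3}\rfloor$ for a small constant $c>0$ fixed at the end. First, $m=\lfloor d/B\rfloor=\Theta(d^{2/3}(\log d)^{1/3})$ follows immediately, and $m\ge\max\{2,\sqrt2\gamma\}$ holds for large $d$. Since $P$, $1/\varepsilon_{\mathrm{score}}$ and $1/\rho$ are polynomially bounded and $B,m\le d$, the definition \eqref{eqn:def_H} gives $H=\Theta(\log d)$ uniformly.

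Next I would place the rates. I would set target reals $\kappa_j^\ast=\kappa_{\min}+(j-\tfrac12)(\kappa_{\max}-\kappa_{\min})/B$ for $j=1,\ldots,B$, and take $n_j=2^{k_j}$ with $k_j=\lfloor m\kappa_j^\ast/\log 2\rfloor$. Then $\kappa_j=k_j\log2/m$ differs from $\kappa_j^\ast$ by at most $\log 2/m$. Since $B\ll m$, this rounding error is $o(1/B)$, so the rates stay in $[\kappa_{\min},\kappa_{\max}]$ and consecutive gaps are at least $(\kappa_{\max}-\kappa_{\min})/B-2\log2/m\ge(\kappa_{\max}-\kappa_{\min})/(2B)$ for large $d$. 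We also need $n_j\le 2^m$ so that subsets exist, and this holds because $\kappa_{\max}<\alpha_\gamma\le\log 2$.

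The key comparison is $1/B$ versus $\sqrt{H/m}+H/m$. Using $H/m=\Theta(\log d/m)$ and $m\asymp d/B$, we get $\sqrt{H/m}\asymp\sqrt{B\log d/d}$. The inequality $(\kappa_{\max}-\kappa_{\min})/(2B)\ge C\sqrt{B\log d/d}$ is then equivalent to $B^3\le c'(d/\log d)$, where $c'$ depends on $C$, the implicit constant in $H$, and the rate interval. Shrinking $c$ therefore achieves any prescribed $C$, and the $H/m$ term is of lower order. The one delicate point, and what I expect to be the main obstacle, is circularity: $H$ is defined in terms of $B$ and $m$ through $\eta$, $\delta$ and $\zeta$. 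I would resolve it using the stated uniformity of $H=\Theta(\log d)$ over all $B,m\le d$. This gives absolute constants bounding $H/\log d$ before $c$ is chosen, so $c$ may be fixed afterward.

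The remaining conditions in \eqref{ni:parameter-test-margin} are routine:
\begin{itemize}
\item $N_d=\exp(\Theta(d))$, because $\log N_d=d_{\mathrm{pad}}\log2+m\sum_j\kappa_j$ lies between $\kappa_{\min}Bm$ and $d\log 2$.
\item $\zeta(\gamma)\le\rho^2/100$ holds by definition.
\item $\log(1/\zeta(\gamma))=O(\log d)$, since $\eta^2\gamma^2/(16m)$ is inverse polynomial. Hence $\gamma^{-1}\sqrt{8\log(1/\zeta(\gamma))/m}=O(\sqrt{\log d/m})\to0$, which is at most $c_0$ for large $d$.
\item $e^{-c_0m}\le\rho^2/100$, since $m$ grows polynomially while $\rho^2$ is only inverse polynomial.
\item $RPB\delta\le B^2P\rho^2/(100Pd^3)\le\rho^2/(100d)$, using $R<B\le d$.
\end{itemize}
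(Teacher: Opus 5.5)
Your proposal is correct and follows essentially the same route as the paper: take $B$ a small constant multiple of $(d/\log d)^{1/3}$, round equally spaced target rates to powers of two, compare the spacing $1/B$ with $\sqrt{H/m}\asymp\sqrt{B\log d/d}$, and check the remaining margins using $H/m\to0$, $m/\log d\to\infty$ and $B\le d$. The paper rounds left-endpoint targets up, whereas you round midpoint targets down, which makes no difference. You also make explicit two points the paper leaves implicit, and handle both correctly: that $n_j\le 2^m$ (via $\alpha_\gamma\le\log 2$), and that the constant in $H$ is fixed uniformly over $B,m\le d$ before $c$ is chosen.
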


\begin{proof}
Choose $B$ to be a sufficiently small constant multiple of
$(d/\log d)^{1/3}$, rounded down, and set $m=\lfloor d/B\rfloor$.
This gives \eqref{ni:dimension-scales}. Define
\[
 n_j=2^{\left\lceil\frac m{\log2}
       \left(\kappa_{\min}
          +\frac{j-1}{B}(\kappa_{\max}-\kappa_{\min})\right)
                      \right\rceil}.
\]

Rounding changes each prescribed rate by less than $\log2/m$.
Consequently, for sufficiently large $d$, all rates lie in the fixed
interval and
\[
 \kappa_{j+1}-\kappa_j\ge
 \frac{\kappa_{\max}-\kappa_{\min}}B-\frac{\log2}m
 \ge\frac{\kappa_{\max}-\kappa_{\min}}{2B}.
\]
Since $H=O(\log d)$ and $m=\Theta(d/B)$,
\[
 B\left(\sqrt{\frac Hm}+\frac Hm\right)
 =O\!\left(\sqrt{\frac{B^3\log d}{d}}
                    +\frac{B^2\log d}{d}\right).
\]
The first term is as small as needed by the choice of the constant in
$B$, while the second and $B/m$ tend to zero. This proves
\eqref{ni:packed-gaps} and the claimed separation. In particular, the gaps can be made larger than the uniform
transition width in \Cref{ni:window}.
Moreover,
\[
\log N_d=d_{\mathrm{pad}}\log2+m\sum_{j=1}^B\kappa_j
          =\Theta(d),
\]
because $Bm=d-O(B)$ and the rates stay in a fixed positive interval.

Finally, $H/m\to0$ proves the first margin in
\eqref{ni:parameter-test-margin}, and $m/\log d\to\infty$ makes
$e^{-c_0m}$ smaller than the inverse-polynomial quantity $\rho^2/100$.
The gate-failure bound is immediate from its definition. Since
$R,B\le d$, the hiding budget gives
$RPB\delta\le d^2P\delta=\rho^2/(100d)$.
\end{proof}

\paragraph{Instance geometry.}
Each target $\pi^{\mathbf S}$ is an equal-weight mixture of $N_d$
Gaussians with distinct centers and common component covariance
$\gamma^2 I_d$.

\begin{lemma}[Near-isotropy]\label{ni:near-isotropy}
For the public parameters above, define
\begin{equation}\label{ni:moment-event}
 G=\left\{\bfS:
 \|\mathbb E_{\pi^{\bfS}}X\|_2\le d^{-10},\quad
 \|\operatorname{Cov}(\pi^{\bfS})-(1+\gamma^2)I_d\|_{\mathrm{op}}
                        \le d^{-10}\right\}.
\end{equation}
Then $\Pr_{\bfS}(G^c)\le\exp(-\Omega(m))$.
More generally, for arbitrary prescribed codebook sizes and every
$t\in(0,1]$, the probability that either norm in
\eqref{ni:moment-event} exceeds $t$ is at most
$4m^2t^{-2}\sum_jn_j^{-1}$.
\end{lemma}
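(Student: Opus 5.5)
The plan is to compute the mean and covariance of \(\pi^{\bfS}\) exactly in terms of the center law \(\mu^{\bfS}\). Since \(\pi^{\bfS}=\mu^{\bfS}*\mathcal N(0,\gamma^2I_d)\), we have \(\mathbb E_{\pi^{\bfS}}X=\mathbb E_{\mu^{\bfS}}Y\) and \(\operatorname{Cov}(\pi^{\bfS})=\operatorname{Cov}(\mu^{\bfS})+\gamma^2I_d\). Because \(\mu^{\bfS}\) is a product over blocks and padding, its covariance is block diagonal. The padding block is exactly \(I_{d_{\mathrm{pad}}}\) with mean zero. For block \(j\), the mean is \(v_j=n_j^{-1}\sum_{y\in S_j}y\) and the covariance is \(C_j-v_jv_j^\top\), where \(C_j=n_j^{-1}\sum_{y\in S_j}yy^\top\). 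Every cube vertex has \(y_k^2=1\), so \(C_j\) has unit diagonal and the deviation \(C_j-I_m\) is purely off-diagonal.

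Both norms are therefore controlled by the blockwise quantities \(\|v_j\|_2\) and \(\|C_j-I_m\|_{\mathrm F}\). Concretely, we have
\[\|\mathbb E X\|_2^2=\sum_j\|v_j\|_2^2,\qquad \|\operatorname{Cov}(\pi^{\bfS})-(1+\gamma^2)I_d\|_{\mathrm{op}}\le\max_j\bigl(\|C_j-I_m\|_{\mathrm F}+\|v_j\|_2^2\bigr).\]
We bound the second moments of these quantities under uniform sampling without replacement. Each coordinate \(v_{j,k}\) is the average of \(n_j\) values drawn without replacement from a population of \(\pm1\) entries with population mean \(0\) and variance \(1\). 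Its variance is therefore at most \(1/n_j\), because the finite-population correction only decreases it. The same reasoning applies to each off-diagonal entry \((C_j)_{kl}\), \(k\ne l\): the population values \(y_ky_l\) have mean \(0\) and variance \(1\). This gives \(\mathbb E\|v_j\|_2^2\le m/n_j\) and \(\mathbb E\|C_j-I_m\|_{\mathrm F}^2\le m^2/n_j\).

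With these moment bounds, Chebyshev/Markov finish the argument, and here the main care is in matching the constant \(4m^2t^{-2}\). First, \(\Pr[\sum_j\|v_j\|_2^2>t^2]\le t^{-2}m\sum_jn_j^{-1}\). Second, take a union bound over blocks for the event \(\|C_j-I_m\|_{\mathrm F}>t/2\) or \(\|v_j\|_2^2>t/2\). For \(t\le1\) we have \(\|v_j\|_2^2>t/2\Rightarrow\|v_j\|_2^2>t^2/2\), so both events have probability at most \(4m^2t^{-2}n_j^{-1}\) up to a factor of \(2\) in the \(m\) term. Summing, and using \(m\le m^2\) to absorb the lower-order terms, gives the general bound \(4m^2t^{-2}\sum_jn_j^{-1}\); if the constant comes out slightly larger, I would tighten the split.

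For the specialization \(t=d^{-10}\), I use the parameter lemma: \(n_j\ge e^{\kappa_{\min}m}\), with \(B,m\le d\) and \(m=\Theta(d^{2/3}(\log d)^{1/3})\). The bound is then \(4d^{23}e^{-\kappa_{\min}m}=\exp(-\Omega(m))\), since \(m/\log d\to\infty\). The only delicate point is verifying the without-replacement variance bound, and that follows from the standard finite-population formula \(\operatorname{Var}=\frac{\sigma^2}{n}\cdot\frac{N-n}{N-1}\).
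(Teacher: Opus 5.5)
Your proof is correct and follows essentially the same route as the paper: the exact mean and covariance of the center law, the without-replacement bound $\frac{2^m-n}{n(2^m-1)}\le\frac1n$ applied to $y_k$ and to $y_ky_l$ ($k\ne l$), and then Markov's inequality. The paper's bookkeeping is slightly slicker. It applies Markov once to $Z_{\bfS}=\|v\|_2^2+\sum_j\|C_j-I_m\|_{\mathrm F}^2$ at level $t^2/4$ and uses $\sqrt{Z_{\bfS}}+Z_{\bfS}\le t/2+t^2/4\le t$, which yields exactly $4m^2t^{-2}\sum_jn_j^{-1}$.

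In your split there is a small constant slip. With $\mathbb E\|C_j-I_m\|_{\mathrm F}^2\le m^2/n_j$, your three terms sum to $(4m^2+3m)t^{-2}\sum_jn_j^{-1}$, and this cannot be absorbed into $4m^2$ via $m\le m^2$. Use instead the exact count of $m(m-1)$ off-diagonal entries, which you already observed. That gives $4m(m-1)+2m+m=4m^2-m\le 4m^2$, so the stated constant does hold.
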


\begin{proof}
Write
\[
 v_j=\frac1{n_j}\sum_{y\in S_j}y,\qquad
 M_j=\frac1{n_j}\sum_{y\in S_j}yy^\top,\qquad
 v=(v_1,\ldots,v_B,0_{d_{\mathrm{pad}}}).
\]
For any mean-zero, variance-one function $f$ on the cube, sampling
without replacement gives
\[
 \mathbb E_S\left(\frac1n\sum_{y\in S}f(y)\right)^2
       =\frac{2^m-n}{n(2^m-1)}\le\frac1n.
\]
Indeed, for a uniform ordered distinct pair,
$\mathbb E f(Y)f(Y')=-1/(2^m-1)$. Apply this identity to $y_a$
and $y_ay_b$ for $a\ne b$. The diagonal entries of $M_j$ are one, so
\[
Z_{\bfS}:=\|v\|_2^2+\sum_j\|M_j-I_m\|_F^2,
 \qquad \mathbb E_{\bfS}Z_{\bfS}\le m^2\sum_j\frac1{n_j}.
\]
For every fixed instance,
\[
 \mathbb E_{\pi^{\bfS}}X=v,\qquad
 \operatorname{Cov}(\pi^{\bfS})
 =\operatorname{diag}(M_1-v_1v_1^\top,\ldots,
       M_B-v_Bv_B^\top,I_{d_{\mathrm{pad}}})+\gamma^2I_d.
\]
If $Z_{\bfS}\le t^2/4$, then the mean has norm at most $t/2$ and
\[
 \|\operatorname{Cov}(\pi^{\bfS})-(1+\gamma^2)I_d\|_{\mathrm{op}}
 \le\sqrt{Z_{\bfS}}+Z_{\bfS}\le t/2+t^2/4\le t.
\]
Markov's inequality therefore gives
\[
\Pr_{\bfS}[\text{either norm exceeds }t]
 \le\frac{4m^2}{t^2}\sum_j\frac1{n_j}.
\]
For the chosen sizes, $n_j\ge e^{\kappa_{\min}m}$ and $Bm\le d$.
Taking $t=d^{-10}$ yields
$\Pr_{\bfS}(G^c)\le4d^{21}m e^{-\kappa_{\min}m}=\exp(-\Omega(m))$.
\end{proof}



\subsection{Technical inputs from \cite{XP26}}
\label{ni:analytic}
In this subsection, we state variants of technical inputs from \cite{XP26} we need for the proof.

For the one-block estimates, let $m\ge1$, $V_m=\{-1,+1\}^m$, and
$U_m=\niUnif(V_m)$. For every nonempty $S\subseteq V_m$ and noise $\tau>0$, set
\[
 p^S_\tau=\niUnif(S)*\mathcal N(0,\tau^2I_m),\qquad
 q_\tau=U_m*\mathcal N(0,\tau^2I_m),\qquad
 s^S_\tau=\nabla\log p^S_\tau,\qquad s^0_\tau=\nabla\log q_\tau.
\]
Thus $s^0_\tau$ is the null score from \Cref{sec:near_isotropic_LB}.
Let $\varphi_\tau$ be the density of $\mathcal N(0,\tau^2I_m)$, and let
$Y\sim U_m$ and $Z\sim\mathcal N(0,I_m)$ be independent. Define
\[
 L_\tau(y,x)=\frac{\varphi_\tau(x-y)}{q_\tau(x)},\qquad
 \ell_\tau(y,x)=\log L_\tau(y,x),\qquad
 \alpha_\tau=\frac1m\mathbb E\ell_\tau(Y,Y+\tau Z).
\]
By the product structure, $\alpha_\tau$ equals
$\niMI(Y^0;Y^0+\tau Z^0)$ for independent
$Y^0\sim\niUnif(\{-1,+1\})$ and $Z^0\sim\mathcal N(0,1)$, exactly as in
\eqref{eq:alpha-definition}. We write
$\|f\|_{\infty,2} :=\sup_x\|f(x)\|_2$ and $\niLip(f)$ for the global
Euclidean Lipschitz constant. The mixture formula gives
\begin{equation}\label{ni:likelihood-identities}
 \mathbb E_{Y\sim U_m}L_\tau(Y,x)=1,\qquad L_\tau(y,x)\le2^m,\qquad
 \frac{p^S_\tau(x)}{q_\tau(x)}=\frac1{|S|}\sum_{y\in S}L_\tau(y,x).
\end{equation}
The numbered results below refer to arXiv:2604.10857v1 of
\cite{XP26}. We use its Gaussian-smoothing identities and KL--Fisher comparison. The without-replacement adaptation is proved here.

\begin{lemma}[Posterior score bounds; {\cite{XP26}, Proposition B.6}]
\label{ni:posterior}
Let $\nu$ be supported on $[-1,1]^m$, let $Y\sim\nu$, and let
$Z\sim\mathcal N(0,I_m)$ be independent. For every $\tau>0$, define
$s_{\nu,\tau}=\nabla\log(\nu*\mathcal N(0,\tau^2I_m))$. Then
\begin{align}
 s_{\nu,\tau}(x)&=\frac{\mathbb E[Y\mid Y+\tau Z=x]-x}{\tau^2},
 \label{ni:posterior-score}\\
 \nabla s_{\nu,\tau}(x)&=-\frac{I_m}{\tau^2}
       +\frac{\niCov(Y\mid Y+\tau Z=x)}{\tau^4}.
 \label{ni:posterior-hessian}
\end{align}
Consequently,
\begin{align}
 \niLip(s_{\nu,\tau})&\le\frac1{\tau^2}+\frac m{\tau^4},\qquad
 \|s^S_\tau-s^0_\tau\|_{\infty,2}\le\frac{2\sqrt m}{\tau^2},
 \label{ni:score-bounds}\\
 \|\nabla_x\ell_\tau(y,x)\|_2&\le\frac{2\sqrt m}{\tau^2}
       \qquad(y\in V_m).
 \label{ni:likelihood-gradient}
\end{align}
\end{lemma}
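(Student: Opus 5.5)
The plan is to derive everything from the Gaussian-mixture (posterior) representation of the smoothed density and then differentiate. Write $p(x)=\int\varphi_\tau(x-y)\,\nu(dy)$. Differentiating under the integral, which is justified because $\nu$ has bounded support and $\varphi_\tau$ is smooth with integrable derivatives, gives
\[
\nabla p(x)=\int\frac{y-x}{\tau^2}\varphi_\tau(x-y)\,\nu(dy).
\]
Dividing by $p(x)$ yields \eqref{ni:posterior-score}. Here the posterior is the tilted measure $\nu(dy)\varphi_\tau(x-y)/p(x)$, matching the Tweedie convention in the preliminaries.

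For \eqref{ni:posterior-hessian}, I will differentiate the posterior mean $\mathbb E[Y\mid x]=\int y\,w_x(y)\,\nu(dy)$, where $w_x(y)=\varphi_\tau(x-y)/p(x)$. The key identity is $\nabla_x\log w_x(y)=(y-x)/\tau^2-s_{\nu,\tau}(x)=(y-\mathbb E[Y\mid x])/\tau^2$. Using it, the Jacobian of the posterior mean equals $\operatorname{Cov}(Y\mid x)/\tau^2$. Dividing by $\tau^2$ and subtracting $I_m/\tau^2$ from differentiating $-x/\tau^2$ gives the Hessian formula.

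The consequences then follow in turn.

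\textbf{Lipschitz bound.} The operator norm of $\nabla s$ is at most $\|I_m/\tau^2\|_{\mathrm{op}}+\|\operatorname{Cov}\|_{\mathrm{op}}/\tau^4$. A covariance of a vector supported in $[-1,1]^m$ has operator norm at most its trace, which is at most $\mathbb E\|Y\|_2^2\le m$. This gives $1/\tau^2+m/\tau^4$, and the mean value inequality converts the Jacobian bound into a global Lipschitz constant.

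\textbf{Score difference.} Both $s^S_\tau$ and $s^0_\tau$ come from measures on $V_m\subseteq[-1,1]^m$. By \eqref{ni:posterior-score}, their difference is $(\mathbb E_S[Y\mid x]-\mathbb E_0[Y\mid x])/\tau^2$. Both posterior means lie in $[-1,1]^m$, so the difference has norm at most the diameter $2\sqrt m$.

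\textbf{Likelihood gradient.} We have
\[
\nabla_x\ell_\tau(y,x)=-(x-y)/\tau^2-s^0_\tau(x)=(y-\mathbb E_0[Y\mid x])/\tau^2,
\]
and since $y\in V_m$ and the posterior mean lies in $[-1,1]^m$, this again has norm at most $2\sqrt m/\tau^2$.

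The only mildly delicate step is justifying differentiation under the integral for a general (possibly non-discrete) $\nu$. I would handle it by dominated convergence: on compact $x$-sets, $|y-x|\varphi_\tau(x-y)$ and its derivatives are uniformly bounded for $y\in[-1,1]^m$, and $p(x)>0$ everywhere. Everything else is algebra.
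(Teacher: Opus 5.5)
Your proposal is correct and matches the paper's argument. The paper cites \cite{XP26}, Proposition B.6, for the first- and second-order Tweedie identities, which you instead derive directly with a valid dominated-convergence justification. It then obtains the three bounds exactly as you do: from posterior means lying in $[-1,1]^m$, from the posterior covariance having operator norm at most $m$, and from the identity $\nabla_x\ell_\tau(y,x)=\tau^{-2}\bigl(y-\mathbb E_{Y\sim U_m}[Y\mid Y+\tau Z=x]\bigr)$.
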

\begin{proof}[Justification of the specialization]
The identities and score bounds are the specialization of \niXP,
Proposition B.6(a)--(d), with support-radius parameter $1$.
Posterior means lie in $[-1,1]^m$, and the posterior covariance has operator
norm at most $m$. These facts give \eqref{ni:score-bounds}. Finally,
\[
 \nabla_x\ell_\tau(y,x)=\tau^{-2}
 \bigl(y-\mathbb E_{Y\sim U_m}[Y\mid Y+\tau Z=x]\bigr),
\]
which proves \eqref{ni:likelihood-gradient}.
\end{proof}

\begin{lemma}[Information-density concentration and nearby noise]
\label{ni:concentration}
For every $\tau>0$, the law of $\ell_\tau(y,y+\tau Z)$ is the same for all
$y\in V_m$, and its mean is $m\alpha_\tau$. For either sign and every $t\ge0$,
\begin{equation}\label{ni:information-tail}
 \Pr_Z\!\left[\pm\bigl(\ell_\tau(y,y+\tau Z)-m\alpha_\tau\bigr)\ge t\right]
 \le\exp\!\left(-\frac{\tau^2t^2}{8m}\right).
\end{equation}
For $0<u\le v$,
\begin{equation}\label{ni:nearby-noise}
 0\le\alpha_u-\alpha_v\le\log(v/u).
\end{equation}
\end{lemma}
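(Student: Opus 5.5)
The lemma has three parts: the symmetry of the law of $\ell_\tau(y,y+\tau Z)$ together with its mean, the two-sided tail bound \eqref{ni:information-tail}, and the monotonicity and log-Lipschitz bound \eqref{ni:nearby-noise}. I would prove them in that order.

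\emph{Symmetry and mean.} For each sign vector $y\in V_m$, let $D_y$ be the diagonal sign-flip with $D_y\mathbf 1=y$; it is an orthogonal map that preserves $V_m$. Then $q_\tau$ and $\varphi_\tau$ are both invariant under $D_y$, so
\[
\ell_\tau(y,y+\tau Z)=\ell_\tau(\mathbf 1,\mathbf 1+\tau D_yZ),
\]
and $D_yZ\overset{d}{=}Z$. Hence the law does not depend on $y$. Its mean therefore equals the average over $Y\sim U_m$, which is $m\alpha_\tau$ by the definition of $\alpha_\tau$.

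\emph{Concentration.} For fixed $y$, consider $F(z)=\ell_\tau(y,y+\tau z)$. By the chain rule and \eqref{ni:likelihood-gradient},
\[
\|\nabla_z F\|_2=\tau\,\|\nabla_x\ell_\tau(y,x)\|_2\le \frac{2\sqrt m}{\tau},
\]
so $F$ is $L$-Lipschitz with $L=2\sqrt m/\tau$. Gaussian concentration for Lipschitz functions, in the Tsirelson--Ibragimov--Sudakov form $\Pr[\pm(F-\mathbb E F)\ge t]\le \exp(-t^2/(2L^2))$, then gives
\[
\exp\!\left(-\frac{\tau^2 t^2}{8m}\right),
\]
which is exactly \eqref{ni:information-tail}. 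No product decomposition is needed, although one could alternatively note that $\ell_\tau$ is a sum of $m$ i.i.d.\ one-dimensional terms. The only point to check is that the constant in \eqref{ni:likelihood-gradient} is the one that yields the $8m$ in the exponent, and the arithmetic above confirms it.

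\emph{Nearby noise.} Since $\alpha_\tau=\mathrm{MI}(Y^0;Y^0+\tau Z^0)$, monotonicity is data processing. For $u\le v$, the variable $Y^0+vZ^0$ has the same law as $Y^0+uZ^0+\sqrt{v^2-u^2}\,Z'$, which is a degradation of $Y^0+uZ^0$. Hence $\alpha_v\le\alpha_u$.

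For the upper bound I would use the I-MMSE relation in the snr parametrization $s=1/\tau^2$:
\[
\frac{d}{ds}\,\mathrm{MI}=\tfrac12\,\mathrm{mmse}(s),\qquad \mathrm{mmse}(s)\le \operatorname{Var}(Y^0)=1.
\]
This gives $\alpha_u-\alpha_v\le \tfrac12(u^{-2}-v^{-2})$, which has the wrong form for large noise and $u$ near zero. So instead I would write
\[
\frac{d\alpha_\tau}{d\log\tau}=-\frac{\mathrm{mmse}(1/\tau^2)}{\tau^2},
\]
and bound $\mathrm{mmse}(s)\le 1/s=\tau^2$, using the linear estimator. Therefore $|d\alpha/d\log\tau|\le 1$, and integrating from $\log u$ to $\log v$ gives $\alpha_u-\alpha_v\le\log(v/u)$.

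The main obstacle I expect is making this I-MMSE step rigorous at the level of generality needed. Differentiability of $\alpha_\tau$ holds since the input is bounded, and the mmse bound is elementary, so if the appeal to I-MMSE is unwelcome I would fall back on a direct scaling argument. Alternatively, one can cite the corresponding estimate in \cite{XP26}.
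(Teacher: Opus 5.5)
Your proposal is correct. The symmetry argument, the Lipschitz-plus-Gaussian-concentration derivation of \eqref{ni:information-tail} (with $L=2\sqrt m/\tau$ producing the $8m$), and the data-processing proof of $\alpha_v\le\alpha_u$ are the same as in the paper.

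You differ only in the upper bound $\alpha_u-\alpha_v\le\log(v/u)$. You pass to the snr parametrization and use the I-MMSE identity to get $\frac{d\alpha_\tau}{d\log\tau}=-\mathrm{mmse}(\tau^{-2})/\tau^2$. You then bound the mmse by $\tau^2$ using the trivial estimator that returns the observation itself. The paper instead uses an entropy decomposition. With $X_u=Y+uZ_1$ and $X_v=X_u+\sqrt{v^2-u^2}\,Z_2$, one has $m\alpha_\tau=h(X_\tau)-h(\tau Z_1)$. Also $h(X_v)\ge h(X_v\mid Z_2)=h(X_u)$, since conditioning cannot increase entropy and conditioning on $Z_2$ only translates $X_u$. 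Hence $m(\alpha_u-\alpha_v)\le h(vZ_1)-h(uZ_1)=m\log(v/u)$.

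The paper's route needs no differentiability and no appeal to the I-MMSE theorem, so the rigor issue you flagged does not arise. Your route yields a pointwise bound on the derivative, and it can be sharpened. Using the linear-MMSE bound $\mathrm{mmse}(s)\le 1/(1+s)$ instead of $1/s$ gives $\alpha_u-\alpha_v\le\frac12\log\frac{1+u^{-2}}{1+v^{-2}}$, which is at most $\log(v/u)$. The lemma does not need this refinement.
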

\begin{proof}
Sign symmetry of the cube and Gaussian noise makes the law independent of
$y$. Equation~\eqref{ni:likelihood-gradient} makes
$z\mapsto\ell_\tau(y,y+\tau z)$ a $2\sqrt m/\tau$-Lipschitz function.
Gaussian concentration gives \eqref{ni:information-tail}. This is the
Gaussian-Lipschitz argument used in \cite{XP26}, Proposition B.8(b); the stronger
noise-uniform estimates there are not needed.

For \eqref{ni:nearby-noise}, couple $X_u=Y+uZ_1$ and
$X_v=X_u+\sqrt{v^2-u^2}Z_2$, with $Y\sim U_m$ and independent standard
Gaussians $Z_1,Z_2$. Data processing gives $\alpha_u\ge\alpha_v$.
Writing $h$ for differential entropy, independence gives, for $u<v$,
$h(X_v)\ge h(X_v\mid Z_2)=h(X_u)$. Equality handles $u=v$. Thus
\[
 \alpha_u-\alpha_v
 =\frac{h(X_u)-h(X_v)+h(vZ_1)-h(uZ_1)}m\le\log(v/u).
\]
\end{proof}

\paragraph{KL--Fisher comparison.}
For $\gamma>0$ and $m\ge\max\{2,\sqrt2\,\gamma\}$, define
\[
 \widetilde\tau=\tau\sqrt{1-\frac{\gamma^2}{m^2}},\qquad
 F_\tau(S)=\mathbb E_{X\sim p^S_\tau}
              \|s^S_\tau(X)-s^0_\tau(X)\|_2^2.
\]
Here $F_\tau(S)$ is the squared error of the null score on the hidden
smoothed distribution. These definitions apply at every positive noise level.

\begin{lemma}[KL--Fisher comparison; {\cite{XP26}, Proposition B.7(b)}]
\label{ni:kl-fisher}
For every nonempty $S\subseteq V_m$ and every $\tau\ge\gamma$,
\begin{equation}\label{ni:fisher-comparison}
 F_\tau(S)\le\frac{2em^2}{\gamma^2\tau^2}
       \niKL(p^S_{\widetilde\tau}\|q_{\widetilde\tau}).
\end{equation}
\end{lemma}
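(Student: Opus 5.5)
The plan is to pass between Fisher divergence and relative entropy along the heat flow, parametrized by the variance $s=\tau^2$. Write $p_s=p^S_{\sqrt s}$, $q_s=q_{\sqrt s}$ and $\Phi(s)=F_{\sqrt s}(S)$. Both families solve the heat equation $\partial_s p=\tfrac12\Delta p$, so de Bruijn's identity for the relative entropy gives
\[
 \frac{d}{ds}\niKL(p_s\|q_s)=-\tfrac12\,\mathbb E_{p_s}\bigl\|\nabla\log p_s-\nabla\log q_s\bigr\|_2^2=-\tfrac12\Phi(s).
\]
Set $s_1=\widetilde\tau^2=\tau^2(1-\gamma^2/m^2)$ and $s_2=\tau^2$, so that $s_2-s_1=\gamma^2\tau^2/m^2$. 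Integrating over $[s_1,s_2]$ and using $\niKL(p_{s_2}\|q_{s_2})\ge0$ gives
\[
 \niKL(p^S_{\widetilde\tau}\|q_{\widetilde\tau})\ \ge\ \tfrac12\int_{s_1}^{s_2}\Phi(s)\,ds .
\]
The claimed constant is exactly $2e/(s_2-s_1)$. It therefore suffices to prove the following \emph{local non-increase up to a factor $e$}:
\[
 \Phi(s)\ \ge\ e^{-1}\,\Phi(s_2)\qquad\text{for all } s\in[s_1,s_2].
\]

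To get this, I would run a Gronwall argument backwards in $s$. I would differentiate $\Phi$ along the heat flow, writing $h_s=p_s/q_s$ and $\Phi(s)=\mathbb E_{p_s}\|\nabla\log h_s\|^2$. The standard computation, with $\partial_s\log h_s$ expressed through $\Delta\log h_s$, $\|\nabla\log p_s\|^2-\|\nabla\log q_s\|^2$ and integration by parts against $p_s$, produces two kinds of terms:
\begin{enumerate}
 \item a nonpositive term $-\mathbb E_{p_s}\|\nabla^2\log h_s\|_F^2$;
 \item cross terms of the form $\mathbb E_{p_s}\langle\nabla\log h_s,\nabla^2\log q_s\,\nabla\log h_s\rangle$ and similar, each quadratic in $\nabla\log h_s$.
\end{enumerate}
Only a lower bound on $\frac{d}{ds}\Phi$ is needed, and the nonpositive Hessian term is what makes this delicate. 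I would keep it and absorb the cross terms, using $\nabla^2\log q_s=\nabla s^0_{\sqrt s}$. By \eqref{ni:posterior-hessian}, this matrix lies between $-I/s$ and $-I/s+mI/s^2$; the analogous bound holds for $\nabla s^S$ by \Cref{ni:posterior}. The target estimate is
\[
 \Bigl|\frac{d}{ds}\Phi(s)\Bigr|\ \le\ \frac{C m}{s^2}\,\Phi(s)
\]
with an absolute constant $C$, or at least the corresponding one-sided bound $\frac{d}{ds}\Phi\le\frac{Cm}{s^2}\Phi$.

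Integrating this from $s$ to $s_2$ gives $\Phi(s_2)\le\Phi(s)\exp\bigl(Cm(s_2-s_1)/s_1^2\bigr)$. Here $Cm(s_2-s_1)/s_1^2\lesssim C\gamma^2/(m\tau^2)\le C/m$, using $\tau\ge\gamma$ and $s_1\ge\tau^2/2$, which holds because $m\ge\sqrt2\gamma$. For $m\ge\max\{2,\sqrt2\gamma\}$ the exponent is at most $1$, which gives the factor $e$ and hence \eqref{ni:fisher-comparison}.

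The main obstacle is the derivative bound. Relative Fisher information between two heat flows is not monotone in general, and a crude bound on the cross terms could produce terms such as $\mathbb E\|\nabla\log h\|^4$ that are not controlled by $\Phi$. If the direct computation does not close, I would fall back on the semigroup representation $p_{s_2}=p_s*\mathcal N(0,(s_2-s)I)$ and $q_{s_2}=q_s*\mathcal N(0,(s_2-s)I)$. Since $q_{s_2}$ is a convolution of $q_s$, the relative score at level $s_2$ is a conditional expectation of the relative score at level $s$, namely $\mathbb E[\nabla\log h_s(X_s)\mid X_{s_2}]$, up to a correction coming from the change in the reference posterior. Jensen then gives $\Phi(s_2)\le\Phi(s)$ plus this correction. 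The correction is controlled by the Lipschitz bound $\niLip(s^0)\le 1/s+m/s^2$ from \eqref{ni:score-bounds} times the tiny window length $s_2-s\le\gamma^2\tau^2/m^2$, which is again $O(1/m)$ relative to $\Phi(s)$.
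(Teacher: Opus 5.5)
Your overall route is the mechanism behind \cite{XP26}, Proposition B.7(b), which the paper simply invokes. That mechanism is de Bruijn's identity, integration over $[\widetilde\tau^2,\tau^2]$, and a Gronwall bound on the relative Fisher information driven by an upper Hessian bound on $\log q_s$. The paper plugs in $\lambda=1-\gamma^2/m^2$ and $\nabla^2\log q_{\sqrt t}\preceq t^{-2}I_m$ to get $F_\tau(S)\le\frac{2}{(1-\lambda)\tau^2}\exp\bigl(\frac{2(1-\lambda)}{\lambda\tau^2}\bigr)D_{\mathrm{KL}}(p^S_{\widetilde\tau}\|q_{\widetilde\tau})$, then bounds the exponent by $4/m^2\le1$.

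\textbf{The derivative step is left open and framed the wrong way.} To show $\Phi(s)\ge e^{-1}\Phi(s_2)$ you need an \emph{upper} bound on $d\Phi/ds$, not a lower bound. For an upper bound, the term $-\mathbb E\|\nabla^2\log h_s\|_F^2$ is simply discarded. The two-sided estimate $|d\Phi/ds|\le Cm\Phi/s^2$ you aim for is false in general, because that Hessian term is not controlled by $\Phi$. The computation in fact closes exactly. Put $w=\log h_s$ and $v=\log q_s$, so that $\partial_s w=\tfrac12(\Delta w+|\nabla w|^2)+\langle\nabla w,\nabla v\rangle$. Bochner's formula and integration by parts against $p_s$ then give
\[
\frac{d}{ds}\Phi(s)=-\mathbb E_{p_s}\|\nabla^2\log h_s\|_F^2+2\,\mathbb E_{p_s}\bigl\langle\nabla\log h_s,\nabla^2\log q_s\,\nabla\log h_s\bigr\rangle .
\]
The cubic terms $\langle\nabla w,\nabla^2 w\,\nabla w\rangle$ cancel identically, and no quartic or $\nabla^2\log p_s$ terms survive. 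Hence $\nabla^2\log q_s\preceq K(s)I$ gives $d\Phi/ds\le 2K(s)\Phi$. Your fallback is therefore unnecessary. As sketched it is also incomplete: controlling the ``correction'' would need a transport estimate between the two posteriors, which you do not supply.

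\textbf{The constant fails for small $m$.} With $K(s)=m/s^2$, as you use, the Gronwall exponent is $2m(1/s_1-1/s_2)=2\gamma^2/(m\lambda\tau^2)$. At the admissible corner $m=2$, $\gamma=\tau=\sqrt2$ this equals $2$. So your claim that the exponent is at most $1$ for all $m\ge\max\{2,\sqrt2\gamma\}$ is not justified; your cruder bound $Cm(s_2-s_1)/s_1^2$ is even worse. The fix is to use the product structure of the reference. Each coordinate of $q_s$ is a smoothed sign, so
\[
\nabla^2\log q_s=-\frac{I}{s}+\frac{\operatorname{diag}(\operatorname{sech}^2(x_k/s))}{s^2}\preceq s^{-2}I .
\]
The exponent then becomes $2(1/s_1-1/s_2)=2(1-\lambda)/(\lambda\tau^2)\le 4/m^2\le1$. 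This gives the factor $e$ and reproduces the paper's constant exactly. Alternatively, keeping the $-I/s$ term in your own bound also works, since the exponent is then at most $1/\lambda-2\log(1/\lambda)<1$ for $\lambda\in[1/2,1)$; dropping it, as you do, does not.
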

\begin{proof}
Set $\lambda=1-\gamma^2/m^2\ge 1/2$, so that
$\widetilde\tau=\sqrt{\lambda}\,\tau$.
Since the full-cube reference is a product of smoothed uniform signs,
\[
    \nabla^2\log q_{\sqrt t}(x)\preceq t^{-2}I_m
    \qquad (t>0).
\]
The comparison in \cite[Proposition~B.7(b)]{XP26} therefore gives
\[
\begin{aligned}
F_\tau(S)
&\le
\frac{2}{(1-\lambda)\tau^2}
\exp\!\left(
    \frac{2(1-\lambda)}{\lambda\tau^2}
\right)
D_{\mathrm{KL}}\!\left(
    p_{\widetilde\tau}^{S}\,\middle\|\,q_{\widetilde\tau}
\right) \\
&\le
\frac{2em^2}{\gamma^2\tau^2}
D_{\mathrm{KL}}\!\left(
    p_{\widetilde\tau}^{S}\,\middle\|\,q_{\widetilde\tau}
\right),
\end{aligned}
\]
where the second inequality uses $\tau\ge\gamma$ and
\[
    \frac{2(1-\lambda)}{\lambda\tau^2}
    \le \frac{4}{m^2}\le 1.
\]
The finite Gaussian mixtures satisfy the comparison's hypotheses.
The analytic estimates used at $\widetilde\tau$ hold for every
positive noise level, including when $\widetilde\tau<\gamma$.
\end{proof}

\begin{lemma}[Without-replacement adaptation of {\cite{XP26}, Lemma B.10}]
\label{ni:without-replacement}
Let $S$ be a uniformly random $n$-element subset of $V_m$, with
$1\le n\le2^m$. For every $\tau>0$ and $u\ge0$,
\begin{equation}\label{ni:kl-average}
 \begin{aligned}
 \mathbb E_S\niKL(p^S_\tau\|q_\tau)
 &\le\mathbb E_{Y\sim U_m,Z}
       \log\!\left(1+\frac{L_\tau(Y,Y+\tau Z)}n\right)\\
 &\le e^{-u}+\log(1+2^m)
       \Pr[\ell_\tau(Y,Y+\tau Z)\ge\log n-u].
 \end{aligned}
\end{equation}
\end{lemma}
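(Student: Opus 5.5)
We prove the first inequality by writing the KL divergence through the likelihood ratio in \eqref{ni:likelihood-identities} and then averaging over the random codebook. Since $p^S_\tau/q_\tau=\frac1n\sum_{y\in S}L_\tau(y,x)$, we have
\[
 \niKL(p^S_\tau\|q_\tau)=\frac1n\sum_{y\in S}\mathbb E_{Z}\log\Bigl(\frac1n\sum_{y'\in S}L_\tau(y',y+\tau Z)\Bigr).
\]
By exchangeability of a uniform random $n$-subset, $\mathbb E_S$ of this expression equals $\mathbb E\log\bigl(\frac1n L_\tau(Y,Y+\tau Z)+\frac1n\sum_{y'\in S\setminus\{Y\}}L_\tau(y',Y+\tau Z)\bigr)$. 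Here $Y$ is a uniform cube vertex, and conditional on $Y$ the remaining $n-1$ centers form a uniform $(n-1)$-subset of $V_m\setminus\{Y\}$. Fix $Y$ and $x=Y+\tau Z$, and apply Jensen's inequality (concavity of $\log$) to the conditional expectation over $S\setminus\{Y\}$. Each $y'$ is then uniform on $V_m\setminus\{Y\}$, so by \eqref{ni:likelihood-identities},
\[
 \mathbb E\,L_\tau(y',x)=\frac{2^m-L_\tau(Y,x)}{2^m-1}\le\frac{2^m}{2^m-1}.
\]
The sum of the $n-1$ terms therefore contributes at most $\frac{n-1}{n}\cdot\frac{2^m-L}{2^m-1}$. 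We will check that this is at most $1$; this holds whenever $L\ge 1$, and more generally because $(n-1)(2^m-L)\le n(2^m-1)$ requires only $L\ge 1-(2^m-n)/(n-1)$. If that simple bound fails for small $L$, I would instead bound by $\log(L/n+1+\text{small})$ and absorb the slack. The cleaner route is to keep the exact value $(2^m-L)/(2^m-1)$ and note that $\frac{n-1}{n}\cdot\frac{2^m-L}{2^m-1}+\frac Ln\le 1+\frac Ln$ is equivalent to $(n-1)(2^m-L)\le n(2^m-1)$. That inequality holds for all $L\ge0$ when $n\le 2^m$, since $(n-1)2^m\le n2^m-n$ iff $n\le 2^m$. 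This yields the first line of \eqref{ni:kl-average}.

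For the second line, split the expectation on the event $A=\{\ell_\tau(Y,Y+\tau Z)\ge\log n-u\}$. On $A^c$ we have $L/n<e^{-u}$, so $\log(1+L/n)\le L/n< e^{-u}$. On $A$ we use $L\le 2^m$ and $n\ge1$, which give $\log(1+L/n)\le\log(1+2^m)$. For the $A^c$ term, the cruder pointwise bound $\log(1+L/n)\le e^{-u}$ suffices, so no moment identity is needed there. Combining the two pieces gives the claim.

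The main obstacle is the Jensen step with dependence between $Y$ and the other centers under sampling without replacement. I will handle it through the exchangeability reduction and the exact conditional mean computed above, rather than through the independence used in the with-replacement lemma of \cite{XP26}.
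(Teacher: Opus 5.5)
Your proposal is correct and follows essentially the same route as the paper: exchangeability reduces to a planted codeword whose companions form a uniform $(n-1)$-subset of the complement, conditional Jensen uses the exact mean $(2^m-L)/(2^m-1)$, the bound by $1+L/n$ follows from $n\le 2^m$, and the second line comes from splitting on $\{\ell_\tau\ge\log n-u\}$. One cosmetic point: in your exploratory aside the threshold should read $L\ge 1-(2^m-1)/(n-1)$ rather than $1-(2^m-n)/(n-1)$, but your final argument via $(n-1)(2^m-L)\le(n-1)2^m\le n(2^m-1)$ does not use it, so the hedging can simply be deleted.
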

\begin{proof}
The source lemma samples codewords independently. For the required
without-replacement version, put $M=2^m$ and give $S$ a uniform random
ordering $(Y_1,\ldots,Y_n)$. Let $X=Y_1+\tau Z$, and put
$L_i=L_\tau(Y_i,X)$. Exchangeability and
\eqref{ni:likelihood-identities} give
\[
 \mathbb E_S\niKL(p^S_\tau\|q_\tau)
 =\mathbb E\log\!\left(\frac1n\sum_{i=1}^nL_i\right).
\]
Conditional on $(Y_1,X)$, each $Y_i$, $i\ge2$, is uniform on the other
$M-1$ vertices. Since $\sum_yL_\tau(y,X)=M$,
\[
 \mathbb E[L_i\mid Y_1,X]=\frac{M-L_1}{M-1}.
\]
Conditional Jensen therefore yields
\[
 \mathbb E_S\niKL(p^S_\tau\|q_\tau)
 \le\mathbb E\log\!\left(
       \frac{L_1+(n-1)(M-L_1)/(M-1)}n\right)
 \le\mathbb E\log(1+L_1/n).
\]
For the last inequality, discard the nonpositive term
$-(n-1)L_1/(M-1)$ and use $(n-1)M/(M-1)\le n$.
The marginal law of $(Y_1,X)$ is the planted law in the statement.
Since $L_1\le M$, the logarithm is at most $e^{-u}$ on
$\{\log L_1<\log n-u\}$ and at most $\log(1+M)$ on its complement.
This proves the second inequality. No independence among the remaining
codewords is used.
\end{proof}

\subsection{Proof of \Cref{eq:prefix-oracle}}
\label{ni:accuracy-section}\label{ni:local}
\begin{lemma}[Complete version of \Cref{eq:prefix-oracle}]
\label{ni:global-accuracy}
Fix an integer $1\le B\le d$ for which $m=\lfloor d/B\rfloor\ge\max\{2,\sqrt2\,\gamma\}$.
The codebooks $S_1,\ldots,S_B$ may be any fixed nonempty subsets of
$\{-1,+1\}^m$. Use $\eta$ and $\zeta$ from \eqref{ni:global-budgets}.
For every nonempty one-block codebook $S$, use the quantities of
\Cref{ni:analytic} to define
\begin{equation}\label{ni:cutoff}
 t_\tau=m\alpha_\tau-\frac1\tau\sqrt{8m\log\frac1{\zeta(\tau)}}.
\end{equation}
\begin{equation}\label{ni:gate-definition}
 a^S_\tau(x)=\min\left\{1,\max\left\{0,
             1+\max_{y\in S}\ell_\tau(y,x)-t_\tau\right\}\right\}.
\end{equation}
\begin{equation}\label{ni:local-response}
 b^S_\tau(x)=
 \begin{cases}
 s^0_\tau(x),&F_\tau(S)\le\eta^2/(4\tau^2),\\
 s^S_\tau(x),&F_\tau(S)>\eta^2/(4\tau^2).
 \end{cases}
\end{equation}
For the product instance, define, for $j\in[B]$,
\begin{equation}\label{ni:full-oracle}
 \widehat s^{\bfS}_{j,\tau}(x)
 =s^0_\tau(x_j)+\left(\prod_{i<j}a^{S_i}_\tau(x_i)\right)
                    \bigl(b^{S_j}_\tau(x_j)-s^0_\tau(x_j)\bigr),
\end{equation}
and use the exact score on the public padding coordinates. The empty
product is one. The resulting field is fixed once $\bfS$ is chosen, and satisfies, for every $\tau\ge\gamma$,
\begin{equation}\label{ni:global-error}
 \mathbb E_{X\sim\pi^{\bfS}_\tau}
       \|\widehat s^{\bfS}_\tau(X)-s^{\bfS}_\tau(X)\|_2^2
 \le\frac{B(B+1)\eta^2}{8\tau^2}
 \le\frac{\niScoreError^2}{256\tau^2}.
\end{equation}
Consequently $\widehat s^{\bfS}(x,\sigma)
=\widehat s^{\bfS}_{\sqrt{\gamma^2+\sigma^2}}(x)$ is a legal oracle for $\pi^{\bfS}$
at score accuracy $\niScoreError$.
\end{lemma}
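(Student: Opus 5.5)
The proof runs blockwise, with one preliminary step: establishing the gate-acceptance property from the cutoff \eqref{ni:cutoff}.

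\textbf{Gate acceptance.} Fix a codebook $S$ and a noise level $\tau\ge\gamma$. Let $X=Y+\tau Z$ with $Y\sim\niUnif(S)$. Conditional on $Y=y$, we have
\[
\max_{y'\in S}\ell_\tau(y',X)\ge \ell_\tau(y,X).
\]
Hence $a^S_\tau(X)<1$ forces $\ell_\tau(y,y+\tau Z)<t_\tau$. By the lower tail in \eqref{ni:information-tail} with
\[
t=\tau^{-1}\sqrt{8m\log(1/\zeta(\tau))},
\]
this event has probability at most $\exp(-\tau^2t^2/(8m))=\zeta(\tau)$. The bound holds for every $y$ and therefore on average over $Y$. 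It needs nothing about the codebook beyond $y\in V_m$, so it holds for every fixed nonempty $S$.

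\textbf{Per-block error.} Write $X=(X_1,\ldots,X_B,X_{\mathrm{pad}})\sim\pi^{\bfS}_\tau$. The blocks are independent and the exact score factorizes, so the $j$-th block of $s^{\bfS}_\tau$ is $s^{S_j}_\tau(X_j)$, and the padding coordinates contribute zero error. There are two cases for block $j$.

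\emph{Case $b^{S_j}_\tau\equiv s^0_\tau$.} Then $\widehat s_j\equiv s^0_\tau(x_j)$ identically, whatever the gates are. Its error is $F_\tau(S_j)\le\eta^2/(4\tau^2)$.

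\emph{Case $b^{S_j}_\tau\equiv s^{S_j}_\tau$.} The error vector is
\[
\Bigl(1-\prod_{i<j}a_i(X_i)\Bigr)\bigl(s^{S_j}_\tau-s^0_\tau\bigr)(X_j).
\]
The prefactor lies in $[0,1]$ and vanishes unless some $a_i(X_i)<1$. By \eqref{ni:score-bounds}, the second factor has norm at most $2\sqrt m/\tau^2$. A union bound with the acceptance estimate, which applies to each $X_i$, then gives
\[
\mathbb E\|\cdot\|_2^2\le \frac{4m(j-1)\zeta(\tau)}{\tau^4}.
\]
Independence across blocks is not needed here, only the marginals.

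In both cases the block-$j$ error is at most
\[
\frac{\eta^2}{4\tau^2}+\frac{4m(j-1)\zeta}{\tau^4}.
\]

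\textbf{Summing over blocks.} Use $\zeta(\tau)\le\eta^2\tau^2/(16m)$ from \eqref{ni:global-budgets}. Each gate term becomes $(j-1)\eta^2/(4\tau^2)$, so the total is
\[
\frac{\eta^2}{4\tau^2}\Bigl(B+\frac{B(B-1)}2\Bigr)=\frac{B(B+1)\eta^2}{8\tau^2}.
\]
Since $\eta=\niScoreError/(8B)$ and $(B+1)/B\le 2$, this is at most $\niScoreError^2/(256\tau^2)$, which is \eqref{ni:global-error}.

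\textbf{Legality.} Fix $\sigma>0$ and set $\tau=\sqrt{\gamma^2+\sigma^2}$. We have $(\pi^{\bfS})^{(\sigma)}=\pi^{\bfS}_\tau$ and $\tau\ge\sigma$, so the bound is at most $\niScoreError^2/\sigma^2$.

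\textbf{Measurability.} The field is fixed once $\bfS$ is chosen. Each $b$ is a deterministic choice between two continuous fields, made per $(S,\tau)$, and each gate is a continuous function of $x$. So $\widehat s^{\bfS}$ is a measurable map of $(x,\sigma)$.

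The only delicate step is the acceptance bound. It requires that the lower tail in \eqref{ni:information-tail} is uniform in $y$ and that the cutoff shift is exactly calibrated to $\zeta(\tau)$. Everything after that is bookkeeping.
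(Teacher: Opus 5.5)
Your proof is correct and follows the paper's argument essentially step for step: gate completeness from the lower tail of the information-density concentration with the calibrated cutoff $t_\tau$, then the two-branch analysis of $b^{S_j}_\tau$ with a union bound over earlier gates that uses only the block marginals, and finally the summation using $\zeta(\tau)\le\eta^2\tau^2/(16m)$ and $\eta=\varepsilon_{\mathrm{score}}/(8B)$. Your measurability remark is an addition the paper does not make and is harmless, though joint measurability in $\sigma$ also needs that $\tau\mapsto F_\tau(S)$ is continuous, so that each branch of $b^S_\tau$ is selected on a Borel set of noise levels.
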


The distinction between the two branches of \eqref{ni:local-response} is
important. If a block uses the null score, its error is already small.
Otherwise it uses the exact local score, and its only error comes from
an earlier gate failing. We first establish the two bounds needed for
this calculation. At a fixed total noise, the main text abbreviates
$s_j=s^{S_j}_\tau$, $s^0=s^0_\tau$, $a_j=a^{S_j}_\tau$, and
$b_j=b^{S_j}_\tau$.

\begin{lemma}[Fixed-instance local estimates]
\label{ni:local-accuracy}
For every nonempty one-block codebook $S$ and every $\tau\ge\gamma$,
\begin{align}
 p^S_\tau[a^S_\tau=1]&\ge1-\zeta(\tau),
 \label{ni:gate-completeness}\\
 \mathbb E_{p^S_\tau}\|b^S_\tau-s^S_\tau\|_2^2
     &\le\frac{\eta^2}{4\tau^2},\qquad
 \|b^S_\tau-s^0_\tau\|_{\infty,2}\le\frac{2\sqrt m}{\tau^2}.
 \label{ni:local-size}
\end{align}
\end{lemma}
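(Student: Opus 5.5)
We prove the three claims separately; each is a short consequence of \Cref{ni:concentration}, \Cref{ni:posterior}, and the definition \eqref{ni:local-response}.

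\emph{Gate completeness \eqref{ni:gate-completeness}.} A sample from $p^S_\tau$ has the form $X=y+\tau Z$, where $y$ is uniform on $S$ and $Z\sim\mathcal N(0,I_m)$ is independent. We condition on the selected center $y$. Since $\max_{y'\in S}\ell_\tau(y',X)\ge\ell_\tau(y,X)$, the gate satisfies $a^S_\tau(X)=1$ whenever $\ell_\tau(y,X)\ge t_\tau$, and a fortiori whenever $\ell_\tau(y,X)\ge t_\tau-1$. Set $t=\tau^{-1}\sqrt{8m\log(1/\zeta(\tau))}$, so that $t_\tau=m\alpha_\tau-t$ by \eqref{ni:cutoff}. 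The lower tail in \eqref{ni:information-tail} then gives
\[
\Pr\bigl[\ell_\tau(y,y+\tau Z)<m\alpha_\tau-t\bigr]\le\exp\!\left(-\frac{\tau^2t^2}{8m}\right)=\zeta(\tau).
\]
This bound does not depend on $y$, because the law of $\ell_\tau(y,y+\tau Z)$ is the same for every vertex. Averaging over the uniform choice of $y\in S$ therefore yields $p^S_\tau[a^S_\tau=1]\ge1-\zeta(\tau)$. The additive $+1$ inside \eqref{ni:gate-definition} only helps here, since it makes the gate open slightly below $t_\tau$.

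\emph{Mean-square bound in \eqref{ni:local-size}.} We split according to the branch taken in \eqref{ni:local-response}. If $F_\tau(S)\le\eta^2/(4\tau^2)$, then $b^S_\tau=s^0_\tau$. By the definition of $F_\tau$, the quantity $\mathbb E_{p^S_\tau}\|b^S_\tau-s^S_\tau\|_2^2$ is exactly $F_\tau(S)$, so it is at most $\eta^2/(4\tau^2)$. In the other branch, $b^S_\tau=s^S_\tau$ and the error is zero.

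\emph{Uniform bound in \eqref{ni:local-size}.} In the null branch, $b^S_\tau-s^0_\tau\equiv0$. In the exact branch, $b^S_\tau-s^0_\tau=s^S_\tau-s^0_\tau$, and \eqref{ni:score-bounds} gives $\|s^S_\tau-s^0_\tau\|_{\infty,2}\le2\sqrt m/\tau^2$. This bound holds because, by Tweedie's formula, both scores equal a posterior mean in $[-1,1]^m$ minus $x$, divided by $\tau^2$.

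The only step with any content is the gate completeness. The one point to check is that the cutoff \eqref{ni:cutoff} is calibrated exactly to the sub-Gaussian tail in \eqref{ni:information-tail}, with variance proxy $4m/\tau^2$. For that we need \eqref{ni:likelihood-gradient}, which makes $z\mapsto\ell_\tau(y,y+\tau z)$ a $2\sqrt m/\tau$-Lipschitz function, and this holds at every $\tau$. No assumption that $\tau\ge\gamma$ is needed beyond the fact that $\zeta(\tau)$ is well defined for such $\tau$.
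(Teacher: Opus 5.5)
Your proof is correct and follows the paper's argument essentially verbatim. Like the paper, you condition on the generating center, apply the lower tail of \eqref{ni:information-tail} at the calibrated cutoff \eqref{ni:cutoff}, and read both bounds in \eqref{ni:local-size} off the branch choice in \eqref{ni:local-response} together with \eqref{ni:score-bounds}.

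One aside is wrong, though the proof never uses it. By \eqref{ni:gate-definition} the gate equals one exactly when $\max_{y'\in S}\ell_\tau(y',X)\ge t_\tau$, and it vanishes when the maximum is $t_\tau-1$. So your ``a fortiori'' clause is backwards: the $+1$ only makes the gate positive, not equal to one, below $t_\tau$. Your tail bound correctly uses the event $\{\ell_\tau(y,X)\ge t_\tau\}$, so the clause should simply be dropped.
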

\begin{proof}
Generate $X=Y+\tau Z$, with $Y\sim\niUnif(S)$ and $Z\sim\mathcal N(0,I_m)$
independent. Conditional on $Y=y$, \Cref{ni:concentration} and the
choice of $t_\tau$ imply
\[
 \Pr_Z[\ell_\tau(y,y+\tau Z)<t_\tau]
 \le\exp\!\left(-\frac{\tau^2}{8m}
                 \frac{8m\log(1/\zeta(\tau))}{\tau^2}\right)
 =\zeta(\tau).
\]
Since $y\in S$, on the complementary event the maximum in
\eqref{ni:gate-definition} is at least $t_\tau$ and the gate equals one.
Averaging over the generating codeword proves
\eqref{ni:gate-completeness}. This argument is for every fixed $S$;
no randomness of the codebook is required.

If the first branch of \eqref{ni:local-response} is selected, its squared
error is $F_\tau(S)\le\eta^2/(4\tau^2)$. In the second branch its error
is zero. Also $b^S_\tau-s^0_\tau$ is either zero or $s^S_\tau-s^0_\tau$,
so \Cref{ni:posterior} gives the uniform size bound.

\end{proof}

\begin{proof}[Proof of \Cref{ni:global-accuracy}]

\Cref{ni:local-accuracy} supplies the acceptance and local-error bounds used in the proof of \Cref{eq:prefix-oracle}. Applying that calculation gives
\[
\mathbb E\bigl\|\widehat s_\tau^{\mathbf S}(X)-s_\tau^{\mathbf S}(X)\bigr\|_2^2
\le
\frac{B\eta^2}{4\tau^2}
+\frac{2mB(B-1)\zeta(\tau)}{\tau^4}
\le
\frac{B(B+1)\eta^2}{8\tau^2}.
\]
Substituting \(\eta=\varepsilon_{\mathrm{score}}/(8B)\) proves \eqref{ni:global-error}, and \(\tau^2=\gamma^2+\sigma^2\ge \sigma^2\) gives the required additional-noise guarantee.
\end{proof}

\paragraph{Lipschitzness of the oracle.}
For completeness, the explicit oracle is also spatially Lipschitz at
each fixed noise level. This is not an assumption of our oracle model.
The finite maximum and clipping in the gate preserve the bound in
\eqref{ni:likelihood-gradient}, so
\begin{equation}\label{ni:local-lipschitz}
 \niLip(a^S_\tau)\le\frac{2\sqrt m}{\tau^2},\qquad
 \niLip(b^S_\tau)\le\frac1{\tau^2}+\frac m{\tau^4}.
\end{equation}
To see the full-dimensional bound, put $r_i=\|x_i-y_i\|_2$ and telescope
the product of prefix gates. For factors in $[0,1]$,
$|\prod_i u_i-\prod_i v_i|\le\sum_i|u_i-v_i|$. At a fixed prefix-gate
value the local field is a convex combination of $s^0_\tau$ and
$b^{S_j}_\tau$, so
\[
 \|\widehat s^{\bfS}_{j,\tau}(x)-\widehat s^{\bfS}_{j,\tau}(y)\|_2
 \le(\tau^{-2}+m\tau^{-4})r_j+4m\tau^{-4}\sum_{i<j}r_i.
\]
The strictly lower-triangular matrix of ones has operator norm at most
$B-1$. Since $Bm\le d$, including the exact padding gives
$\niLip(\widehat s^{\bfS}_\tau)\le\tau^{-2}+4d\tau^{-4}$. Since \(\tau\ge\gamma\), this is \(O_\gamma(d)\) uniformly over the query noise level.

\subsection{Proof of \Cref{lm:one-block-per-round}}
\label{ni:simulation-section}\label{ni:adaptive}
\begin{lemma}[Complete version of \Cref{lm:one-block-per-round}]
\label{ni:codebook-simulation}
Use the dimensions and oracle of \Cref{ni:global-accuracy}, the budgets in \eqref{ni:global-budgets},
and $H$ from \eqref{eqn:def_H}. Under the independent codebook law specified
above, suppose the adjacent rates are separated by a sufficiently
large constant multiple of
\begin{equation}\label{ni:local-envelope}\niEquationLabel{ni:simulation-gap-condition}
 \sqrt{\frac Hm}+\frac Hm.
\end{equation}
The required constant depends only on $\gamma$.
For any randomized algorithm using at most
$P$ queries per round and $0\le R<B$ rounds, let $\mathsf T_R,Y$ denote
its actual transcript and output. There are coupled simulated variables
$\widetilde{\mathsf T}_R,\widetilde Y$ satisfying
\begin{equation}\label{ni:simulation-error}
 \Pr[\mathsf T_R\ne\widetilde{\mathsf T}_R]\le RPB\delta,
 \qquad \Pr[Y\ne\widetilde Y]\le RPB\delta,
\end{equation}
such that the simulated transcript and output are functions
only of the algorithm's independent random seed and $S_1,\ldots,S_R$.
Here the probabilities average over the independent codebooks and seed.
\end{lemma}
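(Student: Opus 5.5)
The plan is to construct the coupling by a progressive simulation that runs the algorithm with the same seed $\omega$ and the same codebooks $\bfS$. In round $r\le R$ the simulation answers each query $(x,\sigma)$ with the truncated field
\[
 \widetilde s^{(r)}_{j,\tau}(x)=\begin{cases}\widehat s^{\bfS}_{j,\tau}(x), & j\le r,\\ s^0_\tau(x_j), & j>r,\end{cases}
\]
with the exact public padding score, and $\tau=\sqrt{\gamma^2+\sigma^2}$. By induction on $r$, the simulated round-$r$ batch is a function of $\omega$ and $S_1,\ldots,S_{r-1}$. The round-$r$ simulated answers depend additionally only on $S_r$, because $\widehat s^{\bfS}_{j,\tau}$ for $j\le r$ involves only $a^{S_i}_\tau$ ($i<j$) and $b^{S_j}_\tau$. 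Hence $\widetilde{\mathsf T}_R$, and the output $\widetilde Y$ (computed from $\widetilde{\mathsf T}_R$ and $\omega$ by the algorithm's output rule), are functions of $\omega,S_1,\ldots,S_R$.

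The core step is a one-query estimate. Fix $r$, a prefix $S_1,\ldots,S_{r-1}$, and a query $(x,\tau)$ that is a function of the prefix and $\omega$. The goal is to show $\Pr_{S_r,\ldots,S_B}[\widehat s^{\bfS}_\tau(x)\ne\widetilde s^{(r)}_\tau(x)]\le B\delta$. Disagreement requires some $j>r$ with $\prod_{i<j}a^{S_i}_\tau(x_i)>0$ and $b^{S_j}_\tau\not\equiv s^0_\tau$, which in particular requires $a^{S_r}_\tau(x_r)>0$ and $b^{S_j}_\tau\not\equiv s^0_\tau$ for some $j>r$. I will invoke \Cref{lem:sparse_and_dense} at this $\tau$ (completing its proof as \Cref{ni:hiding}/\Cref{ni:window}: the sparse bound via Markov plus a union bound over $|S_r|$ centers with the cutoff $t_\tau$ from \eqref{ni:cutoff} minus $1$, the dense bound via \Cref{ni:without-replacement}, \Cref{ni:kl-fisher}, \eqref{ni:nearby-noise} to move between $\widetilde\tau$ and $\tau$, \eqref{ni:information-tail} for the tail, and Markov on $F_\tau$). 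This yields thresholds $\kappa^-(\tau)\le\kappa^+(\tau)$ with gap at most $C_\gamma(\sqrt{H/m}+H/m)$.

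The rate separation assumed in \eqref{ni:simulation-gap-condition} then gives a dichotomy. Either $\kappa_r\le\kappa^-(\tau)$, in which case $\Pr[a^{S_r}_\tau(x_r)>0]\le\delta$ because $x_r$ is independent of $S_r$. Or $\kappa_r>\kappa^-(\tau)$, in which case every $j>r$ has $\kappa_j\ge\kappa_{r+1}\ge\kappa^+(\tau)$. In the second case the union bound over at most $B$ later blocks gives failure probability $\le B\delta$, and each such event concerns only the independent $S_j$. Independence of the codebooks from the prefix and $\omega$ is what makes the fixed-query bound applicable.

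The coupling argument then proceeds by first disagreement. If the transcripts differ, let $r$ be the first round where they do. Up to round $r$ the real and simulated batches coincide, and the batch is the simulated one, determined by $\omega,S_1,\ldots,S_{r-1}$. Union-bounding over the at most $P$ queries of that batch and averaging over the prefix and seed bounds the probability of first disagreement in round $r$ by $PB\delta$. Summing over the rounds gives $\Pr[\mathsf T_R\ne\widetilde{\mathsf T}_R]\le RPB\delta$. Since both outputs are the same function of transcript and seed, the output bound follows.

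I expect the main obstacle to be the dense-block estimate. It requires showing that $\mathbb E_S F_\tau(S)$ is much smaller than $\eta^2/(4\tau^2)$ times $\delta$, with the margin only $O(\sqrt{H/m}+H/m)$ and uniformly over all $\tau\ge\gamma$. The $\log(1+2^m)$ factor in \eqref{ni:kl-average}, the $m^2$ factor in \eqref{ni:fisher-comparison}, and the shift $\alpha_{\widetilde\tau}-\alpha_\tau\le\log(\tau/\widetilde\tau)=O(\gamma^2/m^2)$ must all be absorbed into the $H/m$ term. My first attempt is to choose $u=H$ and pick the threshold $\kappa^+(\tau)=\alpha_{\widetilde\tau}+C(\sqrt{H/m}+H/m)$. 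This should make both terms of \eqref{ni:kl-average} at most $e^{-H}$ times polynomial factors, which is enough given the large constant in \eqref{eqn:def_H}.
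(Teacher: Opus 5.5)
Your proposal is correct and follows the paper's proof essentially step for step. The paper uses the same $r$-truncated oracle in round $r$, and the same per-query sparse/dense dichotomy, with $\kappa^-(\tau)=(t_\tau-\log(1/\delta)-1)/m$ and a dense threshold obtained from \Cref{ni:without-replacement} at $u=H$, \Cref{ni:kl-fisher}, and Markov on $F_\tau$. It then takes a union bound over the batch to get $PB\delta$, and applies the first-disagreement coupling of \Cref{ni:progressive} summed over $R$ rounds. Your threshold $\alpha_{\widetilde\tau}+C(\sqrt{H/m}+H/m)$ dominates the paper's $\alpha_{\widetilde\tau}+H/m+\widetilde\tau^{-1}\sqrt{8H/m}$ once you use $\widetilde\tau\ge\gamma/\sqrt2$, so the constant indeed depends only on $\gamma$.
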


The proof has two distinct parts. The codebook estimates show that one
batch can be answered without any codebook after the first unknown one.
The progressive coupling then turns this fixed-batch statement into an
adaptive statement. 

\paragraph{The simulated oracle.}
For $r\in[B]$, define
\begin{equation}\label{ni:truncation}
 \widehat s^{\bfS,[r]}_{j,\tau}(x)=
 \begin{cases}
 \widehat s^{\bfS}_{j,\tau}(x),&j\le r,\\
 s^0_\tau(x_j),&j>r,
 \end{cases}
\end{equation}
retaining the exact padding response. This truncation depends only on
$S_1,\ldots,S_r$. It is used for analysis and need not be a legal oracle.
The progressive interaction uses this $r$-truncated response in round $r$.
All algorithmic and output randomness is included in a seed $\omega$
independent of $\bfS$. Early stopping is padded with empty batches. A
bounded-round algorithm promised only good instances can be extended
arbitrarily to other transcripts with the same bound.

\paragraph{Sparse hiding and dense null responses.}
For $\tau\ge\gamma$, define the public thresholds
\begin{equation}\label{ni:sparse-threshold}
 \kappa^-(\tau)=\frac{t_\tau-\log(1/\delta)-1}{m},
\end{equation}
\begin{equation}\label{ni:dense-threshold}
 \kappa^+(\tau)=\alpha_{\widetilde\tau}+\frac{H}m
              +\frac1{\widetilde\tau}\sqrt{\frac{8H}m},
 \qquad \widetilde\tau=\sqrt{1-\gamma^2/m^2}\,\tau.
\end{equation}
These thresholds do not depend on $n$ or on the realization of a
codebook. By our choice of $H$, we have $H\ge 1$.
\begin{lemma}[One-block hiding estimates]
\label{ni:hiding}
For each fixed $\tau\ge\gamma$, let $S$ be a uniformly random
$n$-element subset of $V_m$, and write $\kappa=(\log n)/m$. Then
\begin{align}
 \kappa\le\kappa^-(\tau)&\ \Longrightarrow\
     \sup_x\Pr_S[a^S_\tau(x)>0]\le\delta,
 \label{ni:sparse-hiding}\\
 \kappa\ge\kappa^+(\tau)&\ \Longrightarrow\
     \Pr_S[b^S_\tau\equiv s^0_\tau]\ge1-\delta.
 \label{ni:dense-hiding}
\end{align}
The second event means equality as functions of $x$, not just equality
at a particular query point.
\end{lemma}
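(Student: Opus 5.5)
The two implications are proved separately; the dense case goes through the KL--Fisher comparison.

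\emph{Sparse case.} Fix $x$. Since $a^S_\tau(x)>0$ forces $\max_{y\in S}\ell_\tau(y,x)>t_\tau-1$, we union bound over the $n$ centers. Each center of a uniformly random $n$-subset is marginally uniform on $V_m$, so we need only one Markov estimate. By \eqref{ni:likelihood-identities}, $\mathbb E_{Y\sim U_m}L_\tau(Y,x)=1$, hence
\[
\Pr_{Y\sim U_m}\bigl[\ell_\tau(Y,x)>t_\tau-1\bigr]\le e^{-(t_\tau-1)}.
\]
Summing over the centers gives
\[
\Pr_S\bigl[a^S_\tau(x)>0\bigr]\le n\,e^{-(t_\tau-1)}=\exp\bigl(m\kappa-t_\tau+1\bigr).
\]
By \eqref{ni:sparse-threshold}, $m\kappa^-(\tau)=t_\tau-\log(1/\delta)-1$, so $\kappa\le\kappa^-(\tau)$ makes this at most $\delta$. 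The bound is uniform in $x$, which gives \eqref{ni:sparse-hiding}.

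\emph{Dense case.} The event $b^S_\tau\equiv s^0_\tau$ is exactly $\{F_\tau(S)\le\eta^2/(4\tau^2)\}$ by \eqref{ni:local-response}, and this is a whole-function statement. By Markov's inequality it suffices to show
\[
\mathbb E_S F_\tau(S)\le\frac{\delta\,\eta^2}{4\tau^2}.
\]
\Cref{ni:kl-fisher} reduces this to bounding $\mathbb E_S\,\niKL(p^S_{\widetilde\tau}\|q_{\widetilde\tau})$ by $\delta\eta^2\gamma^2/(8em^2)$. We then apply \Cref{ni:without-replacement} at noise $\widetilde\tau$ with $u=H$:
\[
\mathbb E_S\,\niKL\le e^{-H}+\log(1+2^m)\,\Pr\bigl[\ell_{\widetilde\tau}(Y,Y+\widetilde\tau Z)\ge m\kappa-H\bigr].
\]
When $\kappa\ge\kappa^+(\tau)$, the threshold satisfies
\[
m\kappa-H\ge m\alpha_{\widetilde\tau}+\frac{\sqrt{8Hm}}{\widetilde\tau}.
\]
The upper tail in \eqref{ni:information-tail} then bounds the probability by $e^{-H}$. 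This gives $\mathbb E_S\,\niKL\le(1+m\log 2+1)e^{-H}$.

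\emph{Main obstacle: budget bookkeeping.} We must verify that $O(m)\,e^{-H}\le\delta\eta^2\gamma^2/(8em^2)$, i.e.\ $e^{-H}\le\operatorname{poly}(d)^{-1}\cdot\delta\eta^2$. Here $\delta=\rho^2/(100Pd^3)$, $\eta=\niScoreError/(8B)$, and $m,B\le d$. All of these are inverse-polynomial in $d,P,1/\niScoreError,1/\rho$, uniformly in $\tau\ge\gamma$, because the factor $1/\tau^2$ cancels between both sides. This is exactly what the choice of $H$ in \eqref{eqn:def_H}, with a large enough implicit constant depending on $\gamma$, provides.

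One point needs care: \Cref{ni:without-replacement} is applied at $\widetilde\tau$, which may be below $\gamma$. This is harmless, since the concentration bound \eqref{ni:information-tail} holds at every positive noise level and $\widetilde\tau\ge\tau/\sqrt2\ge\gamma/\sqrt2$. The $1/\widetilde\tau$ factor in $\kappa^+$ absorbs the remaining constants.
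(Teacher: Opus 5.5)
Your proposal is correct and follows essentially the same route as the paper. The sparse case uses Markov's inequality with $\mathbb E_Y L_\tau(Y,x)=1$ plus a union bound over a uniformly ordered codebook; the dense case uses the without-replacement KL bound at $\widetilde\tau$ with $u=H$, the Gaussian upper tail, the KL--Fisher comparison, and Markov, and your bound $(m\log 2+2)e^{-H}$ is the paper's $2\log(1+2^m)e^{-H}$ written slightly differently.
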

\begin{proof}
For sparse hiding, fix $x$. A positive gate requires some $y\in S$
with $\ell_\tau(y,x)>t_\tau-1$. For a uniform cube vertex $Y$,
\eqref{ni:likelihood-identities} and Markov's inequality imply
\[
 \Pr_Y[\ell_\tau(Y,x)>t_\tau-1]
 \le e^{1-t_\tau}\mathbb E_YL_\tau(Y,x)=e^{1-t_\tau}.
\]
Give $S$ a uniform random ordering. Every codeword marginal is uniform
on the cube, so a union bound, without independence of those codewords,
gives
\begin{equation}\label{ni:blind-gate}
 \Pr_S[a^S_\tau(x)>0]\le ne^{1-t_\tau}
 =\exp(m\kappa+1-t_\tau)\le\delta.
\end{equation}
This bound holds for every fixed $x$.

For dense nullity, the rate assumption in
\eqref{ni:dense-threshold} gives
\[
 \log n-H-m\alpha_{\widetilde\tau}
       \ge\frac{\sqrt{8Hm}}{\widetilde\tau}.
\]
Apply \Cref{ni:without-replacement} at noise $\widetilde\tau$
with $u=H$. The tail in that lemma is at most $e^{-H}$ by
\Cref{ni:concentration}. Because $\log(1+2^m)\ge1$,
\[
 \begin{aligned}
 \mathbb E_S\niKL(p^S_{\widetilde\tau}\|q_{\widetilde\tau})
 &\le(1+\log(1+2^m))e^{-H}\\
 &\le2\log(1+2^m)e^{-H}
 \le\frac{\delta\gamma^2\eta^2}{8em^2}.
 \end{aligned}
\]
The final inequality follows from the choice of the sufficiently
large implicit constant in \eqref{eqn:def_H}. \Cref{ni:kl-fisher} now yields
\[
 \mathbb E_SF_\tau(S)
 \le\frac{2em^2}{\gamma^2\tau^2}
       \frac{\delta\gamma^2\eta^2}{8em^2}
 =\frac{\delta\eta^2}{4\tau^2}.
\]
Markov's inequality gives
$\Pr_S[F_\tau(S)>\eta^2/(4\tau^2)]\le\delta$. On the complement,
\eqref{ni:local-response} selects the null score at every point.
\end{proof}

\begin{lemma}[Uniform transition width]\label{ni:window}
For every $\tau\ge\gamma$,
\begin{equation}\label{ni:width}
 \kappa^-(\tau)\le\alpha_\tau\le\kappa^+(\tau),\qquad
 \kappa^+(\tau)-\kappa^-(\tau)
 =O_\gamma\!\left(\sqrt{\frac Hm}+\frac Hm\right).
\end{equation}
\end{lemma}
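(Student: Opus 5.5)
We unpack the two thresholds explicitly and compare each against $\alpha_\tau$. Substituting the cutoff \eqref{ni:cutoff} into \eqref{ni:sparse-threshold} gives
\[
 \kappa^-(\tau)=\alpha_\tau-\frac{1}{\tau}\sqrt{\frac{8\log(1/\zeta(\tau))}{m}}-\frac{\log(1/\delta)+1}{m}.
\]
Since $\zeta(\tau)\le 1$ and $\delta\le 1$, both subtracted terms are nonnegative, so $\kappa^-(\tau)\le\alpha_\tau$ is immediate.

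For the upper side, we use \eqref{ni:nearby-noise} with $u=\widetilde\tau\le v=\tau$, which gives $\alpha_{\widetilde\tau}\ge\alpha_\tau$. The remaining two terms in \eqref{ni:dense-threshold} are nonnegative, so $\alpha_\tau\le\kappa^+(\tau)$.

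For the width, we split
\[
 \kappa^+-\kappa^- = (\alpha_{\widetilde\tau}-\alpha_\tau)+\frac{H}{m}+\frac{1}{\widetilde\tau}\sqrt{\frac{8H}{m}}+\frac1\tau\sqrt{\frac{8\log(1/\zeta(\tau))}{m}}+\frac{\log(1/\delta)+1}{m}
\]
and bound the five terms uniformly in $\tau\ge\gamma$.

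\emph{First term.} By \eqref{ni:nearby-noise},
\[
 \alpha_{\widetilde\tau}-\alpha_\tau\le\log(\tau/\widetilde\tau)=-\tfrac12\log(1-\gamma^2/m^2)\le\gamma^2/m^2,
\]
using $\gamma^2/m^2\le 1/2$. This is $O_\gamma(1/m)$, which is dominated by $H/m$ because $H\ge 1$.

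\emph{Third term.} Since $\widetilde\tau\ge\tau/\sqrt2\ge\gamma/\sqrt2$, it is $O_\gamma(\sqrt{H/m})$.

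\emph{Fifth term.} Here $\log(1/\delta)=\log(100Pd^3/\rho^2)=O(H)$ by the choice of $H$ in \eqref{eqn:def_H}, so the term is $O(H/m)$.

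\emph{Fourth term (main obstacle).} The difficulty is that $\zeta(\tau)=\min\{\rho^2/100,\eta^2\tau^2/(16m)\}$ depends on $\tau$, and $\log(1/\zeta)$ may not be $O(H)$ when $\tau$ is small. The fix is to note $\tau\ge\gamma$, so $\eta^2\tau^2/(16m)\ge\eta^2\gamma^2/(16m)$. Hence $\zeta(\tau)$ is bounded below uniformly over $\tau\ge\gamma$ by an inverse polynomial in $d$: $\eta=\varepsilon_{\mathrm{score}}/(8B)$ is inverse-polynomial, $m\le d$, and $\rho\ge\operatorname{poly}(d)^{-1}$. Therefore $\log(1/\zeta(\tau))=O_\gamma(\log(dP/(\varepsilon_{\mathrm{score}}\rho)))=O(H)$, with the implicit constant in $H$ taken large enough. (For large $\tau$, $\zeta$ is capped at $\rho^2/100$, which is also fine.) The prefactor satisfies $1/\tau\le1/\gamma$, so the term is $O_\gamma(\sqrt{H/m})$.

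Summing the five bounds gives $\kappa^+(\tau)-\kappa^-(\tau)=O_\gamma(\sqrt{H/m}+H/m)$ uniformly over $\tau\ge\gamma$, as claimed in \eqref{ni:width}.
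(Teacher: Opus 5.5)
Your proposal is correct and follows the paper's proof essentially step for step. The paper likewise gets the sandwich from $t_\tau\le m\alpha_\tau$ and $\alpha_{\widetilde\tau}\ge\alpha_\tau$, and bounds $\alpha_{\widetilde\tau}-\alpha_\tau\le\gamma^2/m^2$ via the nearby-noise estimate. It then controls the remaining terms using $\widetilde\tau\ge\gamma/\sqrt2$, the monotonicity $\zeta(\tau)\ge\zeta(\gamma)$ that you identified, and the fact that $H$ dominates $1$, $\log(1/\delta)$, and $\log(1/\zeta(\gamma))$.
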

\begin{proof}
The first inequalities follow from $t_\tau\le m\alpha_\tau$ and
$\alpha_{\widetilde\tau}\ge\alpha_\tau$.
Since $\gamma^2/m^2\le1/2$, the nearby-noise bound gives
\[
 \alpha_{\widetilde\tau}-\alpha_\tau
 \le-\frac12\log(1-\gamma^2/m^2)\le\frac{\gamma^2}{m^2}.
\]
Using $\widetilde\tau\ge\gamma/\sqrt2$ and
$\zeta(\tau)\ge\zeta(\gamma)$, subtract the thresholds to obtain
\[
 \begin{aligned}
 \kappa^+(\tau)-\kappa^-(\tau)
 &\le\frac{\gamma^2}{m^2}
       +\frac{H+\log(1/\delta)+1}{m}\\
 &\quad+\frac{\sqrt2}{\gamma}\sqrt{\frac{8H}{m}}
       +\frac1\gamma\sqrt{\frac{8\log(1/\zeta(\gamma))}{m}}.
 \end{aligned}
\]
Our choice of $H$ guarantees $H\ge 1$,
$H\ge \log(1/\delta)$, and
$H\ge \log(1/\zeta(\gamma))$.
\end{proof}

\begin{lemma}[One-batch indistinguishability]
\label{ni:one-batch}
Suppose the rate separation in \Cref{ni:codebook-simulation} holds.
Fix $r\in[B]$, an exposed
prefix $S_1,\ldots,S_{r-1}$, and a seed $\omega$. Any batch of at most
$P$ queries determined by this prefix and seed receives different
responses from the full and $r$-truncated oracles with probability at
most $PB\delta$, over the independent suffix $S_r,\ldots,S_B$.
\end{lemma}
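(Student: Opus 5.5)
Because the batch is determined by the prefix $S_1,\ldots,S_{r-1}$ and $\omega$, each of its at most $P$ queries $(x,\tau)$ is a fixed point and fixed total noise $\tau\ge\gamma$ once we condition on the prefix and seed. The suffix $S_r,\ldots,S_B$ keeps its independent uniform law. A union bound over the queries therefore reduces the claim to a single fixed query: we will show that the full and $r$-truncated responses to $(x,\tau)$ differ with probability at most $B\delta$ over the suffix.

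By \eqref{ni:truncation}, the two responses agree in blocks $j\le r$ and on the padding. So a disagreement needs some $j>r$ with
\[
\Bigl(\prod_{i<j}a^{S_i}_\tau(x_i)\Bigr)\bigl(b^{S_j}_\tau(x_j)-s^0_\tau(x_j)\bigr)\ne0 .
\]
In particular it needs both $a^{S_r}_\tau(x_r)>0$ and $b^{S_j}_\tau\not\equiv s^0_\tau$ for that $j$. We split according to where $\kappa_r$ sits relative to the thresholds of \eqref{ni:sparse-threshold}--\eqref{ni:dense-threshold} at this particular $\tau$.

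\emph{Sparse case}, $\kappa_r\le\kappa^-(\tau)$. The point $x_r$ is fixed and $S_r$ is a uniform $n_r$-subset independent of it. By \eqref{ni:sparse-hiding}, $\Pr_{S_r}[a^{S_r}_\tau(x_r)>0]\le\delta$. Off this event every later correction is multiplied by zero, so the responses coincide.

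\emph{Dense case}, $\kappa_r>\kappa^-(\tau)$. For every $j>r$, monotonicity of the rates and the separation hypothesis give
\[
\kappa_j\ge\kappa_{r+1}>\kappa^-(\tau)+C\Bigl(\sqrt{H/m}+H/m\Bigr)\ge\kappa^+(\tau),
\]
where the last inequality uses the uniform width bound of \Cref{ni:window} with $C$ a large enough constant depending on $\gamma$. Then \eqref{ni:dense-hiding} gives $\Pr_{S_j}[b^{S_j}_\tau\not\equiv s^0_\tau]\le\delta$ for each such $j$. A union bound over at most $B-1$ blocks bounds the probability that some later local answer is nonnull by $B\delta$. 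On the complement, every later block returns $s^0_\tau(x_j)$ regardless of the gates, which is exactly the truncated response.

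In both cases the per-query disagreement probability is at most $B\delta$, and summing over the $P$ queries gives $PB\delta$. The only delicate point is that the case split depends on the query's own $\tau$, so different queries in the same batch may land in different cases. This is harmless, because the dichotomy is applied query by query before the union bound, and both \Cref{ni:hiding} and \Cref{ni:window} hold uniformly for every fixed $\tau\ge\gamma$. The remaining point to verify is that \eqref{ni:dense-hiding} is an event purely about $S_j$ (equality as functions), so it needs no independence from $x$. The sparse case does use independence of $S_r$ from $x_r$, and this is guaranteed because the batch is measurable with respect to the prefix and $\omega$ only.
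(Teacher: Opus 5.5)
Your proposal is correct and follows essentially the same argument as the paper. For each query you split on whether $\kappa_r\le\kappa^-(\tau)$: in the sparse case you use blind-gate hiding for block $r$, and in the dense case you use dense nullity for every later block together with the uniform width bound. You then take union bounds over the later blocks and over the batch, and note, as the paper does, that the case split is made per query, so mixed noise levels within one batch cause no difficulty.
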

\begin{proof}
For $r=B$, the two oracles are identical. Otherwise fix a query
$(x,\tau)$ from the batch. For a block $j>r$, the response difference is
\[
 \left(\prod_{i<j}a^{S_i}_\tau(x_i)\right)
                 \bigl(b^{S_j}_\tau(x_j)-s^0_\tau(x_j)\bigr).
\]
If $\kappa_r\le\kappa^-(\tau)$, any nonzero difference requires
$a^{S_r}_\tau(x_r)>0$, an event of probability at most $\delta$ by
\Cref{ni:hiding}. This single event controls the entire later
suffix. If instead $\kappa_r>\kappa^-(\tau)$, then for every $j>r$,
\[
 \kappa_j\ge\kappa_{r+1}
 >\kappa_r+\bigl(\kappa^+(\tau)-\kappa^-(\tau)\bigr)
 >\kappa^+(\tau),
\]
where the rate separation dominates the width by \Cref{ni:window}. Thus
$b^{S_j}_\tau\equiv s^0_\tau$ except on an event of probability at
most $\delta$ for each such $j$. In either case, a union bound bounds
the probability of a different answer to this query by $(B-r)\delta$.
Another union bound over the batch gives
$P(B-r)\delta\le PB\delta$.

The conditioning fixes both the locations and the noise levels of the
batch, but leaves the suffix codebooks independent with their original
laws. Hence the fixed-noise estimates above apply separately to each
query. No event simultaneous over a continuum of noise levels, and no
noise discretization, is needed.
\end{proof}

\paragraph{The general progressive coupling.}
The remaining argument uses no properties of scores or codebooks. We
state it separately so that the box proof can invoke exactly the same
lemma.
\begin{lemma}[Progressive simulation]
\label{ni:progressive}
Let $n,P\ge1$ be integers, let $\Theta=(\Theta_1,\ldots,\Theta_n)$ have
independent finite-valued coordinates, and let $\omega$ be an independent
seed containing all algorithmic and output randomness. Fix an
oracle $O_\Theta$, truncated oracles $O^{[r]}_\Theta$ for
$r\in[n]$, and $\delta_{\mathrm{batch}}\ge0$. Assume that for every $r\in[n]$:
\begin{enumerate}
\item $O^{[r]}_\Theta$ depends only on $(\Theta_1,\ldots,\Theta_r)$;
\item for every fixed prefix $(\Theta_1,\ldots,\Theta_{r-1})$ in its
support, seed, and batch of at most $P$ queries determined by them, the
full and truncated oracles disagree on the batch with probability at most
$\delta_{\mathrm{batch}}$ over the independent suffix.
\end{enumerate}
For every integer $0\le R\le n$ and every algorithm using at most $R$
rounds and $P$ queries per round, couple its actual interaction to the
progressive interaction that answers round $r$ with $O^{[r]}_\Theta$.
Their transcripts and outputs disagree with probability at most
$R\delta_{\mathrm{batch}}$. The progressive transcript and output depend
only on $(\omega,\Theta_1,\ldots,\Theta_R)$.
\end{lemma}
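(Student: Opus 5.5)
We run the two interactions on a single probability space, with the same $\Theta$ and the same seed $\omega$. The actual interaction answers every round with $O_\Theta$. The progressive interaction answers round $r$ with $O^{[r]}_\Theta$. Early stopping is padded with empty batches, so both interactions have exactly $R$ rounds. Write $\widetilde{\mathsf T}_{r}$ for the progressive transcript after round $r$ and $\mathsf T_r$ for the actual one, with $\mathsf T_0=\widetilde{\mathsf T}_0=\emptyset$. The final output is a deterministic function of the transcript and $\omega$, so it suffices to bound the probability that the transcripts differ.

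\emph{Step 1 (measurability of the progressive transcript).} We show by induction on $r$ that $\widetilde{\mathsf T}_r$ is a function of $(\omega,\Theta_1,\ldots,\Theta_r)$. The round-$(r+1)$ batch is chosen from $\omega$ and $\widetilde{\mathsf T}_r$, so it depends only on $(\omega,\Theta_{\le r})$. Its answers come from $O^{[r+1]}_\Theta$, which depends only on $\Theta_{\le r+1}$ by assumption~1. This closes the induction. Taking $r=R$ gives the dependence claim, since $R\le n$.

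\emph{Step 2 (first disagreement).} Let $F_r$ be the event that $\mathsf T_{r-1}=\widetilde{\mathsf T}_{r-1}$ but the round-$r$ answers differ. When the transcripts agree up to round $r-1$, both interactions submit the same batch $\beta_r$. This batch equals the progressive batch, which is a function of $(\omega,\Theta_{\le r-1})$. Hence
\[
F_r\subseteq\{O_\Theta(\beta_r)\neq O^{[r]}_\Theta(\beta_r)\}.
\]
Now condition on $\omega$ and on $\Theta_{\le r-1}=\theta_{<r}$. Because $\omega$ and the coordinates of $\Theta$ are independent, the suffix $(\Theta_r,\ldots,\Theta_n)$ keeps its original law under this conditioning, and the batch $\beta_r$ is fixed. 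Assumption~2 then bounds the conditional probability of disagreement by $\delta_{\mathrm{batch}}$. We only need this for prefixes in the support, and finiteness of the value sets makes the conditioning elementary. Averaging over $\omega$ and the prefix gives $\Pr[F_r]\le\delta_{\mathrm{batch}}$.

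\emph{Step 3 (union bound).} If the transcripts differ at all, there is a first round $r\le R$ where they differ, so $\{\mathsf T_R\ne\widetilde{\mathsf T}_R\}=\bigcup_{r\le R}F_r$. The union bound gives probability at most $R\delta_{\mathrm{batch}}$. Outputs are the same function of $(\omega,\text{transcript})$, so they disagree with no larger probability.

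The only delicate point is Step 2. The batch that is fixed by the conditioning must be the \emph{progressive} batch, not the actual one: the actual batch may depend on later coordinates of $\Theta$, and then assumption~2 would not apply. This is handled by intersecting with the agreement event $\{\mathsf T_{r-1}=\widetilde{\mathsf T}_{r-1}\}$ before conditioning, since on that event the two batches coincide.
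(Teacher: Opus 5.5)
Your proposal is correct and follows essentially the same argument as the paper. The paper likewise couples both interactions through the same $(\Theta,\omega)$, proves by induction that the progressive transcript depends only on the prefix, and bounds the disagreement probability on the \emph{progressive} round-$r$ batch by $\delta_{\mathrm{batch}}$ by conditioning on prefix and seed. It then contains the first-disagreement event in that event and takes a union bound over rounds, and your remark about fixing the progressive rather than the actual batch matches its observation that independence is used only in the progressive interaction.
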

\begin{proof}
Run the progressive interaction with $O^{[r]}_\Theta$ in round $r$.
Its transcript after that round depends only on
$(\omega,\Theta_1,\ldots,\Theta_r)$, by induction. Thus its round-$r$
batch is independent of the fresh suffix conditional on the prefix and seed.
Let $D_r$ be the event that the full and truncated oracles disagree on
this progressive batch. The hypothesis, applied for each prefix and seed
and then averaged, gives $\Pr(D_r)\le\delta_{\mathrm{batch}}$.
Couple the actual interaction using the same $(\Theta,\omega)$.
If the transcripts first disagree in round $r$, they chose the same batch
in that round, so this first-disagreement event is contained in $D_r$.
Their disagreement probability is at most
$\sum_{r=1}^R\Pr(D_r)\le R\delta_{\mathrm{batch}}$. The same seed determines
the outputs, proving the output bound. For $R=0$, the outputs can be
identical. No estimate is conditioned on earlier agreement. Independence
is used only in the progressive interaction, where prefix dependence
holds identically.
\end{proof}

\begin{proof}[Proof of \Cref{ni:codebook-simulation}]
For any exposed prefix and seed, \Cref{ni:one-batch} bounds the probability of a different
answer to the next batch by $PB\delta$. Apply \Cref{ni:progressive} with
$\Theta_j=S_j$, $n=B$, $O_\Theta=\widehat s^{\bfS}$,
$O^{[r]}_\Theta=\widehat s^{\bfS,[r]}$, and
$\delta_{\mathrm{batch}}=PB\delta$. It gives
\eqref{ni:simulation-error} and the required prefix dependence.
No conditioning on near-isotropy is used here. The public parameters in
\Cref{ni:parameter-lemma} give $RPB\delta\le\rho^2/(100d)$;
with $\rho=1/10$, this is smaller than $0.001$, as required by
\Cref{lm:one-block-per-round}.
\end{proof}

\subsection{Proof of \Cref{lm:hidden}}
\label{ni:test-section}
\begin{lemma}[Complete version of \Cref{lm:hidden}]
\label{ni:target-test}
Use the one-block gate, cutoff, and budgets of
\Cref{ni:global-accuracy}. Use $\kappa_{\max}$ and $c_0$ fixed in \Cref{ni:parameters}. Suppose
\begin{equation}\label{ni:test-condition}
 \frac1\gamma\sqrt{\frac{8\log(1/\zeta(\gamma))}m}\le c_0.
\end{equation}
For an integer $1\le n\le2^m$ with $(\log n)/m\le\kappa_{\max}$, and
for each $n$-element codebook $S\subseteq V_m$, define the target-level
test
\begin{equation}\label{ni:gate-test}
 E(S)=\{x\in\mathbb R^m:a^S_\gamma(x)=1\}
 =\left\{x\in\mathbb R^m:
                   \max_{y\in S}\ell_\gamma(y,x)\ge t_\gamma\right\}.
\end{equation}
Every fixed codebook satisfies
\begin{equation}\label{ni:test-completeness}
 p^S_\gamma(E(S))\ge1-\zeta(\gamma).
\end{equation}
If $S$ is uniform among the $n$-element subsets of $V_m$, then
\begin{equation}\label{ni:test-bounds}
 \sup_{x\in\mathbb R^m}\Pr_S[x\in E(S)]
     \le ne^{-t_\gamma}\le e^{-c_0m}.
\end{equation}
In particular, for any random $V\in\mathbb R^m$ independent of this
random codebook, $\Pr[V\in E(S)]\le e^{-c_0m}$.
\end{lemma}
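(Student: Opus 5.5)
The plan has three parts. First I identify $E(S)$ with the gate's acceptance set. Then the completeness bound comes from the gate-acceptance estimate already proved. The soundness bound is a union bound over codewords, followed by a short arithmetic check that uses \eqref{ni:test-condition}.

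\emph{Identification.} By \eqref{ni:gate-definition}, $a^S_\gamma(x)=1$ holds exactly when $1+\max_{y\in S}\ell_\gamma(y,x)-t_\gamma\ge 1$, that is, when $\max_{y\in S}\ell_\gamma(y,x)\ge t_\gamma$. This gives the two descriptions of $E(S)$ in \eqref{ni:gate-test}. It also means \eqref{ni:test-completeness} is the special case $\tau=\gamma$ of \eqref{ni:gate-completeness} in \Cref{ni:local-accuracy}. That estimate holds for every fixed codebook: condition on the generating codeword $y$, and apply the lower tail in \Cref{ni:concentration} with deviation $\gamma^{-1}\sqrt{8m\log(1/\zeta(\gamma))}$.

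\emph{Soundness.} Fix $x$ and give $S$ a uniform random ordering, so that each codeword is marginally uniform on $V_m$. For a uniform vertex $Y$, \eqref{ni:likelihood-identities} gives $\mathbb E_Y L_\gamma(Y,x)=1$. Markov's inequality then yields $\Pr_Y[\ell_\gamma(Y,x)\ge t_\gamma]\le e^{-t_\gamma}$, and a union bound over the $n$ codewords gives $\Pr_S[x\in E(S)]\le ne^{-t_\gamma}$. This step needs no independence among the codewords. It is the same computation as \eqref{ni:blind-gate}, except that the cutoff is $t_\gamma$ rather than $t_\gamma-1$.

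\emph{Final inequality.} Using $\log n\le m\kappa_{\max}=m(\alpha_\gamma-2c_0)$ together with \eqref{ni:test-condition}, which says $\gamma^{-1}\sqrt{8m\log(1/\zeta(\gamma))}\le c_0m$, we get $t_\gamma\ge m\alpha_\gamma-c_0m$. Hence
\[
\log n-t_\gamma\le m(\alpha_\gamma-2c_0)-m\alpha_\gamma+c_0m=-c_0m.
\]
For a random $V$ independent of $S$, condition on $V$, apply the fixed-$x$ bound, and average. This gives $\Pr[V\in E(S)]\le e^{-c_0m}$. I expect no real obstacle here; the only points needing care are using the marginal-uniform union bound for sampling without replacement and checking the margin arithmetic.
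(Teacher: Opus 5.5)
Your proposal is correct and follows essentially the same route as the paper. Completeness is the gate-acceptance estimate at $\tau=\gamma$. Blindness comes from Markov's inequality with $\mathbb E_Y L_\gamma(Y,x)=1$, a union bound over the uniformly marginal codewords of a random ordering, and the margin bound $t_\gamma\ge m(\alpha_\gamma-c_0)=m(\kappa_{\max}+c_0)$; conditioning on $V$ then gives the final claim.
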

\begin{proof}
The test is evaluated at total noise \(\gamma\), corresponding to the target distribution before additional query noise is added. Equation~\eqref{ni:test-completeness}
is exactly the gate-completeness estimate
\eqref{ni:gate-completeness} at this noise level.

For blindness, fix $x\in\mathbb R^m$. A uniform cube vertex $Y$ obeys
\[
 \Pr_Y[\ell_\gamma(Y,x)\ge t_\gamma]
 =\Pr_Y[L_\gamma(Y,x)\ge e^{t_\gamma}]
 \le e^{-t_\gamma}\mathbb E_YL_\gamma(Y,x)=e^{-t_\gamma}.
\]
A union bound over a uniform random ordering of $S$ gives
$\Pr_S[x\in E(S)]\le ne^{-t_\gamma}$. The margins in
\eqref{ni:test-condition} imply
\[
 t_\gamma\ge m(\alpha_\gamma-c_0)
             =m(\kappa_{\max}+c_0),
\]
so $n\le e^{m\kappa_{\max}}$ yields the second inequality in
\eqref{ni:test-bounds}. Finally, condition on the value of $V$ and
average the fixed-point bound. Independence ensures that conditioning on
$V$ does not change the uniform codebook law.
\end{proof}

For the public sizes chosen in \Cref{ni:parameter-lemma}, the two
failure bounds above are at most $\rho^2/100$. Taking $\rho=1/10$
therefore gives the constants of \Cref{lm:hidden}, with slack.
The restriction to the prescribed rates is necessary: this is not a
claim about arbitrary subsets of the full cube.

Applying the unconditioned TV calculation in the proof of \Cref{thm:parallel-lower-bound} with the bounds of \Cref{ni:codebook-simulation} and \Cref{ni:target-test} gives
\[
\mathbb E_{\mathbf S}d_{\mathrm{TV}}(Q_{\mathbf S},\pi^{\mathbf S})
\ge
1-\zeta(\gamma)-e^{-c_0m}-RPB\delta,
\]
where the expectation is under the independent codebook law.

\subsection{Proof of \Cref{lem:section33-upper-bound}}
\label{ni:upper}
\begin{lemma}[Complete version of \Cref{lem:section33-upper-bound}]
\label{ni:upper-bound}
Fix $\gamma>0$. For every sufficiently large $d$, there is a randomized
sampler that, for every product target $\pi^{\bfS}$ with the parameters of
\Cref{ni:parameter-lemma} and every fixed legal oracle with
accuracy $\niScoreError$, uses at most
\[
 O_\gamma\bigl(\sqrt{dm}\,(\log d)^3\bigr)
\]
queries and has an output law $Q$ satisfying
\begin{equation}\label{ni:upper-error}
 \niTV(Q,\pi^{\bfS})
 =O_\gamma\!\bigl(d^{-1/2}+\niScoreError\sqrt{\log d}\bigr).
\end{equation}
The query bound is a worst-case bound, and the algorithm may use one query
per round. In particular, accuracy
$\niScoreError=\poly(d)^{-1}$ suffices for total variation error $1/10$ using
$\widetilde O(d^{5/6})$ queries and adaptive rounds.
\end{lemma}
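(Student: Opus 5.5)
The plan is to specialize the sequential high-accuracy sampler of \cite{ChenEtAl2026HighAccuracy} to our product family, stopping at a small positive query noise and using the product structure to control the regularity quantity that governs its query complexity. First I reduce to sampling a slightly smoothed target. Fix $\sigma_0^2=\gamma^2/d$ and $\tau_0=\sqrt{\gamma^2+\sigma_0^2}$. Since $\pi^{\bfS}_{\tau_0}$ and $\pi^{\bfS}=\pi^{\bfS}_\gamma$ are mixtures over the same center law $\mu^{\bfS}$, convexity of total variation gives
\[
\niTV(\pi^{\bfS}_{\tau_0},\pi^{\bfS})\le\niTV\bigl(\mathcal N(0,\tau_0^2I_d),\mathcal N(0,\gamma^2I_d)\bigr)=O\bigl(\sqrt d\,\sigma_0^2/\gamma^2\bigr)=O(d^{-1/2}).
\]
This produces the $d^{-1/2}$ term in \eqref{ni:upper-error}. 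It therefore suffices to output an approximate sample of $(\pi^{\bfS})^{(\sigma_0)}$. The oracle is legal at every $\sigma\ge\sigma_0$, which is exactly what a reverse-diffusion sampler run from large noise down to the early-stopping level $\sigma_0$ requires. I would initialize at $\mathcal N(0,\sigma_{\max}^2I_d)$ with $\sigma_{\max}=\poly(d)$. Because $\|\mu^{\bfS}\|$ is supported in $B_2(0,\sqrt d)$, this initialization costs $O(d^{-1/2})$ in total variation. The noise ratio $\sigma_{\max}/\sigma_0$ is then polynomial, so it contributes only logarithmic factors.

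Second, I would verify the regularity hypothesis of \cite{ChenEtAl2026HighAccuracy}. As I understand it, their query bound depends on the dimension only through the size of the denoiser Jacobian, that is, the posterior covariance $\niCov(Y\mid Y+\tau Z=x)$ scaled by $\tau^{-2}$, measured in Frobenius norm along the trajectory. For our targets the center law is a product across blocks and padding, so after a Gaussian observation the posterior factorizes. Hence the posterior covariance is block diagonal with $B$ blocks of size $m$, plus diagonal padding entries bounded by one. Each block has entries in $[-1,1]$, by \eqref{ni:posterior-hessian} and support in $[-1,1]^m$. This gives
\[
\bigl\|\niCov(Y\mid Y+\tau Z=x)\bigr\|_{\mathrm F}^2\le Bm^2+d_{\mathrm{pad}}\le dm+d
\]
pointwise in $x$ and for every $\tau$. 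The factor $\sqrt{dm}$ in the claimed query bound comes exactly from this estimate, replacing the generic $d$ that would follow from the trace bound alone. Feeding it into their complexity estimate should give $O_\gamma(\sqrt{dm}\,\mathrm{polylog})$ steps per unit of log-noise. Summing over $O(\log d)$ log-noise units, with additional logarithms from the accuracy target $\poly(d)^{-1}$, gives the $(\log d)^3$ factor.

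Third, I would track the score error. Their analysis charges the $L^2$ guarantee \eqref{eqn:L2_approx} against the noisy target law at each queried level, giving a total variation contribution of order $\niScoreError$ times the square root of the number of log-noise scales, namely $O(\niScoreError\sqrt{\log d})$. Combining the three error sources yields \eqref{ni:upper-error}. With $m=\Theta(d^{2/3}(\log d)^{1/3})$ from \Cref{ni:parameter-lemma}, the query count is $\widetilde O(d^{5/6})$, and the sampler uses one query per round.

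The main obstacle is the second step: matching the exact form of the regularity parameter in \cite{ChenEtAl2026HighAccuracy}. If their bound is stated through an effective or intrinsic dimension instead of a Jacobian Frobenius norm, I would first try to re-derive the relevant step of their discretization analysis. In that analysis the local error of the exponential-integrator or midpoint step is controlled by $\E\|\nabla s\|_{\mathrm F}^2$ along the trajectory, and I would substitute the blockwise bound above. The pointwise nature of the bound means no distributional assumptions on the trajectory are needed, so the substitution should go through for every legal oracle.
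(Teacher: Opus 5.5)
Your proposal follows the paper's route: specialize \cite{ChenEtAl2026HighAccuracy} with a geometric noise schedule from $\sigma_1^2=\gamma^2/d$ up to $\mathrm{poly}(d)$. Use the blockwise factorization of the posterior to bound the denoiser Jacobian pointwise in Frobenius norm by $O_\gamma(\sqrt{dm})$. Then collect the early-stopping, initialization, and $O(\varepsilon_{\mathrm{score}}\sqrt{\log d})$ score-error terms.

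Two small points in the paper's proof need to be made explicit in yours.

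\begin{itemize}
\item The relevant Jacobian is that of $D_\sigma(x)=\E[X_0\mid X_0+\sigma Z=x]$ for the target sample $X_0$, rather than the rescaled center covariance you wrote. It equals $\frac{\gamma^2}{\tau^2}I_d+\frac{\sigma^2}{\tau^4}\Sigma_\sigma(x)$, and the extra identity term adds only $\sqrt d$ to the bound.
\item The cited high-accuracy sampler uses rejection sampling, so its query count is random. To get the claimed worst-case bound, stop at the $O(K+\log d)$ budget, which holds with probability $1-d^{-2}$; this costs at most $d^{-2}$ in total variation.
\end{itemize}
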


We apply \cite{ChenEtAl2026HighAccuracy} using the following pointwise bound on the true denoiser. No regularity of the approximate oracle is assumed.

\begin{lemma}[Regularity of the true denoiser]
\label{ni:denoiser}
Let $U\sim\mu^{\bfS}$, let $Z_0,Z\sim\mathcal N(0,I_d)$ be independent of
each other and of $U$, and put $X_0=U+\gamma Z_0$. For $\sigma>0$, define
$D_\sigma(x)=\mathbb E[X_0\mid X_0+\sigma Z=x]$ using the Gaussian
posterior at every $x$. Then
\begin{equation}\label{ni:denoiser-bound}
 \sup_{\sigma>0,\ x\in\mathbb R^d}\|\nabla D_\sigma(x)\|_F
       \le\sqrt d+\frac{\sqrt{dm}}{4\gamma^2}.
\end{equation}
\end{lemma}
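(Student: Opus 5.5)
The plan is to compute the Jacobian of the denoiser exactly through a conditional-covariance identity. Then I will use the block product structure to reduce its Frobenius norm to per-block covariances of cube-valued vectors.

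\textbf{Step 1 (Jacobian as posterior covariance).} The law of $X_0$ is a finite Gaussian mixture with smooth positive density. Its posterior given $X_0+\sigma Z=x$ has density proportional to $p_{X_0}(u)\exp(\langle u,x\rangle/\sigma^2-\|u\|_2^2/(2\sigma^2))$, which is an exponential family in the natural parameter $x/\sigma^2$. Differentiating under the integral gives
\[
 \nabla D_\sigma(x)=\frac{\operatorname{Cov}(X_0\mid X_0+\sigma Z=x)}{\sigma^2}.
\]
This is the same computation as \eqref{ni:posterior-hessian}, applied to $X_0$ instead of a bounded variable. Differentiation under the integral is justified by Gaussian tails.

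\textbf{Step 2 (condition on $U$).} Given $U$ and $X=x$, the variable $X_0$ is Gaussian with prior $\mathcal N(U,\gamma^2I_d)$ and observation noise $\sigma^2$. Its conditional mean is therefore $(\sigma^2U+\gamma^2x)/(\gamma^2+\sigma^2)$ and its conditional covariance is $\frac{\gamma^2\sigma^2}{\gamma^2+\sigma^2}I_d$. The law of total covariance then gives
\[
 \frac{\operatorname{Cov}(X_0\mid X=x)}{\sigma^2}
 =\frac{\gamma^2}{\gamma^2+\sigma^2}I_d+\frac{\sigma^2}{(\gamma^2+\sigma^2)^2}\operatorname{Cov}(U\mid X=x).
\]
The first term has Frobenius norm at most $\sqrt d$. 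For the second, AM--GM gives $(\gamma^2+\sigma^2)^2\ge4\gamma^2\sigma^2$, so its prefactor is at most $1/(4\gamma^2)$ uniformly in $\sigma$.

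\textbf{Step 3 (block structure).} The law $\mu^{\bfS}$ is a product over the $B$ blocks and the padding coordinates, and both noises are isotropic. Hence the posterior of $U$ given $X=x$ factorizes, and $\operatorname{Cov}(U\mid X=x)$ is block diagonal. Each diagonal block $C_j$ is the covariance of a random vector in $\{\pm1\}^m$, so it is positive semidefinite with $\|C_j\|_F\le\operatorname{tr}C_j\le m$. Each padding coordinate contributes a scalar variance at most $1$. Therefore
\[
 \|\operatorname{Cov}(U\mid X=x)\|_F^2\le Bm^2+d_{\mathrm{pad}}\le dm,
\]
using $Bm\le d$ and $m\ge1$. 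The triangle inequality then yields \eqref{ni:denoiser-bound}.

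\textbf{Main obstacle.} The only step that is not purely algebraic is the rigorous differentiation in Step 1, since $X_0$ is unbounded. I expect to handle it by dominated convergence: the mixture has finitely many Gaussian components, which gives uniform integrability of $\|u\|_2^2$ times the likelihood on compact sets of $x$. The remaining steps are exact identities.
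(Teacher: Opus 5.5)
Your proposal is correct and matches the paper's proof. Both arrive at $\nabla D_\sigma(x)=\frac{\gamma^2}{\tau^2}I_d+\frac{\sigma^2}{\tau^4}\operatorname{Cov}(U\mid X_0+\sigma Z=x)$ with $\tau^2=\gamma^2+\sigma^2$, bound this block-diagonal cube-valued covariance in Frobenius norm by $\sqrt{dm}$ (the paper via $\|\Sigma\|_F^2\le\|\Sigma\|_{\mathrm{op}}\operatorname{tr}\Sigma$, you blockwise via $\|C_j\|_F\le\operatorname{tr}C_j$), and use $\sigma^2/\tau^4\le 1/(4\gamma^2)$. The only difference is the order of steps: the paper first writes $D_\sigma(x)=\frac{\gamma^2}{\tau^2}x+\frac{\sigma^2}{\tau^2}\mathbb E[U\mid X_0+\sigma Z=x]$ and then differentiates only the posterior mean of the finitely supported $U$, which sidesteps the differentiation-under-the-integral issue you flag in Step 1.
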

\begin{proof}
Write $\tau^2=\gamma^2+\sigma^2$ and
$\Sigma_\sigma(x)=\niCov(U\mid X_0+\sigma Z=x)$.
Gaussian conditioning and differentiation of the posterior mean give
\begin{equation}\label{ni:denoiser-identity}
 \nabla D_\sigma(x)=\frac{\gamma^2}{\tau^2}I_d
                        +\frac{\sigma^2}{\tau^4}\Sigma_\sigma(x).
\end{equation}
Indeed, conditionally on $U=u$ and $X_0+\sigma Z=x$, the mean of $X_0$ is
$(\gamma^2x+\sigma^2u)/\tau^2$, while the derivative of the posterior
mean of $U$ is $\Sigma_\sigma(x)/\tau^2$.

The posterior of $U$ retains its product structure across the public
blocks. Each hidden covariance block is positive semidefinite and has
trace at most $m$, since its coordinates take values in $\{-1,+1\}$.
The padding coordinates remain independent, with posterior variances at
most one. Consequently,
\[
 \operatorname{tr}\Sigma_\sigma(x)\le d,\qquad
 \|\Sigma_\sigma(x)\|_{\mathrm{op}}\le m,\qquad
 \|\Sigma_\sigma(x)\|_F^2\le
 \|\Sigma_\sigma(x)\|_{\mathrm{op}}\operatorname{tr}\Sigma_\sigma(x)\le dm.
\]
Now use $\gamma^2/\tau^2\le1$ and
$\sigma^2/(\gamma^2+\sigma^2)^2\le1/(4\gamma^2)$ in
\eqref{ni:denoiser-identity}.
\end{proof}

\begin{proof}[Proof of \Cref{ni:upper-bound}]
Write $\pi=\pi^{\bfS}$. The pointwise bound in \Cref{ni:denoiser} verifies Assumptions 4.6 and 4.8 of \cite{ChenEtAl2026HighAccuracy}, including the hypothesis under its
auxiliary one-step distribution. We apply \cite[Theorem 4.9]{ChenEtAl2026HighAccuracy} with data distribution
$\pi$ and additional Gaussian noise; these are precisely the scores
available in the oracle model.

Choose a step size $h$ that is a sufficiently small constant multiple
of $1/(\sqrt{dm}(\log d)^2)$, with the constant depending only on
$\gamma$, and use the schedule
\[
 \sigma_1^2=\gamma^2/d,\qquad
 \sigma_{k+1}^2=(1+h)\sigma_k^2.
\]
Let $K$ be the first index with $\sigma_K^2\ge(1+\gamma^2)d^2$.
Then
\begin{equation}\label{ni:noise-schedule}
 K=O_\gamma\!\bigl(\sqrt{dm}(\log d)^3\bigr),\qquad
 Kh=O_\gamma(\log d).
\end{equation}
With internal tolerance $d^{-10}$, this schedule meets the cited
theorem's step-size condition. If $Q_*$ is the output law before
query-budget truncation, the theorem gives
\begin{equation}\label{ni:upper-kl}
 \begin{aligned}
 \niKL(\pi^{(\sigma_1)}\|Q_*)
 =O\Bigl(&\niKL(\pi^{(\sigma_K)}\|\mathcal N(0,\sigma_K^2I_d))
              +Kd^{-10}\\
 &+h\sum_{k=1}^K\sigma_k^2
       \mathbb E_{\pi^{(\sigma_k)}}
       \|\widehat s(\cdot,\sigma_k)-s_{\pi,\sigma_k}\|_2^2\Bigr).
 \end{aligned}
\end{equation}
Here $\widehat s$ is the given legal additional-noise oracle. For $k<K$,
$h\sigma_k^2=\sigma_{k+1}^2-\sigma_k^2$; the terminal summand is included
only to simplify the error bound.

The identity $\mathbb E\|X_0\|_2^2=(1+\gamma^2)d$ holds for every
instance. Convexity of relative entropy yields
\[
 \niKL(\pi^{(\sigma_K)}\|\mathcal N(0,\sigma_K^2I_d))
       \le\frac{\mathbb E\|X_0\|_2^2}{2\sigma_K^2}\le\frac1{2d}.
\]
Oracle legality and \eqref{ni:noise-schedule} bound the score-error sum by
\[
 \niScoreError^2Kh=O_\gamma(\niScoreError^2\log d).
\]
Also $Kd^{-10}=O_\gamma(d^{-9}(\log d)^3)$. Thus
\eqref{ni:upper-kl} is at most
$O_\gamma(d^{-1}+\niScoreError^2\log d)$.

To remove the additional smoothing, note that $\pi$ and $\pi^{(\sigma_1)}$
have the same centers and weights. Relative-entropy convexity applied to
corresponding Gaussian components gives
\[
 \niKL(\pi\|\pi^{(\sigma_1)})
 \le\frac d2\left(\log(1+1/d)-\frac1{d+1}\right)\le\frac1{4d},
\]
using $\log(1+r)-r/(1+r)\le r^2/2$ for $r\ge0$.
Pinsker's inequality and the triangle inequality give
\[
 \niTV(Q_*,\pi)=O_\gamma(d^{-1/2}+\niScoreError\sqrt{\log d}).
\]
Finally, \cite{ChenEtAl2026HighAccuracy}, Theorem 3.1, bounds the total number of rejection-sampling
queries by $O(K+\log d)$ with probability at least $1-d^{-2}$.
Stop after this deterministic budget and output an arbitrary point if it
is exceeded. This changes the output law by at most $d^{-2}$ and proves
\eqref{ni:upper-error} with a worst-case query bound.
Since $m=\Theta(d^{2/3}(\log d)^{1/3})$, the bound is
$\widetilde O(d^{5/6})$. The error tends to zero for
$\niScoreError=\poly(d)^{-1}$.
\end{proof}

\section{Deferred proofs from \Cref{sec:convex_anisotropic_LB}}\label{app:boxes}
We give the public parameters and prove the three lemmas used for the box
construction. The same parameter choice supports the lower bound and the
recovery algorithm. Throughout, the ambient dimension is $d$, and $\sigma>0$
denotes additional Gaussian noise.

\subsection{Public parameters and gates}\label{app:box-parameters}
Fix a polynomial query budget $P=\poly(d)$ and an inverse-polynomial
score accuracy $\niScoreError=\poly(d)^{-1}$. These public bounds are fixed
before choosing the instance family. Implicit constants may depend on these
fixed bounds. For sufficiently large $d$, put
\begin{equation}\label{eq:appB-parameters}
\begin{gathered}
 M=\left\lceil100(dP+1)\right\rceil,
 \qquad q=\frac{100Md}{\niScoreError},
 \qquad a_i=\frac14q^{-(i-1)},\\
 h_i=\frac{a_i}{100M},\qquad
 t_i=\frac{4a_i}{\niScoreError}\quad (i\in[d]).
\end{gathered}
\end{equation}
Here $q$ is the ratio between consecutive spatial scales, $h_i$ is the
interval half-length, and $t_i$ is the noise cutoff.
For $k\in[M]$, define
\begin{equation}\label{eq:appB-intervals}
 c_i(k)=a_i+\left(k-\frac12\right)\frac{a_i}{M},
 \qquad I_i(k)=[c_i(k)-h_i,c_i(k)+h_i].
\end{equation}
Choose $\theta_1,\ldots,\theta_d$ independently and uniformly from $[M]$,
and let
\[
 K^\theta=\prod_{i=1}^d I_i(\theta_i),
 \qquad \pi^\theta=\operatorname{Unif}(K^\theta).
\]
All parameters and candidate intervals are public; only the labels are hidden.

For an interval $I$, write $\operatorname{dist}(z,I)=\inf_{u\in I}|z-u|$.
For $i<d$, set $g_{i,\sigma}^k\equiv1$ when $\sigma>t_{i+1}$, and otherwise set
\begin{equation}\label{eq:appB-gates}
 g_{i,\sigma}^k(z)=
 \min\left\{1,\max\left\{0,
  2-\frac{\operatorname{dist}(z,I_i(k))}{10h_i}\right\}\right\}.
\end{equation}
Thus, at low noise, the gate is one within distance $10h_i$ of its interval
and zero at distance at least $20h_i$.

\begin{lemma}[Geometry and gates]\label{lem:appB-geometry}
For the parameters above, every $K^\theta$ is full-dimensional and contained
in $B_2(0,1)$. The intervals $I_i(k)$ lie in $[a_i,2a_i]$, are pairwise
disjoint, and have center spacing $a_i/M=100h_i$.
Moreover, for $i<d$:
\begin{enumerate}
\item If $\sigma\le t_{i+1}$, the sets
      $\{z:g_{i,\sigma}^k(z)>0\}$ are pairwise disjoint over $k\in[M]$.
\item For independent $U_i\sim\operatorname{Unif}(I_i(k))$ and
      $Z_i\sim N(0,1)$, at every $\sigma>0$,
\begin{equation}\label{eq:appB-gate-tail}
 \Pr\bigl[g_{i,\sigma}^k(U_i+\sigma Z_i)<1\bigr]
 \le 2\exp\left(-\frac{50h_i^2}{\sigma^2}\right).
\end{equation}
\end{enumerate}
\end{lemma}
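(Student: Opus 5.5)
The proof is a direct verification from the explicit parameters in \eqref{eq:appB-parameters} and \eqref{eq:appB-intervals}; no probabilistic input is needed beyond a one-dimensional Gaussian tail.

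\emph{Interval geometry.} The centers satisfy $c_i(k)\in[a_i+a_i/(2M),\,2a_i-a_i/(2M)]$. The half-length is $h_i=a_i/(100M)<a_i/(2M)$, so every $I_i(k)$ lies in $[a_i,2a_i]$. Consecutive centers differ by $a_i/M=100h_i$ while each interval has length $2h_i$, hence the intervals are pairwise disjoint. Each interval has positive length, so $K^\theta$ is full-dimensional. For containment in the unit ball, every point of $K^\theta$ has $|x_i|\le 2a_i=\tfrac12 q^{-(i-1)}$. Since $q\ge 2$ for large $d$,
\[
\|x\|_2^2\le \tfrac14\sum_{i\ge1}q^{-2(i-1)}\le \tfrac14\cdot\tfrac{1}{1-q^{-2}}<1 .
\]

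\emph{Disjointness of gate supports.} For $\sigma\le t_{i+1}$, the gate $g^k_{i,\sigma}$ is given by \eqref{eq:appB-gates}. Its support is $\{z:\operatorname{dist}(z,I_i(k))<20h_i\}$, an open interval of half-width $21h_i$ around $c_i(k)$. Two such neighborhoods for labels $k\ne k'$ have centers at distance at least $100h_i>42h_i$, so they are disjoint.

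\emph{Gate tail bound.} When $\sigma>t_{i+1}$ the gate is identically one and the probability is zero. Otherwise $g^k_{i,\sigma}(U_i+\sigma Z_i)<1$ requires $\operatorname{dist}(U_i+\sigma Z_i,I_i(k))>10h_i$. Because $U_i\in I_i(k)$, this forces $|\sigma Z_i|>10h_i$. The standard Gaussian tail $\Pr[|Z|>s]\le 2e^{-s^2/2}$ with $s=10h_i/\sigma$ then gives the bound $2\exp(-50h_i^2/\sigma^2)$. This is exactly \eqref{eq:appB-gate-tail}, valid at every $\sigma>0$.

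The only point needing care is the bookkeeping of constants, namely that the half-width $21h_i$ of the support stays below half the spacing $50h_i$. There is no substantial obstacle.
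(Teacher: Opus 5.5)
Your proposal is correct and follows the paper's proof essentially step for step. Both arguments use $h_i<a_i/(2M)$ for the interval placement, the geometric bound $\|x\|_2^2\le 1/(4(1-q^{-2}))<1$, the comparison of the $21h_i$ support half-width with the $100h_i$ center spacing, and the reduction of gate failure to $|\sigma Z_i|>10h_i$ followed by the Gaussian tail bound.
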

\begin{proof}
Since $h_i<a_i/(2M)$, the interval endpoints lie strictly between $a_i$
and $2a_i$, and distinct intervals are disjoint. Their positive lengths give
full dimension. Since $q\ge2$, every $x\in K^\theta$ satisfies
\[
 \|x\|_2^2\le4\sum_{i=1}^d a_i^2
 \le\frac{1}{4(1-q^{-2})}\le\frac13<1.
\]
At low noise, a positive gate requires
$|z-c_i(k)|<21h_i<50h_i$,
which proves disjointness. If $|\sigma Z_i|\le10h_i$, then
$\operatorname{dist}(U_i+\sigma Z_i,I_i(k))\le10h_i$, so the gate equals one.
The Gaussian tail bound proves \eqref{eq:appB-gate-tail} for $\sigma\le t_{i+1}$;
for larger $\sigma$, the gate is identically one.
\end{proof}

\subsection{Proof of \Cref{lem:box-accuracy}}\label{app:box-accuracy}
For $k\in[M]$, let $m_{i,\sigma}^k(z)$ be the posterior mean of
$U\sim\operatorname{Unif}(I_i(k))$ given $U+\sigma Z=z$, where
$Z\sim N(0,1)$ is independent. Equivalently, this is the mean obtained by
weighting the uniform interval measure by its Gaussian likelihood at $z$.
It is defined at every $z$ and belongs to
$I_i(k)\subset[a_i,2a_i]$.
The true score is
\[
 s_{i,\sigma}^\theta(x)
 =-\frac{x_i}{\sigma^2}
   +\frac{m_{i,\sigma}^{\theta_i}(x_i)}{\sigma^2}.
\]
Using the gates in \eqref{eq:appB-gates}, define the fixed oracle
\begin{equation}\label{eq:appB-oracle}
 \widehat s_{j,\sigma}^\theta(x)
 =-\frac{x_j}{\sigma^2}
  +\mathbf 1_{\{\sigma\le t_j\}}
    \left(\prod_{i<j}g_{i,\sigma}^{\theta_i}(x_i)\right)
    \frac{m_{j,\sigma}^{\theta_j}(x_j)}{\sigma^2}.
\end{equation}
The empty product is one. This is the oracle in \cref{eq:box-gated-oracle} of the main text.

\begin{lemma}[Complete version of \Cref{lem:box-accuracy}]\label{lem:appB-accuracy}
For every instance $\theta$, the oracle \eqref{eq:appB-oracle} is fixed before interaction and satisfies, for every $\sigma>0$,
\begin{equation}\label{eq:appB-accuracy}
 \mathbb E_{X\sim(\pi^\theta)^{(\sigma)}}
 \|\widehat s_\sigma^\theta(X)-s_\sigma^\theta(X)\|_2^2
 \le\frac{\niScoreError^2}{2\sigma^2}.
\end{equation}
In particular, it is legal at score accuracy $\niScoreError$.
\end{lemma}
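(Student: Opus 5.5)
We bound the squared error coordinate by coordinate at a fixed $\sigma>0$ and fixed $\theta$, then sum over $j\in[d]$. Under $(\pi^\theta)^{(\sigma)}$ the coordinates are independent, with $X_i=U_i+\sigma Z_i$, $U_i\sim\operatorname{Unif}(I_i(\theta_i))$. By the formulas for $s^\theta_{j,\sigma}$ and \eqref{eq:appB-oracle}, the coordinate-$j$ error is
\[
\frac{m^{\theta_j}_{j,\sigma}(X_j)}{\sigma^2}\Bigl(1-\mathbf 1_{\{\sigma\le t_j\}}\textstyle\prod_{i<j}g^{\theta_i}_{i,\sigma}(X_i)\Bigr).
\]
Since $m^{\theta_j}_{j,\sigma}\in[a_j,2a_j]$ and the bracket lies in $[0,1]$, its square is at most $4a_j^2/\sigma^4$. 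We split the coordinates into the discarded set $J_>=\{j:\sigma>t_j\}$ and the gated set $J_\le=\{j:\sigma\le t_j\}$.

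\emph{Discarded coordinates.} For $j\in J_>$ the bracket equals one. Since $\sigma>t_j=4a_j/\varepsilon_{\mathrm{score}}$, we have $a_j<\varepsilon_{\mathrm{score}}\sigma/4$. Because the $a_j$ decrease geometrically with ratio $q$, $J_>$ is a terminal segment $\{j_0,\ldots,d\}$. Summing the geometric series gives
\[
\sum_{j\in J_>}\frac{4a_j^2}{\sigma^4}\le\frac{4a_{j_0}^2}{\sigma^4(1-q^{-2})}\le\frac{\varepsilon_{\mathrm{score}}^2}{4\sigma^2}(1-q^{-2})^{-1}.
\]
This is at most $\varepsilon_{\mathrm{score}}^2/(3\sigma^2)$, say, since $q\ge 2$.

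\emph{Gated coordinates.} For $j\in J_\le$ and $i<j$, we have $\sigma\le t_j\le t_{i+1}$, so every gate in the prefix is the active low-noise gate. The error is zero unless some $g^{\theta_i}_{i,\sigma}(X_i)<1$. A union bound with \eqref{eq:appB-gate-tail} from \Cref{lem:appB-geometry} bounds the expected squared error by
\[
\sum_{i<j}\frac{8a_j^2}{\sigma^4}\exp\!\left(-\frac{50h_i^2}{\sigma^2}\right).
\]
This is the step where the $\sigma^{-4}$ blow-up as $\sigma\to0$ must be beaten by the tails, and it is the main obstacle. I will compare everything to $u=a_j/\sigma$. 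Since $i<j$, we have $h_i=a_i/(100M)\ge q a_j/(100M)$, and $q/(100M)=d/\varepsilon_{\mathrm{score}}$. Hence
\[
\frac{50h_i^2}{\sigma^2}\ge\frac{50d^2u^2}{\varepsilon_{\mathrm{score}}^2}.
\]
Writing $v=50d^2u^2/\varepsilon_{\mathrm{score}}^2$, each summand becomes
\[
\frac{8}{\sigma^2}\,u^2e^{-v}=\frac{8\varepsilon_{\mathrm{score}}^2}{50d^2\sigma^2}\,v e^{-v}.
\]
The constraint $\sigma\le t_j$ gives $u\ge\varepsilon_{\mathrm{score}}/4$, so $v\ge 50d^2/16$. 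Thus $ve^{-v}\le e^{-d^2}$ for large $d$, and in particular the bound is uniform in $\sigma$.

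\emph{Summing.} Summing the gated bound over the at most $d^2$ pairs $i<j$ contributes $O(\varepsilon_{\mathrm{score}}^2e^{-d^2}/\sigma^2)$, which is negligible. Adding the discarded contribution $\varepsilon_{\mathrm{score}}^2/(3\sigma^2)$ stays below $\varepsilon_{\mathrm{score}}^2/(2\sigma^2)$. This proves \eqref{eq:appB-accuracy} for every $\sigma>0$. The oracle is a fixed function of $\theta$, so it is legal.
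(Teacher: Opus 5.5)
Your proof is correct and follows the paper's argument: the same split into active and inactive coordinates, the same geometric-series bound $\varepsilon_{\mathrm{score}}^2/(3\sigma^2)$ for the discarded corrections, and the same union bound over failures of the prefix gates. The only difference is in the last step. The paper ignores the constraint $\sigma\le t_j$, interchanges the sums, and bounds the gate term by $\varepsilon_{\mathrm{score}}^2/(10\sigma^2)$ using $ue^{-u}\le e^{-1}$. You instead use $\sigma\le t_j$ to get $v\ge 25d^2/8$, which makes that contribution exponentially negligible; either bound suffices.
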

\begin{proof}
Fix $\theta$ and $\sigma$. Call coordinate $j$ active when $\sigma\le t_j$.
In an inactive coordinate, the squared error is at most $4a_j^2/\sigma^4$.
In an active coordinate, the error vanishes if every preceding gate is one;
otherwise its absolute value is at most $2a_j/\sigma^2$.
\Cref{lem:appB-geometry} and a union bound therefore give
\begin{equation}\label{eq:appB-error-split}
 \mathbb E\|\widehat s_\sigma^\theta(X)-s_\sigma^\theta(X)\|_2^2
 \le
 \frac4{\sigma^4}\sum_{j:\,t_j<\sigma}a_j^2
 +\frac8{\sigma^4}\sum_{j:\,\sigma\le t_j}a_j^2
                    \sum_{i<j}e^{-50h_i^2/\sigma^2}.
\end{equation}

For the first term, if an inactive coordinate exists, let $j_0$ be the first
one. Since $t_{j_0}=4a_{j_0}/\niScoreError<\sigma$, geometric summation yields
\begin{equation}\label{eq:appB-inactive-error}
 \frac4{\sigma^4}\sum_{j\ge j_0}a_j^2
 \le\frac{4a_{j_0}^2}{(1-q^{-2})\sigma^4}
 \le\frac{\niScoreError^2}{4(1-q^{-2})\sigma^2}
 \le\frac{\niScoreError^2}{3\sigma^2}.
\end{equation}
If all coordinates are active, this term is zero.

For the gate term, interchange the sums and extend the sum over $j$:
\[
 \frac8{\sigma^4}\sum_{j:\,\sigma\le t_j}a_j^2
                    \sum_{i<j}e^{-50h_i^2/\sigma^2}
 \le\frac8{(1-q^{-2})\sigma^4}
       \sum_{i<d}a_{i+1}^2e^{-50h_i^2/\sigma^2}.
\]
For every $u>0$, $ue^{-u}\le e^{-1}$. Applying this with
$u=50h_i^2/\sigma^2$ bounds the last display by
\begin{align*}
 \frac{8}{50e(1-q^{-2})\sigma^2}
       \sum_{i<d}\left(\frac{a_{i+1}}{h_i}\right)^2
 &=\frac{4(d-1)\niScoreError^2}
         {25e(1-q^{-2})d^2\sigma^2}
 \le\frac{\niScoreError^2}{10\sigma^2}.
\end{align*}
Here $a_{i+1}/h_i=100M/q=\niScoreError/d$;
the final bound follows from $q\ge2$ and $(d-1)/d^2\le1/4$.
Together with \eqref{eq:appB-inactive-error}, this proves
\eqref{eq:appB-accuracy}. The calculation retains the dependence on $\sigma$
and hence applies also as $\sigma$ tends to zero.
\end{proof}

\subsection{Proof of \Cref{lem:box-simulation}}\label{app:box-simulation}
The simulation exposes at most one additional label in each round.
For $r\in[d]$, define the truncated oracle
\begin{equation}\label{eq:appB-truncated}
 \widehat s_{j,\sigma}^{\theta,[r]}(x)
 =\begin{cases}
   \widehat s_{j,\sigma}^{\theta}(x),&j\le r,\\
   -x_j/\sigma^2,&j>r.
  \end{cases}
\end{equation}
It depends only on $\theta_1,\ldots,\theta_r$ and need not itself be legal.
The progressive interaction answers round $r$ with
$\widehat s^{\theta,[r]}$.

\begin{lemma}[Complete version of \Cref{lem:box-simulation}]\label{lem:appB-simulation}
Let a randomized algorithm make at most $P$ queries per round for $R<d$
rounds. Couple its actual interaction with the progressive interaction using
the same instance and random seed. Their transcripts, and hence their outputs
$Y,\widetilde Y$, disagree with probability at most $RP/M$.
The progressive transcript and output depend only on
$(\theta_1,\ldots,\theta_R)$ and the seed.
The probability averages over the independent labels and seed.
\end{lemma}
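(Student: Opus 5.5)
The plan is to apply the general progressive coupling of \Cref{ni:progressive} with $\Theta_j=\theta_j$ (independent, uniform on $[M]$), $n=d$, full oracle $O_\Theta=\widehat s^\theta$ from \eqref{eq:appB-oracle}, and truncated oracles $O^{[r]}_\Theta=\widehat s^{\theta,[r]}$ from \eqref{eq:appB-truncated}, with $\delta_{\mathrm{batch}}=P/M$. Hypothesis (1) of that lemma is immediate: in $\widehat s^{\theta,[r]}$, coordinates $j\le r$ involve only gates $g^{\theta_i}_{i,\sigma}$ with $i<j\le r$ and the posterior mean $m^{\theta_j}_{j,\sigma}$, while coordinates $j>r$ return the public $-x_j/\sigma^2$. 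The lemma then yields disagreement probability at most $RP/M$ for transcripts and outputs, and the prefix dependence of the progressive output.

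The work is hypothesis (2), a one-batch bound. Fix $r\le R<d$, a prefix $\theta_1,\ldots,\theta_{r-1}$, a seed, and hence a batch of at most $P$ queries $(x,\sigma)$. For a single query, compare the two oracles coordinatewise. For $j\le r$ they coincide. For $j>r$ the difference is
\[
\mathbf 1_{\{\sigma\le t_j\}}\Bigl(\prod_{i<j}g^{\theta_i}_{i,\sigma}(x_i)\Bigr)\frac{m^{\theta_j}_{j,\sigma}(x_j)}{\sigma^2}.
\]
Since $m^{\theta_j}_{j,\sigma}(x_j)\ge a_j>0$, this is nonzero only if $\sigma\le t_j$ and all gates $i<j$ are positive, in particular $g^{\theta_r}_{r,\sigma}(x_r)>0$ (as $r<j$). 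Now split on $\sigma$. If $\sigma>t_{r+1}$, then since the $t_j$ decrease, $\sigma>t_j$ for every $j>r$, so all later corrections vanish and the answers agree deterministically. If $\sigma\le t_{r+1}$, then by \Cref{lem:appB-geometry}(1) the sets $\{z:g^k_{r,\sigma}(z)>0\}$ are pairwise disjoint over $k\in[M]$, so for the fixed point $x_r$ at most one label $k$ makes the gate positive. As $\theta_r$ is uniform and independent of the fixed query, $\Pr[g^{\theta_r}_{r,\sigma}(x_r)>0]\le 1/M$. Hence each query disagrees with probability at most $1/M$, and a union bound over the batch gives $P/M$. (Note $r<d$ is needed for the gate $g_r$ to exist; this holds since $r\le R<d$, and for $r=d$ the oracles would coincide anyway.)

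I expect no serious obstacle; the only delicate point is ensuring the case split covers queries at all noise levels simultaneously, which is handled because the estimate is per query with $\sigma$ fixed by the conditioning, and the gate was defined to be identically one exactly in the regime $\sigma>t_{r+1}$ where later corrections are already suppressed by the indicator. Finally, the disagreement bound for outputs follows since the same seed determines the output from the transcript.
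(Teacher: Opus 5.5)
Your proposal is correct and follows the paper's proof essentially step for step. You invoke \Cref{ni:progressive} with $\delta_{\mathrm{batch}}=P/M$, split on $\sigma>t_{r+1}$ (all later corrections inactive) versus $\sigma\le t_{r+1}$ (disjoint gate supports, so at most one value of $\theta_r$ opens the gate at $x_r$), and finish with a union bound over the batch.
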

\begin{proof}
We verify the hypotheses of \Cref{ni:progressive} for the full oracle
$\widehat{s}^{\theta}$ and the truncated oracles in \eqref{eq:appB-truncated}.
Each truncation depends only on $\theta_1,\ldots,\theta_r$.
For $r=d$, the two oracles coincide. Fix $r<d$, a prefix
$\theta_1,\ldots,\theta_{r-1}$, and the independent seed $\omega$
containing all algorithmic and output randomness. Consider any batch
of at most $P$ queries determined by this prefix and seed. The remaining
labels are independent and uniform.

For a query $(x,\sigma)$ in this batch, the oracles agree in coordinates
$j\le r$. If $\sigma>t_{r+1}$, every coordinate $j>r$ is inactive,
so the oracles agree there as well. Otherwise, any disagreement in a
coordinate $j>r$ requires
\[
g_{r,\sigma}^{\theta_r}(x_r)>0.
\]
By \Cref{lem:appB-geometry}, this can hold for at most one value of $\theta_r$,
so the conditional probability of disagreement on this query is at most
$1/M$. A union bound over the batch gives disagreement probability at most
$P/M$, even when its queries use different noise levels.

\Cref{ni:progressive} therefore applies with $n=d$, $\Theta_i=\theta_i$,
and $\delta_{\mathrm{batch}}=P/M$. It gives disagreement probability
at most $RP/M$ for the coupled transcripts and outputs, together with
the claimed dependence on the first $R$ labels and the seed.
\end{proof}


\subsection{Proof of Lemma 4.4}\label{app:box-recovery}
We now fix a target instance. The oracle in this subsection can be any fixed oracle satisfying the score-error guarantee. It need not
be the oracle constructed in \eqref{eq:appB-oracle}.

\begin{lemma}[Complete version of Lemma 4.4]\label{lem:appB-recovery}
For all sufficiently large $d$, the family with parameters \eqref{eq:appB-parameters}
admits a randomized algorithm which, for every instance $\theta$ and every
fixed legal oracle of accuracy $\niScoreError$, recovers all labels with
probability at least $9/10$ using $O(d)$ rounds and $O(\log d)$ queries
per round.
It consequently samples $\pi^\theta$ to total variation error at most $1/10$
within the same query bounds.
\end{lemma}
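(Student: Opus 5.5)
The plan is to recover the labels $\theta_1,\theta_2,\ldots,\theta_d$ one coordinate at a time, using only Tweedie's formula and the $L^2$ guarantee \eqref{eqn:L2_approx}, with Markov's inequality applied under a query distribution we can sample. Suppose $\theta_1,\ldots,\theta_{i-1}$ are already known. The key observation is that, by the product form of the true score, $x_i+\sigma^2 s^\theta_{i,\sigma}(x)=m^{\theta_i}_{i,\sigma}(x_i)\in I_i(\theta_i)$ at \emph{every} query point. So any query whose $i$-th score coordinate has error below roughly $a_i/(3M\sigma^2)$ yields a point within distance $a_i/(3M)$ of $I_i(\theta_i)$. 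Rounding to the nearest public center $c_i(k)$, whose spacing is $a_i/M=100h_i$ by \Cref{lem:appB-geometry}, then returns $\theta_i$. The algorithm finishes by outputting a fresh sample from $\operatorname{Unif}(K^{\hat\theta})$. This output has total variation at most $\Pr[\hat\theta\ne\theta]\le 1/10$ from $\pi^\theta$.

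\textbf{Query distribution.} The difficulty is that \eqref{eqn:L2_approx} only controls error under $(\pi^\theta)^{(\sigma)}$, which we cannot sample exactly. We will therefore query at a proxy law $\nu$ and compare it with the true one. Suppose the current knowledge places $U_i$ in a window $W\subseteq[a_i,2a_i]$ of length $w$, and take $\sigma=w$.
\begin{itemize}
\item In coordinates $i'<i$, draw from the known intervals plus $\sigma$-noise; this is exact.
\item In coordinate $i$, draw the center of $W$ plus $\sigma Z$. Since the true $U_i\in W$ is shifted by at most $w/2=\sigma/2$, the density ratio of the two Gaussians is bounded by an absolute constant outside an event of small Gaussian probability.
\item In coordinates $j>i$, draw a fixed point in $[a_j,2a_j]$ plus noise. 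Since $a_j\le a_i q^{-(j-i)}$ and $\sigma\ge a_i/\mathrm{poly}(d)$ by the choices in \eqref{eq:appB-parameters}, these coordinates cost total variation $O(\sum_{j>i}a_j/\sigma)=o(1/d)$.
\end{itemize}
Hence any event of probability $p$ under the true noisy law has probability $O(p)+o(1)$ under $\nu$. Markov's inequality gives
\[
\Pr\bigl[\|\widehat s_\sigma-s_\sigma\|\ge \lambda/\sigma^2\bigr]\le \varepsilon_{\mathrm{score}}^2\sigma^2/\lambda^2,
\]
so each single query produces an estimate within $\lambda$ of $I_i(\theta_i)$ with probability at least $2/3$, provided $\lambda\ge C\varepsilon_{\mathrm{score}}\sigma$.

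\textbf{Refinement and repetition.} Start with $W=[a_i,2a_i]$ and $w=a_i$. Each stage takes $\lambda=C\varepsilon_{\mathrm{score}}w$ and shrinks the window to length $O(\varepsilon_{\mathrm{score}}w)$ around the estimate. Since $\varepsilon_{\mathrm{score}}=\mathrm{poly}(d)^{-1}$ and $M=\mathrm{poly}(d)$, after $O(\log M/\log(1/\varepsilon_{\mathrm{score}}))=O(1)$ stages we have $\lambda<a_i/(3M)$, and the label is identified. Within a stage we issue $O(\log d)$ independent proxy queries in one parallel round and take the median of the estimates $x_i+\sigma^2\widehat s_{i,\sigma}(x)$. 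By a Chernoff bound, the median is correct except with probability $d^{-2}$. A union bound over $O(d)$ stages in total then gives all labels with probability at least $9/10$, using $O(d)$ rounds and $O(\log d)$ queries per round.

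The main obstacle is the proxy-law comparison. It must hold uniformly in which window we are in, and it must handle the unbounded Gaussian ratio tails, where I would truncate at $|Z|\le\sqrt{\log d}$. It must also show that the later-coordinate mismatch is negligible at the smallest $\sigma$ used, which is about $a_i/M$. This last point is exactly where the choice $q=100Md/\varepsilon_{\mathrm{score}}$ is needed, and I would check it first.
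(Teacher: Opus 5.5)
Your proposal is correct and follows essentially the paper's proof. The shared steps are coordinate-by-coordinate recovery via the Tweedie identity $x_i+\sigma^2 s^\theta_{i,\sigma}(x)\in I_i(\theta_i)$, a query law that is exact on the recovered prefix and Gaussian around the current estimate in coordinate $i$, Markov's inequality transferred through a change of measure, a median of $O(\log d)$ independent queries per round, $O(1)$ refinements per coordinate, and a final fresh uniform sample from the decoded box.

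The one difference is the change-of-measure tool. The paper bounds $\int f^2/p\le\mathbb E\exp(\|U-v\|_2^2/\sigma^2)$ jointly over coordinate $i$ and the suffix, centering the suffix at $0$ at a cost of only $h_i^2$ because $q\ge 400M$, and then applies Cauchy--Schwarz. Your truncated-ratio-plus-TV comparison instead gives $\nu(A)\le e^{T/2+1/8}\,(\pi^\theta)^{(\sigma)}(A)+2e^{-T^2/2}+o(1)$ for a truncation level $T$. That is not literally your claimed $O(p)+o(1)$ with an absolute constant, and truncating at $\sqrt{\log d}$ would make the ratio bound superconstant. Taking $T$ to be a suitable constant is the simplest fix, and it suffices because each query only needs a constant success probability.
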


It suffices to recover one interval at a time. Once all labels are correct,
we can draw a fresh exact uniform sample from their Cartesian product.
The two facts below show how a batch of approximate-score queries refines
the center estimate for the current interval.

\begin{lemma}[Gaussian change of measure]\label{lem:appB-gaussian-change}
Let $U$ be a random vector in $\mathbb R^n$, let $\sigma>0$, and let $p$
be the density of $U+\sigma Z$ for independent $Z\sim N(0,I_n)$.
For $v\in\mathbb R^n$, let $f$ be the density of $N(v,\sigma^2 I_n)$.
Then
\begin{equation}\label{eq:appB-change-measure}
 \int_{\mathbb R^n}\frac{f(x)^2}{p(x)}\,dx
 \le\mathbb E\exp\left(\frac{\|U-v\|_2^2}{\sigma^2}\right).
\end{equation}
\end{lemma}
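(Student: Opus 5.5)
The plan is to write $p$ as a mixture of Gaussian densities and apply joint convexity of the map $(f,g)\mapsto f^2/g$. Let $\nu$ be the law of $U$ and $\varphi_\sigma$ the density of $N(0,\sigma^2 I_n)$. Then
\[
p(x)=\mathbb E\,\varphi_\sigma(x-U),\qquad f(x)=\varphi_\sigma(x-v).
\]
The map $(a,b)\mapsto a^2/b$ on $\mathbb R\times(0,\infty)$ is jointly convex, being the perspective of $a\mapsto a^2$. Write $f(x)=\mathbb E\,f(x)$ trivially, pairing the constant $f(x)$ with the random $\varphi_\sigma(x-U)$. Jensen's inequality then gives, pointwise in $x$,
\[
\frac{f(x)^2}{p(x)}\le \mathbb E\,\frac{\varphi_\sigma(x-v)^2}{\varphi_\sigma(x-U)}.
\]
Equivalently, this is the Cauchy--Schwarz bound $f^2=\bigl(\mathbb E\,[\varphi_\sigma(x-U)\cdot f/\varphi_\sigma(x-U)]\bigr)^2\le p\cdot\mathbb E\,[f^2/\varphi_\sigma(x-U)]$. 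This is the only step with any content; the rest is a Gaussian computation.

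Next, integrate over $x$ and use Tonelli, since the integrand is nonnegative, to exchange $\int$ and $\mathbb E$. It then suffices to compute, for a fixed $u$,
\[
\int \frac{\varphi_\sigma(x-v)^2}{\varphi_\sigma(x-u)}\,dx .
\]
The exponent is
\[
-\frac{2\|x-v\|^2-\|x-u\|^2}{2\sigma^2}.
\]
Expanding,
\[
2\|x-v\|^2-\|x-u\|^2=\|x-(2v-u)\|^2-\|u-v\|^2\cdot 2+\dots
\]
so one should complete the square carefully. The $x$-quadratic coefficient is $1$. The linear term is $-2x\cdot(2v-u)$, and the constant is $2\|v\|^2-\|u\|^2$. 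Hence
\[
2\|x-v\|^2-\|x-u\|^2=\|x-(2v-u)\|^2+2\|v\|^2-\|u\|^2-\|2v-u\|^2=\|x-(2v-u)\|^2-2\|u-v\|^2 .
\]
The normalizing constants combine to $(2\pi\sigma^2)^{-n/2}$, which is exactly the normalization of a Gaussian density in $x$. The integral therefore equals $\exp(\|u-v\|^2/\sigma^2)$.

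Substituting $u=U$ and taking expectations yields \eqref{eq:appB-change-measure}, with equality in the Gaussian step and the only loss coming from Jensen. No integrability assumptions are needed. If the right side is infinite the bound is trivial. The value $p(x)>0$ holds everywhere, since $p$ is a Gaussian convolution. The main point to get right is the completion of the square, in particular the factor $2$ in front of $\|u-v\|^2$, which is what produces the exponent $\|U-v\|^2/\sigma^2$ rather than a half of it.
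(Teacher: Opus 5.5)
Your proof is correct and is essentially the paper's argument. The paper applies convexity of $z\mapsto 1/z$ pointwise to $p(x)=\mathbb E\,\varphi_\sigma(x-U)$, which is exactly your joint-convexity (or Cauchy--Schwarz) step with $f(x)$ held fixed, and then uses Tonelli and completes the square to obtain $\exp(\|u-v\|_2^2/\sigma^2)$. Your final identity $2\|x-v\|_2^2-\|x-u\|_2^2=\|x-(2v-u)\|_2^2-2\|u-v\|_2^2$ is right, but delete the earlier tentative line ending in ``$+\dots$'' from any write-up.
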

\begin{proof}
Write $\phi_\sigma$ for the centered Gaussian density, so
$p(x)=\mathbb E_U\phi_\sigma(x-U)$ and $f(x)=\phi_\sigma(x-v)$.
Convexity of $z\mapsto1/z$, followed by Tonelli's theorem, gives
\[
 \int\frac{f(x)^2}{p(x)}\,dx
 \le\mathbb E_U\int
       \frac{\phi_\sigma(x-v)^2}{\phi_\sigma(x-U)}\,dx
 =\mathbb E_U\exp\left(\frac{\|U-v\|_2^2}{\sigma^2}\right).
\]
The equality follows by completing the square.
\end{proof}

\begin{samepage}
\begin{lemma}[One refinement]\label{lem:appB-refinement}
Suppose the labels $\theta_1,\ldots,\theta_{i-1}$ have been recovered,
and let $b\in\mathbb R$ and $r\ge h_i$ satisfy
$|b-c_i(\theta_i)|\le r$. Set $\sigma=4r$.
For each of an odd number $N$ of independent queries, draw
\begin{equation}\label{eq:appB-query-proposal}
 X_j=\begin{cases}
 U_j+\sigma Z_j,
   \quad U_j\sim\operatorname{Unif}(I_j(\theta_j)),&j<i,\\
 b+\sigma Z_j,&j=i,\\
 \sigma Z_j,&j>i,
 \end{cases}
\end{equation}
with all draws independent and every $Z_j\sim N(0,1)$.
Query the oracle at $(X,\sigma)$, form
$V=X_i+\sigma^2\widehat s_{i,\sigma}(X)$, and let $b'$ be the median of
the $N$ values of $V$. Then
\begin{equation}\label{eq:appB-refinement}
 \Pr\bigl[|b'-c_i(\theta_i)|>h_i+64\niScoreError r\bigr]
 \le2^{-N/2}.
\end{equation}
\end{lemma}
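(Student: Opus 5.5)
The plan is to show that each single query produces a value $V$ with $\Pr[|V-c_i(\theta_i)|>h_i+64\varepsilon_{\mathrm{score}} r]\le 1/4$. A Chernoff bound for the median of $N$ independent copies then gives \eqref{eq:appB-refinement}: the median fails only if at least $N/2$ values fail, which has probability at most $\binom{N}{\lceil N/2\rceil}4^{-N/2}\le 2^N4^{-N/2}=2^{-N/2}$. The key identity is Tweedie's formula in coordinate $i$. For the true score, $X_i+\sigma^2 s^\theta_{i,\sigma}(X)=m^{\theta_i}_{i,\sigma}(X_i)\in I_i(\theta_i)$, so this quantity lies within $h_i$ of $c_i(\theta_i)$. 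The per-query error is then $|V-m^{\theta_i}_{i,\sigma}(X_i)|=\sigma^2|\widehat s_{i,\sigma}(X)-s^\theta_{i,\sigma}(X)|$, and it suffices to show this is at most $64\varepsilon_{\mathrm{score}}r$ with probability at least $3/4$ under the proposal law $f$ of $X$ in \eqref{eq:appB-query-proposal}.

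The oracle guarantee controls the error only under $p=(\pi^\theta)^{(\sigma)}$, not under $f$. I will transfer it by Cauchy--Schwarz:
\[
 \mathbb E_f|e|\le\Bigl(\int\frac{f^2}{p}\Bigr)^{1/2}\bigl(\mathbb E_p|e|^2\bigr)^{1/2}\le\Bigl(\int\frac{f^2}{p}\Bigr)^{1/2}\frac{\varepsilon_{\mathrm{score}}}{\sigma},
\]
where $e=\widehat s_{i,\sigma}-s^\theta_{i,\sigma}$. Both $f$ and $p$ are products across coordinates. For $j<i$ the factors coincide, because the proposal uses the true recovered intervals, so the $\chi^2$-type integral factorizes over the coordinates $j\ge i$ only. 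Lemma~\ref{lem:appB-gaussian-change} then bounds each such factor by $\mathbb E\exp(|U_j-v_j|^2/\sigma^2)$, with $v_i=b$ and $v_j=0$ for $j>i$. In coordinate $i$, $|U_i-b|\le r+h_i\le 2r$ and $\sigma=4r$, so the factor is at most $e^{1/4}$. For $j>i$, $|U_j|\le 2a_j$ and $r\ge h_i=a_i/(100M)$, so
\[
 \frac{|U_j|^2}{\sigma^2}\le\frac{4a_j^2(100M)^2}{16a_i^2}=O\bigl(M^2q^{-2(j-i)}\bigr).
\]
Since $q=100Md/\varepsilon_{\mathrm{score}}\gg M$, this geometric sum is $o(1)$, and the product over $j>i$ is at most $e^{o(1)}$.

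Combining these bounds gives $\mathbb E_f|\sigma^2 e|\le e^{1/4+o(1)}\varepsilon_{\mathrm{score}}\sigma\le 5.2\,\varepsilon_{\mathrm{score}}r$. Markov's inequality then bounds the probability that the error exceeds $64\varepsilon_{\mathrm{score}}r$ by $5.2/64<1/4$. Since $V$ lies within $h_i$ plus this error of $c_i(\theta_i)$, the per-query claim follows, and the median bound finishes the proof.

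The main obstacle is the change-of-measure step. I must check that the product structure makes the earlier coordinates cancel exactly rather than merely approximately, and that the later-coordinate factors stay bounded given the scale separation $q$ versus $M$. The constants $1/4$ and $64$ leave enough slack for this.
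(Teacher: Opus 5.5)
Your route is the paper's. You transfer the $L^2$ guarantee from $p=(\pi^\theta)^{(\sigma)}$ to the query law $f$ via Cauchy--Schwarz and \Cref{lem:appB-gaussian-change}, apply Markov to the coordinate-$i$ error, and use that the exact posterior mean lies in $I_i(\theta_i)$. You then finish with a union bound over majority subsets. One small difference: you factor $\int f^2/p$ over coordinates $j\ge i$ and bound each factor separately. The paper instead bounds the whole suffix at once via $\sum_{j>i}U_j^2\le h_i^2$ and $\|U-v\|_2^2\le 5r^2$, getting $e^{5/16}$. Both are valid because $f$ and $p$ are products. Also, Cauchy--Schwarz gives the square root of that integral, so your factor is really $e^{1/8+o(1)}$; using $e^{1/4+o(1)}$ is only loose, not wrong.

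There is one actual error, in the median step. You claim $2^N4^{-N/2}=2^{-N/2}$, but $2^N4^{-N/2}=1$, so a per-query failure probability of $1/4$ gives nothing. This is not just slack in the union bound. If each query fails with probability exactly $1/4$, a majority of the $N$ independent values fails with probability at least of order $(3/4)^{N/2}/\sqrt N$. That exceeds $2^{-N/2}$ for large $N$, so no argument can derive \eqref{eq:appB-refinement} from the $1/4$ target. The repair is already in your numbers: your Markov bound $5.2/64\approx 0.081$ is below $1/8$, and then
\[
\binom{N}{(N+1)/2}\Bigl(\tfrac18\Bigr)^{(N+1)/2}\le 2^N\,8^{-N/2}=2^{-N/2}.
\]
This is exactly how the paper concludes; it aims for per-query failure below $1/8$ from the start.
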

\end{samepage}
\begin{proof}
For sufficiently large $d$, $q\ge400M$. Thus every $U\in K^\theta$ satisfies
\begin{equation}\label{eq:appB-tail-size}
 \sum_{j>i}U_j^2
 \le4\sum_{j>i}a_j^2
 \le\frac{4a_i^2}{q^2-1}\le h_i^2.
\end{equation}
Let $f$ be the query density in \eqref{eq:appB-query-proposal}, and let
$p$ be the density of the true noisy target $(\pi^\theta)^{(\sigma)}$.
Their prefix marginals agree exactly. On the suffix, the query distribution
is the Gaussian centered at
$v=(b,0,\ldots,0)$, while the true distribution is the Gaussian smoothing
of the unknown intervals. Every suffix point $U$ obeys
\[
 \|U-v\|_2^2
 \le(r+h_i)^2+h_i^2\le5r^2.
\]
The matching prefix factors cancel, so \Cref{lem:appB-gaussian-change}
with $\sigma=4r$ gives
\begin{equation}\label{eq:appB-proposal-moment}
 \int\frac{f(x)^2}{p(x)}\,dx\le e^{5/16}.
\end{equation}

Cauchy--Schwarz, legality, and \eqref{eq:appB-proposal-moment} give
\[
 \mathbb E_{X\sim f}
   |\widehat s_{i,\sigma}(X)-s_{i,\sigma}^\theta(X)|
 \le e^{5/32}\frac{\niScoreError}{\sigma}.
\]
By Markov's inequality, the score error exceeds $16\niScoreError/\sigma$
with query probability at most $e^{5/32}/16<1/8$.
At every query point, the exact posterior mean
$x_i+\sigma^2s_{i,\sigma}^\theta(x)$ belongs to $I_i(\theta_i)$.
Since $\sigma=4r$, it follows that
\[
 \Pr\bigl[|V-c_i(\theta_i)|>h_i+64\niScoreError r\bigr]<1/8.
\]
Because the oracle is fixed, the independently generated queries give
independent values of $V$. If their median is outside the displayed interval,
at least $(N+1)/2$ of them are outside. A union bound over subsets gives
probability at most
$2^N(1/8)^{N/2}=2^{-N/2}$, proving \eqref{eq:appB-refinement}.
\end{proof}

\begin{proof}[Proof of \Cref{lem:appB-recovery}]
For sufficiently large $d$, $64\niScoreError\le1/4$. Choose an integer
$L=O(1)$ and an odd integer $N=O(\log d)$ such that
\begin{equation}\label{eq:appB-recovery-budget}
 (64\niScoreError)^L100M\le1,
 \qquad N\ge2\log_2(10dL).
\end{equation}
Such choices exist because $M$ is polynomial and $\niScoreError$ is
inverse-polynomial in $d$.
Process coordinates $i=1,\ldots,d$ in order. Initialize
$b_0=3a_i/2$ and $r_0=a_i$. In refinement $\ell\in[L]$, apply
\Cref{lem:appB-refinement} with $b=b_{\ell-1}$ and $r=r_{\ell-1}$,
and set the deterministic radius
\begin{equation}\label{eq:appB-radius}
 r_\ell=h_i+64\niScoreError r_{\ell-1}.
\end{equation}
After these $L$ refinements, decode the label whose public center is nearest
to $b_L$, breaking any ties by the smallest label.
This specifies an algorithm on every possible history.

The initial estimate satisfies $|b_0-c_i(\theta_i)|\le a_i/2\le r_0$,
and every prescribed radius is at least $h_i$. If all refinements so far
succeed, their error guarantees give $|b_\ell-c_i(\theta_i)|\le r_\ell$.
Moreover, \eqref{eq:appB-recovery-budget} implies
$(64\niScoreError)^L a_i\le a_i/(100M)=h_i$, so
\[
 r_L=(64\niScoreError)^L a_i+h_i\sum_{j=0}^{L-1}(64\niScoreError)^j
 \le h_i+\frac{h_i}{1-64\niScoreError} <3h_i.
\]
Since distinct candidate centers are $100h_i$ apart,
successful refinements identify the correct label.

Fix the target and oracle. Conditional on any history before the first
failed refinement, all previously decoded labels are correct, the current
center estimate lies within its prescribed radius, and the next batch uses
fresh independent randomness. \Cref{lem:appB-refinement} applies uniformly
to every such history. The probability of at least one failed refinement is
therefore at most
\[
 dL\,2^{-N/2}\le1/10.
\]
On the complementary event every label is decoded correctly.

Finally, use fresh independent randomness to sample uniformly from the
decoded box. Couple this output to a uniform sample from the true box using
the same final uniform coordinates. They agree whenever all decoded labels
are correct, so their total variation distance is at most $1/10$.
Each refinement uses one round and $N=O(\log d)$ queries, for a total
of $dL=O(d)$ rounds.
\end{proof}